\DeclareMathAlphabet{\mathpzc}{OT1}{pzc}{m}{it}
\newcommand{\md}{\mathrm{d}}
\newcommand{\tsn}[2]{{#2}^{#1}}
\newcommand{\idn}{\mathfrak{i}}
\newtheorem{theorem}{Theorem}
\numberwithin{theorem}{section}
\newtheorem{lemma}[theorem]{Lemma}
\newtheorem{proposition}[theorem]{Proposition}
\newtheorem{corollary}[theorem]{Corollary}
\numberwithin{equation}{section}
\definecolor{orange}{rgb}{1,0.5,0}
\definecolor{rb}{rgb}{1,0,1}
\begin{document}

\title{\textbf{Symmetry-consistent expansion of interaction kernels between rigid molecules}\footnote{This work is partially supported by NSFC No. 11688101, NCMIS, AMSS grants for Outstanding Youth Project, and ICMSEC dirctor funds.}}
\author{Jie Xu\footnote{LSEC \& NCMIS, Institute of Computational Mathematics and Scientific/Engineering Computing (ICMSEC), Academy of Mathematics and Systems Science (AMSS), Chinese Academy of Sciences, Beijing 100190, China. Email: xujie@lsec.cc.ac.cn}}
\date{}
\maketitle
\begin{abstract}
  We discuss the expansion of interaction kernels between anisotropic rigid molecules. The expansion decouples the correlated orientational variables so that it can be utilized to derive macroscopic models. 
  Symmetries of two types are considered.
  First, we examine the symmetry of the interacting cluster, including the translation and rotation of the whole cluster, and label permutation within the cluster.
  The expansion is expressed by symmetric traceless tensors, and the linearly independent terms are identified. 
  Then, we study the molecular symmetry characterized by a point group in $O(3)$. 
  The proper rotations determine what symmetric traceless tensors can appear. 
  The improper rotations decompose these tensors into two subspaces and determine how the tensors in the two subspaces are coupled. 
  For each point group, we identify the two subspaces, so that the expansion consistent with the point group is established. 
\end{abstract}





\section{Introduction}


In a system consisting of many rigid molecules, the interactions between the molecules depend not only on the relative position, but also on the relative orientation. 
Such interactions can lead to nonuniform orientational distribution. 
As a result, even in an infinitesimal volume, local anisotropy can be formed and further correlated spatially, which is the typical mechanism for liquid crystals. 
An example that many people are familiar with is the (uniaxial) nematic phase formed by rod-like molecules, where no positional order is observed but an optical axis can be identified. 
If layer structure further arises, the smectic phases could appear. 
The concept of liquid crystals has been expanded to a great extent 
since rigid molecules of other shapes, such as bent-core molecules, have proved to possess richer phase behavior experimentally \cite{JJAP,jakli2018physics}. 

In mathematical theory, to identify liquid crystalline phases, one needs to construct free energy about some order parameters describing the local anisotropy.
A simple approach is to construct phenomenological models, typically a polynomial of the order parameters and their derivatives. 
For rod-like molecules, the order parameter can be chosen as a second order symmetric traceless tensor, based on which the Landau-de Gennes theory is built 
and has been successfully applied to both stationary and dynamic problems \cite{GennesPierreGillesde1993Tpol,Beris_book,qian1998generalized}. 
When discussing other types of liquid crystalline phases, including polar, biaxial or tetrahedral order, people also attempted to construct phenomenological models with different tensor order parameters \cite{fel1995tetrahedral,0295-5075-54-2-206,de2008landau,trojanowski2012tetrahedratic,shamid2014predicting,gaeta2016octupolar}. 

Despite the success of phenomenological theories, they still do not touch an essential problem: to understand the connection between the molecular architecture and macroscopic phenomena. 
%
To accompolish this goal, it is desirable to build macroscopic theory from molecular interactions.
Molecular interactions are characterized by kernel functions of several molecules, in which the variables representing the positions of these molecules are correlated. 
To derive a macroscopic theory, it is necessary to decouple these variables, which is attained by expanding the kernel functions. 
Such an approach dates back to the derivation of the equations of state for gases, where a homogeneous system consisting of spherical molecules is considered \cite{Mayer_book,Landau_Lifshitz_Statistical}. 
Inhomogeneous systems, without considering the anisotropy of the molecule, have also been discussed, leading to theories for modulated phases that can be used to describe various materials such as amphiphilic systems and block copolymers \cite{LB,ohta1986equilibrium,fredrickson1987fluctuation}.

When non-spherical rigid molecules are put into considertaion, extra variables are introduced for the orientation of the molecule. 
Most theories developed from molecular interactions focus on the orientational variables only and are built for spatially homogeneous systems.
In this case, the kernel functions are independent of spatial variables, and the expansion decouples the orientational variables. 
Theories of this kind possibly start from 
Maier--Saupe \cite{M_S} for rod-like molecules. 
Other rigid molecules, including cuboid, bent-core, triangle and cross-like \cite{Bi1,PhysRevE.58.5873,doi:10.1063/1.1649733,SymmO,xu2017transmission}, have also been discussed. 

Recently, the expansion has been extended to spatially inhomogeneous cases, so that both spatial inhomogeneity and orientational anisotropy are included.
This approach combines the techniques for spatially inhomogeneous systems of spherical molecules and for spatially homogeneous systems of non-spherical molecules. 
It was first proposed for rod-like molecules \cite{RodModel}, for which a tensor model was established for both nematic and smectic phases. 
Later, it has been successfully applied to bent-core molecules \cite{BentModel}, resulting in a tensor model for modulated nematic phases. 

Symmetry is a central topic when discussing the expansion of interaction kernels between non-spherical rigid molecules, which can be expressed by a few arguments for interaction kernels. 
Some of the arguments are spontaneous, while the others originate from molecular symmetry. 
These arguments determine what terms will appear in the expansion, 
and each term in the expansion leads to a term expressed by tensors in the free energy. 
Thus, together with suitable truncation, the molecular symmetry determines the form of the free energy, as well as the order parameters that are just the tensors appearing in the free energy. 
%
%
When the interaction kernels are specified, the coefficients can be calculated from the kernels \cite{Bi1,PhysRevE.58.5873,SymmO,RodModel,BentModel}. 
Molecules with the same symmetry, 
such as bent-core and star-shaped molecules \cite{BentModel,doi:10.1080/02678292.2017.1290285,xu2018onsager-theory-based}, can be distinguished in this way. 


The works mentioned above indicate that the expansion of interaction kernels is the core of deriving macroscopic models from molecular interactions. 
These works, however, only discuss particular molecular symmetries, not covering other rigid molecules exhibiting interesting phenomena \cite{JJAP,kim2011tetrahedratic,Zhao29092015}. 
In this work, we derive the expansion of interaction kernels for \emph{all} molecular symmetries. 
The task is carried out in two steps. 
\begin{itemize}
\item Write down the general form of expansion without considering molecular symmetry. In this case, the interaction kernels still possess some spantaneous symmetry arguments. We shall figure out the role of these arguments playing on the expansion. 
\item With certain molecular symmetry, some terms in the general expansion will vanish. We shall identify the nonvanishing terms for all molecular symmetries. 
\end{itemize}
Moreover, we will discuss the interaction kernels involving clusters of multiple molecules, with explicit expressions written down for clusters of up to four molecules. 
In most previous works, only pairwise (two-molecule clusters) interaction is taken into account. 


To deal with anisotropy at molecular level, we need to introduce the orientational variables. 
For this purpose, some notations and results about $SO(3)$ are presented in Section \ref{notations}. 
Then, in Section \ref{expansion_general}, we study the expansion of interaction kernels in the general case. 
Whatever the molecular potential is, the interaction kernel for a cluster is invariant when the whole cluster is displaced and rotated as a whole, and the labels of molecules within the cluster are permuted. 
To reveal the effect of these arguments, one crucial point is that we express the expansion by symmetric traceless tensors that have been discussed in \cite{xu_tensors}. 
With symmetric traceless tensors, it is easier to identify the linearly independent terms. 
The orthogonality of many terms can also be recognized, so that approximation results can be established. 

The use of symmetric traceless tensors also makes it clear how the molecular symmetry plays its role, 
which we analyze in Section \ref{expansion_sym}. 
The molecular symmetry is described by orthogonal transformations leaving the molecule invariant, which form a point group in $O(3)$.
A point group consists of proper rotations and possibly improper rotations, whose roles are different. 
The proper rotations constitute a subgroup in $SO(3)$. 
This subgroup determines that only the invariant tensors of this group can appear in the expansion. 
The invariant tensors have been written down explicitly in \cite{xu_tensors} for each point group in $SO(3)$. 
Then, the improper rotations decompose the invariant tensors into two orthogonal subspaces, and impose conditions on the coupling of tensors between these two subspaces. 
For each point group, we will write down explicitly the subspace decomposition. 
If the two point groups have the same proper rotations, the invariant tensors are identical. 
However, since the decomposition by improper rotations is distinct, the surviving terms in the free energy would be different. 

In this way, we write down the expansion of the interaction kernel for each point group. 
For the expansion in the general case, the list of all the linearly independent terms is provided; the proper rotations select tensors that appear in these terms; the improper rotations set the rule of coupling. 
This procedure clearly reflects how the molecular symmetry selects terms in the expansion, 
which is summarized in Section \ref{summary}. 

The current work can serve as a useful handbook for studying the liquid crystalline phases formed by any rigid molecule. 
When studying a particular rigid molecule, one could look up the list and choose the terms needed, no matter for molecular-based theories or for Landau theories. 
If one would like to evaluate the coefficients from the kernel, the orthogonality will also help. 
%
To actually write down a free energy, one needs to truncate at certain order. 
The truncation criteria might be influenced by stability, symmetry of the phase, degrees of freedom of macroscopic parameters \cite{mettout2006macroscopic,BentModel,xu_tensors}. 
From the complete list of terms we provide, 
one could choose the terms based on the need for particular systems. 






\section{Preliminaries\label{notations}}

We consider the system consisting of many identical rigid molecules that are generally anisotropic, so that the orientation of each molecule affects the state of the whole system. 
To describe the orientation, we mount a right-handed orthonormal frame $(\hat{O}; \bm{m}_1,\bm{m}_2,\bm{m}_3)$ on the molecule. 
The position of $\hat{O}$ is denoted by $\bm{x}$, and the orientation of the frame is denoted by $\mathfrak{p}$.
In this way, $(\bm{x},\mathfrak{p})$ represents the position and the orientation of the molecule. 
The frame $\mathfrak{p}$ is an element in $SO(3)$, which can be expressed by an orthogonal matrix, also denoted by $\mathfrak{p}$, with $\mathrm{det}\mathfrak{p}=1$. 
The components of $\mathfrak{p}$ are the coordinates of the axes $\bm{m}_i$: if we denote by $(O;\bm{e}_1,\bm{e}_2,\bm{e}_3)$ the reference frame in $\mathbb{R}^3$, then the $(i,j)$ element of $\mathfrak{p}$ is given by $\bm{e}_i\cdot\bm{m}_j$. 
We can also view $\bm{m}_j$ as functions of $\mathfrak{p}$, and use the notation $\bm{m}_j(\mathfrak{p})$ to represent the axis $\bm{m}_j$ of certain $\mathfrak{p}$.
The uniform probability measure on $SO(3)$ is denoted by $\md\mathfrak{p}$.

The operations on tensors will appear throughout the paper, so let us introduce some notations for tensors.
A $k$-th order tensor $U$ can be expressed by the basis in $\mathbb{R}^3$ as follows, 
\begin{equation}
  U=U_{j_1\ldots j_k}\bm{e}_{j_1}\otimes\ldots\otimes\bm{e}_{j_k}. 
\end{equation}
Hereafter, we adopt the Einstein convention on summation over repeated indices. 
For two tensors $U_1$ and $U_2$, we use $U_1\otimes U_2$ to represent their tensor product, where the $U_1$ components come first. 
If necessary, we write a tensor $\tsn{k}{U}$ with superscript to indicate its order. 
The dot product of two tensors with the same order is defined by
\begin{equation}
  U\cdot V= U_{j_1\ldots j_k}V_{j_1\ldots j_k}. 
\end{equation}
The Frobenius norm is then given by $\|U\|_F^2=U\cdot U$. 

Next, we define the rotation $\mathfrak{p}$ acting on a tensor. By expanding the tensor about the basis $\bm{e}_{j_1}\otimes\ldots\otimes\bm{e}_{j_k}$, the rotation is done by transforming $\bm{e}_i$ into $\bm{m}_i$, giving 
\begin{align}
  \mathfrak{p}\circ U=&U_{j_1\ldots j_k}\bm{m}_{j_1}\otimes\ldots\otimes\bm{m}_{j_k}
  .\label{rot0}
\end{align}
Since $\bm{m}_i$ can be viewed as functions of $\mathfrak{p}$, we regard $\mathfrak{p}\circ U$ as a function of $\mathfrak{p}$ and denote it as $U(\mathfrak{p})$.
We have $U(\mathfrak{p}_1\mathfrak{p}_2)=\mathfrak{p}_1\circ U(\mathfrak{p}_2)$, and
\begin{equation}
  U_1(\mathfrak{sp}_1)\cdot U_2(\mathfrak{sp}_2)=U_1(\mathfrak{p}_1)\cdot U_2(\mathfrak{p}_2), \quad\forall \mathfrak{s}\in SO(3). \label{innrot}
\end{equation}

We then introduce the notations for symmetric tensors.
For a $k$-th order tensor $U$, define its symmetrization as 
\begin{equation}
  U_{\mathrm{sym}}=\frac{1}{k!}\sum_{1\le j_1,\ldots,j_k\le 3}\sum_{\sigma}U_{j_{\sigma(1)}\ldots j_{\sigma(k)}}\bm{e}_{j_1}\otimes\ldots\otimes\bm{e}_{j_k}, \label{Usym}
\end{equation}
where the summation inside is taken over all permutations $\sigma$ of $(1,\ldots,k)$.
For any symmetric tensor $U$, its trace is defined by contracting two of the components, resulting in a $(k-2)$-th order tensor, 
\begin{align*}
  (\mathrm{tr}U)_{j_1\ldots j_{k-2}}=U_{j_1\ldots j_{k-2}ii}. 
\end{align*}
If a symmetric tensor $U$ satisfies $\mathrm{tr}U=0$, it is called a symmetric traceless tensor. 
The symmetric and traceless properties are kept under rotations. 
To express symmetric tensors, we introduce the monomial notation below, 
\begin{equation}
\bm{m}_1^{k_1}\bm{m}_2^{k_2}\bm{m}_3^{k_3}
=(\underbrace{\bm{m}_1\otimes\ldots}_{k_1}\otimes
\underbrace{\bm{m}_2\otimes\ldots}_{k_2}\otimes
\underbrace{\bm{m}_3\otimes\ldots}_{k_3}
)_{\mathrm{sym}}. \label{tensor_monomial}
\end{equation}
It is easy to see that for $k_1+k_2+k_3=k$, the tensors $\bm{m}_1^{k_1}\bm{m}_2^{k_2}\bm{m}_3^{k_3}$ give an orthogonal basis of $k$-th order symmetric tensors.
In this way, a polynomial about $\bm{m}_i$ can be regarded as a symmetric tensor, if every term in the polynomial has the same order.

We also use Kronecker delta and Levi-Civita symbol, which are given by 
$$
\delta_{ij}=\left\{
\begin{array}{ll}
  1, &i=j, \\
  0, &i\ne j.
\end{array}
\right.\quad
\epsilon_{ijk}=\left\{
\begin{array}{ll}
  1, &(ijk)=(123),(231),(312), \\
  -1, &(ijk)=(132),(213),(321),\\
  0, &\text{otherwise}.
\end{array}
\right.
$$
Two closely related tensors are the second order identity tensor, 
\begin{equation}
  \mathfrak{i}=\bm{m}_1^2+\bm{m}_2^2+\bm{m}_3^2, \nonumber
\end{equation}
and the third order determinant tensor, 
\begin{align}
  \epsilon=&\epsilon_{ijk}\bm{m}_i\otimes\bm{m}_j\otimes\bm{m}_k \nonumber\\
  =&\bm{m}_1\otimes\bm{m}_2\otimes\bm{m}_3+\bm{m}_2\otimes\bm{m}_3\otimes\bm{m}_1+\bm{m}_3\otimes\bm{m}_1\otimes\bm{m}_2\nonumber\\
  &-\bm{m}_1\otimes\bm{m}_3\otimes\bm{m}_2-\bm{m}_2\otimes\bm{m}_1\otimes\bm{m}_3-\bm{m}_3\otimes\bm{m}_2\otimes\bm{m}_1. \nonumber
\end{align}
One can verify that the above two equalities hold for any right-handed orthonormal frame $(\bm{m}_j)$. 
If $U$ is a symmetric tensor, we will use the notation
$$
\mathfrak{i}^qU=(\mathfrak{i}^q\otimes U)_{\mathrm{sym}}.
$$

To construct symmetric traceless tensors, we have the following proposition \cite{xu_tensors}. 
\begin{proposition}\label{U0}
  For each $k$-th order symmetric tensor $U$, there exists a unique $(k-2)$-th symmetric tensor $V$ such that $U-\mathfrak{i}V$ is a symmetric traceless tensor, which is denoted by $(U)_0$. The space of $k$-th order symmetric traceless tensors has the dimension $2k+1$. 
\end{proposition}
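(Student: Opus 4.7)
The plan is to identify $k$-th order symmetric tensors with homogeneous polynomials of degree $k$ in three variables, turning the claim into the classical Fischer decomposition of polynomials into a harmonic part and a radial part.

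First I would introduce the linear isomorphism $U\mapsto p_U(\bm{x}):=U_{j_1\ldots j_k}x_{j_1}\cdots x_{j_k}$ from $k$-th order symmetric tensors onto $H_k$, the space of homogeneous polynomials of degree $k$ in $x_1,x_2,x_3$. The monomial basis \eqref{tensor_monomial} is sent, up to multinomial factors, onto $x_1^{k_1}x_2^{k_2}x_3^{k_3}$, so $\dim H_k=\binom{k+2}{2}$. Under this correspondence $\mathfrak{i}$ matches $|\bm{x}|^2=x_1^2+x_2^2+x_3^2$ and $\mathfrak{i}V$ matches $|\bm{x}|^2p_V(\bm{x})$. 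Using the symmetry of $U$ one checks directly that $\Delta p_U=k(k-1)\,p_{\mathrm{tr}\,U}$, so $U$ is symmetric traceless exactly when $p_U$ is harmonic.

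With this dictionary in place, the statement becomes: every $p\in H_k$ admits a unique decomposition $p=h+|\bm{x}|^2 q$ with $h\in\mathrm{Harm}_k:=\ker\Delta\cap H_k$ and $q\in H_{k-2}$. I would prove this via the Fischer inner product on $H_k$, defined by $\langle p,q\rangle:=(p(\partial)q)(0)$, where $p(\partial)$ means substituting $x_i\mapsto\partial_i$ in $p$. Evaluating on monomials gives $\langle x^\alpha,x^\beta\rangle=\alpha_1!\alpha_2!\alpha_3!\,\delta_{\alpha\beta}$, so the form is positive-definite. Since $(|\bm{x}|^2 p)(\partial)=\Delta\circ p(\partial)$ as differential operators, one obtains the adjointness $\langle|\bm{x}|^2 p,q\rangle=\langle p,\Delta q\rangle$, whence $\mathrm{Harm}_k=(|\bm{x}|^2 H_{k-2})^\perp$. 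Multiplication by $|\bm{x}|^2$ is injective on $H_{k-2}$, so the decomposition $H_k=\mathrm{Harm}_k\oplus|\bm{x}|^2 H_{k-2}$ is direct, yielding the existence and uniqueness of $V$ and hence of $(U)_0:=U-\mathfrak{i}V$. The dimension then follows from
\[
\dim\mathrm{Harm}_k=\binom{k+2}{2}-\binom{k}{2}=\frac{(k+1)(k+2)-k(k-1)}{2}=2k+1,
\]
with the convention $\binom{k}{2}=0$ for $k\le 1$.

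The main obstacle is less technical than conceptual: one must recognise the tensor-to-polynomial dictionary that sends tracing to the Laplacian and $\mathfrak{i}$-multiplication to multiplication by $|\bm{x}|^2$, after which the Fischer inner product makes harmonic and radial polynomials orthogonal complements almost for free. An alternative staying purely inside the tensor language would analyse the linear map $V\mapsto\mathrm{tr}(\mathfrak{i}V)$ on symmetric tensors of order $k-2$ directly and show it is invertible by expanding the symmetrisation via \eqref{Usym}; the conclusion is the same but the index bookkeeping is noticeably heavier.
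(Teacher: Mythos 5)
Your argument is correct and complete. Note, however, that the paper does not prove Proposition \ref{U0} at all: it is quoted from the reference \cite{xu_tensors}, so there is no internal proof to compare against. Your route — the dictionary $U\mapsto p_U(\bm{x})=U_{j_1\ldots j_k}x_{j_1}\cdots x_{j_k}$ sending the trace to $\tfrac{1}{k(k-1)}\Delta$ and $\mathfrak{i}$-multiplication to multiplication by $|\bm{x}|^2$, followed by the Fischer inner product to exhibit $\mathrm{Harm}_k$ as the orthogonal complement of $|\bm{x}|^2H_{k-2}$ — is the standard harmonic-polynomial proof, and every step checks out: the adjointness $\langle|\bm{x}|^2p,q\rangle=\langle p,\Delta q\rangle$, the injectivity of $|\bm{x}|^2\cdot$, and the dimension count $\binom{k+2}{2}-\binom{k}{2}=2k+1$ are all right. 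The alternative you mention at the end is essentially what the paper itself does in analogous situations (e.g.\ the explicit coefficients $a^{r,m,p}_l$ in \eqref{UVtrls1}): one posits $V$ as a linear combination of iterated traces of $U$ with $\mathfrak{i}$-factors and solves the resulting triangular system. That version yields explicit formulae for $(U)_0$, which is what the paper needs computationally, whereas your Fischer argument gives existence, uniqueness, and the dimension with far less bookkeeping but no closed form. One cosmetic remark: it is worth saying explicitly that $U\mapsto p_U$ is injective on \emph{symmetric} tensors (polarization recovers $U$ from $p_U$), since that is what makes the dictionary an isomorphism rather than merely surjective.
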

We denote an orthogonal basis of $k$-th order symmetric traceless tensors by 
\begin{equation}
  \mathbb{W}^k=\{W^k_1,\ldots,W^k_{2k+1}\}. \label{orth_basis}
\end{equation}
The following proposition is the result of group representation theory (see, for example, \cite{SpecFun}), which we will give a brief description in Appendix. 
\begin{proposition}\label{orth_basis_mat}
  The functions $W^k_i(\mathfrak{i})\cdot W^k_j(\mathfrak{p})$ for $k=0,1,\ldots$ give a complete orthogonal basis of $L^2(SO(3))$. 
\end{proposition}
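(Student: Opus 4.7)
The plan is to identify the functions $W^k_i(\idn)\cdot W^k_j(\mathfrak{p})$ with scalar multiples of matrix coefficients of the irreducible representations of $SO(3)$, and then invoke the Peter--Weyl theorem.

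First, I would observe that $U\mapsto \mathfrak{p}\circ U$ preserves the space $\mathcal{H}_k$ of $k$-th order symmetric traceless tensors (the symmetric and traceless properties are kept under rotations, as noted in the text), so this defines a representation $\rho_k$ of $SO(3)$ on $\mathcal{H}_k$, which has dimension $2k+1$ by Proposition \ref{U0}. Expanding the rotated basis element in $\mathbb{W}^k$ gives
\[
W^k_j(\mathfrak{p}) \;=\; \mathfrak{p}\circ W^k_j \;=\; \sum_{i} D^k_{ij}(\mathfrak{p})\, W^k_i,
\]
and since $\idn$ acts trivially so that $W^k_i(\idn)=W^k_i$, while $\mathbb{W}^k$ is orthogonal, dotting against $W^k_i(\idn)$ isolates the $i$-th coefficient:
\[
W^k_i(\idn)\cdot W^k_j(\mathfrak{p}) \;=\; \|W^k_i\|_F^2\, D^k_{ij}(\mathfrak{p}).
\]
Thus the functions in the proposition are just rescaled matrix entries of $\rho_k$, and orthogonality/completeness of the family amounts to the analogous statement for the matrix coefficients $D^k_{ij}$.

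Next, I would establish that $\rho_k$ is irreducible and equivalent to the standard $(2k+1)$-dimensional irreducible representation of $SO(3)$. The cleanest route is to attach to $U\in\mathcal{H}_k$ the homogeneous polynomial $P_U(\bm{x})=U_{j_1\ldots j_k}x_{j_1}\cdots x_{j_k}$; a short computation shows that $\Delta P_U$ is proportional to $P_{\mathrm{tr}\,U}$, so $P_U$ is harmonic iff $U$ is traceless, and $P_{\mathfrak{p}\circ U}(\bm{x})=P_U(\mathfrak{p}^{-1}\bm{x})$, so $U\mapsto P_U$ is an $SO(3)$-equivariant isomorphism from $\mathcal{H}_k$ onto the space of degree-$k$ spherical harmonics on $\mathbb{R}^3$. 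The latter is the classical realization of the irreducible representation of $SO(3)$ of weight $k$, and varying $k=0,1,2,\ldots$ exhausts all irreducible representations of $SO(3)$ up to equivalence.

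Finally, I would apply the Peter--Weyl theorem for compact Lie groups, which states that the matrix coefficients of a maximal set of pairwise inequivalent irreducible unitary representations form a complete orthogonal family in $L^2$ of the group. Combined with the two observations above, and absorbing the constants $\|W^k_i\|_F^2$, this yields the proposition. The main obstacle is the irreducibility of $\rho_k$; via the passage to harmonic polynomials it reduces to a classical fact that can simply be cited, but a self-contained argument would require either Schur's lemma applied to the Casimir operator on $\mathcal{H}_k$ or an explicit construction of a highest-weight vector, and I would prefer to stay with the harmonic-polynomial route since it fits naturally with the tensor formalism already used in the paper.
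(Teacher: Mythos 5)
Your proposal is correct, and its overall architecture coincides with the paper's: both identify $W^k_i(\mathfrak{i})\cdot W^k_j(\mathfrak{p})$ as (rescaled) matrix coefficients of the unitary representation of $SO(3)$ on $k$-th order symmetric traceless tensors, establish irreducibility, and conclude by Peter--Weyl. The one place where you genuinely diverge is the irreducibility step. The paper proves it self-containedly in the tensor formalism: it observes that an invariant subspace is invariant under the differential operators $\mathcal{L}_i$ on $SO(3)$, and uses the ladder combinations $\mathcal{L}_2\pm\sqrt{-1}\,\mathcal{L}_3$ acting on the basis $\big(\bm{m}_1^{n-k}(\bm{m}_2+\sqrt{-1}\bm{m}_3)^k\big)_0$ to show that any nonzero tensor generates the whole space. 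You instead transport the representation to degree-$k$ harmonic polynomials via $U\mapsto P_U$ (your verification that $\Delta P_U\propto P_{\mathrm{tr}\,U}$ and $P_{\mathfrak{p}\circ U}(\bm{x})=P_U(\mathfrak{p}^{-1}\bm{x})$ is sound, and the dimension count $2k+1$ makes the map an equivariant isomorphism), then cite the classical irreducibility of spherical harmonics. The paper explicitly remarks that the irreducibility "is claimed previously\ldots but the derivation might be presented for other mathematical objects only," and chooses to redo it on tensors precisely to avoid the transfer you propose; so your route is shorter and leans on standard citations, while the paper's ladder-operator computation keeps everything inside the tensor language it uses throughout (and reuses the operators $\mathcal{L}_i$ that it needs anyway for the $H^s$ approximation results). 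One small point worth making explicit in your write-up: since $\mathbb{W}^k$ is only orthogonal rather than orthonormal, the Schur orthogonality relations should be applied to the coefficients in the normalized basis, with the factors $\|W^k_i\|_F\|W^k_j\|_F$ absorbed as you indicate.
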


Next, we state the approximation result by the functions $w^k_{ij}(\mathfrak{p})=W^k_i(\mathfrak{i})\cdot W^k_j(\mathfrak{p})$ \cite{wigner_SXZ}. 
To this end, we need to introduce the $H^{s}$ norms on $SO(3)$. 
The definition is similar to the case in $\mathbb{R}^3$, by substituting the three derivatives $\partial_i$ with the three differential operators $\mathcal{L}_i\,(i=1,2,3)$ on $SO(3)$. 
They represent the infinitesimal rotations round $\bm{m}_i$, whose explicit formulae can be found in Appendix. 
The $H^s$ norm is denoted by $\|\cdot\|_{H^s}$, and the $H^s$ semi-norm is denoted by $|\cdot|_{H^s}$.
If there is no subscript, $\|\cdot\|$ denotes the $L^2$ norm. 

Define the projection operator $\pi_N$ as:
\begin{equation}
  \pi_Ng=\sum_{k\le N}\lambda^k_{ij}w^k_{ij}(\mathfrak{p}), \text{ such that }
\int \big(g(\mathfrak{p})-\pi_Ng(\mathfrak{p})\big)w^k_{ij}(\mathfrak{p})\md\mathfrak{p}=0. \label{projscal}
\end{equation}
\begin{proposition}\label{approxbas}
  For the function $g(\mathfrak{p})\in H^s(SO(3))$, we have 
  \begin{equation}
    \|g-\pi_Ng\|\le CN^{-s}|g|_{H^s}, 
  \end{equation}
  where $C$ is a constant. 
\end{proposition}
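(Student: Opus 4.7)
The plan is to reduce the estimate to a spectral truncation bound built on the orthogonal decomposition of Proposition \ref{orth_basis_mat}. The crucial input I would invoke is that each $w^k_{ij}$ is an eigenfunction of the Casimir operator $\mathcal{C}=\mathcal{L}_1^2+\mathcal{L}_2^2+\mathcal{L}_3^2$ on $SO(3)$, with eigenvalue $-k(k+1)$. This is standard in the representation theory of $SO(3)$: the span of $\{w^k_{ij}\}_{i,j=1}^{2k+1}$ coincides with the isotypic component of the spin-$k$ irreducible representation inside $L^2(SO(3))$, and since $\mathcal{C}$ commutes with the regular representation, Schur's lemma forces it to act by a scalar there, the known value being $-k(k+1)$.

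With this in hand, I would expand a general $g\in L^2(SO(3))$ as $g=\sum_{k,i,j}\lambda^k_{ij}w^k_{ij}$ so that Parseval yields
\begin{equation*}
  \|g-\pi_Ng\|^2=\sum_{k>N}\sum_{i,j}|\lambda^k_{ij}|^2\|w^k_{ij}\|^2.
\end{equation*}
Multiplying and dividing each term by $(k(k+1))^s$ and using $k(k+1)>N(N+1)$ for $k>N$ gives
\begin{equation*}
  \|g-\pi_Ng\|^2\le (N(N+1))^{-s}\|(-\mathcal{C})^{s/2}g\|^2\le CN^{-2s}\|(-\mathcal{C})^{s/2}g\|^2,
\end{equation*}
which is the desired rate, provided the Casimir-based seminorm on the right controls $|g|_{H^s}$.

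The remaining piece is thus to identify $\|(-\mathcal{C})^{s/2}g\|$ with a constant multiple of $|g|_{H^s}$. For integer $s$ this follows by expanding $(-\mathcal{C})^s$ as a polynomial in the $\mathcal{L}_i$ and integrating by parts on the compact unimodular group $SO(3)$; the commutator relations $[\mathcal{L}_i,\mathcal{L}_j]=\epsilon_{ijk}\mathcal{L}_k$ only produce lower-order terms that are absorbed into the generic constant $C$. For real $s$, one either interpolates between integer exponents or simply adopts the spectral calculus of $-\mathcal{C}$ as the definition of the Sobolev seminorm on $SO(3)$.

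The main obstacle, I expect, is precisely this last equivalence between the $\mathcal{L}_i$-based seminorm $|g|_{H^s}$ and the Casimir-based spectral seminorm for non-integer $s$, where one has to be careful with interpolation and with lower-order remainder terms coming from the non-commutativity of the $\mathcal{L}_i$. Once that equivalence is granted, everything else reduces to the Parseval identity combined with the eigenvalue identity $(-\mathcal{C})w^k_{ij}=k(k+1)w^k_{ij}$.
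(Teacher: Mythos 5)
Your argument is correct, but note that the paper does not actually prove Proposition \ref{approxbas}: it is quoted from a reference on Wigner functions, so there is no in-text proof to compare against, and your proposal effectively supplies the omitted argument. The spectral route you take is the standard one. The functions $w^k_{ij}$ are precisely the matrix coefficients of the spin-$k$ irreducible representation realized on $k$-th order symmetric traceless tensors, so by Peter--Weyl they span the spin-$k$ isotypic component of $L^2(SO(3))$, and by Schur's lemma the Casimir $\mathcal{C}=\mathcal{L}_1^2+\mathcal{L}_2^2+\mathcal{L}_3^2$ acts there by the scalar $-k(k+1)$; Parseval together with $k(k+1)>N^2$ for $k>N$ then gives the rate $N^{-s}$. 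One remark that removes the obstacle you flag at the end: the commutator terms never actually enter. Each isotypic component is bi-invariant, hence preserved by every $\mathcal{L}_i$, so for $h$ in the spin-$k$ component an induction using the skew-adjointness of the $\mathcal{L}_i$ with respect to Haar measure gives
\begin{equation*}
  \sum_{|\alpha|=s}\|\mathcal{L}^{\alpha}h\|^{2}=\bigl(k(k+1)\bigr)^{s}\|h\|^{2}
\end{equation*}
exactly for integer $s$, where $\mathcal{L}^{\alpha}$ ranges over words of length $s$ in $\mathcal{L}_1,\mathcal{L}_2,\mathcal{L}_3$; summing over $k$ identifies $|g|_{H^s}$ with $\|(-\mathcal{C})^{s/2}g\|$ with constant one. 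For non-integer $s$ the equivalence follows by interpolation, or one simply takes the spectral expression as the definition of the seminorm, which is the usual convention on compact groups and is consistent with the paper's loose description of $H^s$. With that, your chain of inequalities closes the proof.
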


The notation of norms can be extended to tensor-valued functions about multiple $\mathfrak{p}$-variables: 
for two tensor-valued functions $A(\mathfrak{p}_1,\ldots,\mathfrak{p}_l)$ and $B(\mathfrak{p}_1,\ldots,\mathfrak{p}_l)$, if $A$ and $B$ have the same order, we define the inner product as 
\begin{align}
  (A,B)=\int A(\mathfrak{p}_1,\ldots,\mathfrak{p}_l)\cdot B(\mathfrak{p}_1,\ldots,\mathfrak{p}_l) \,\md\mathfrak{p}_1\ldots\md \mathfrak{p}_l. \label{innp_tensor}
\end{align}
The $L^2$ norm is then defined as 
\begin{align}
  \|A\|^2=(A,A)=\int \|A(\mathfrak{p}_1,\ldots,\mathfrak{p}_l)\|_F^2 \,\md\mathfrak{p}_1\ldots\md \mathfrak{p}_l. \label{norm_tensor}
\end{align}
The $H^s$ norms are given similarly. 

\section{General form of expansion\label{expansion_general}}
In this section, we discuss the expansion of interaction kernels in the general case, i.e. without considering molecular symmetry. 
First, we introduce the gradient expansion to decouple the spatial variables, which has been used previously to deal with various systems without orientational variables.
Then, starting from the formula after the gradient expansion is done, we discuss how to deal with the orientational variables. 

\subsection{Molecular model and gradient expansion}
Based on microscopic potential and statistical mechanics, one could write down a molecular model, which may involve various approaches such as mean-field theory or cluster expansion \cite{Mayer_book,Landau_Lifshitz_Statistical}. 
No matter what approaches are used, the free energy typically takes a form including the contribution of local entropy term and nonlocal interactions of molecule clusters of two, three, four, and so on. 
It is written as 
\begin{align}
{\beta_0}{\mathscr{F}[f]}=&
\int \md\bm{x} \md \mathfrak{p}f(\bm{x},\mathfrak{p})\ln f(\bm{x},\mathfrak{p})+\mathscr{F}_2+\mathscr{F}_3+\mathscr{F}_4+\ldots ,\nonumber
\end{align}
where the nonlocal interactions terms are given by 
\begin{align}
\mathscr{F}_2=&\frac{1}{2!}\int\md\bm{x}_1\md \mathfrak{p}_1\md\bm{x}_2\md \mathfrak{p}_2\mathscr{G}_2(\bm{r}_2,\mathfrak{p}_1,\mathfrak{p}_2)f(\bm{x}_1,\mathfrak{p}_1)f(\bm{x}_2,\mathfrak{p}_2),\nonumber\\
\mathscr{F}_3=&\frac{1}{3!}\int\md\bm{x}_1\md \mathfrak{p}_1\md\bm{x}_2\md \mathfrak{p}_2\md\bm{x}_3\md\mathfrak{p}_3\mathscr{G}_3(\bm{r}_2,\bm{r}_3,\mathfrak{p}_1,\mathfrak{p}_2,\mathfrak{p}_3)f(\bm{x}_1,\mathfrak{p}_1)f(\bm{x}_2,\mathfrak{p}_2)f(\bm{x}_3,\mathfrak{p}_3),\nonumber\\
\mathscr{F}_4=&\frac{1}{4!}\int\md\bm{x}_1\md \mathfrak{p}_1\md\bm{x}_2\md \mathfrak{p}_2\md\bm{x}_3\md\mathfrak{p}_3\md\bm{x}_4\md\mathfrak{p}_4\mathscr{G}_4(\bm{r}_2,\bm{r}_3,\bm{r}_4,\mathfrak{p}_1,\mathfrak{p}_2,\mathfrak{p}_3,\mathfrak{p}_4)\nonumber\\
&\qquad\qquad f(\bm{x}_1,\mathfrak{p}_1)f(\bm{x}_2,\mathfrak{p}_2)f(\bm{x}_3,\mathfrak{p}_3)f(\bm{x}_4,\mathfrak{p}_4). \label{virial}
\end{align}
Here, $(\bm{x}_j,\mathfrak{p}_j)$ represents the position and orientation of the molecule $j$, $\bm{r}_j=\bm{x}_j-\bm{x}_1$ is the relative position to the molecule $1$, 
and $\beta_0$ is the inverse of the product of the Boltzmann constant and the absolute temperature. 



The entropy is a local term that can be handled in different ways.
One possible approach is to use the so-called Bingham closure for rod-like molecules \cite{bingham1974antipodally,ball2010nematic,RodModel}, which is also adopted for bent-core molecules \cite{BentModel}. 
This apporach can always be carried out if we are able to deal with the nonlocal interaction terms. 
Therefore, we do not discuss this term in this paper. 

Our focus is the nonlocal interaction terms $\mathscr{F}_l$. 
The interaction kernels $\mathscr{G}_l$ are functions of the molecular potential that might involve numerous types of forces, which we will not try to specify. 
The nonlocal interaction terms $\mathscr{F}_l$ are typically truncated somewhere. 
As an example, if the concentration is low, it would suffice to keep the $\mathscr{F}_2$ term only.
In this case, one could use $\mathscr{G}_2=1-\exp\big(-\beta_0\mathscr{U}(\bm{r}_2,\mathfrak{p}_1,\mathfrak{p}_2)\big)$ where $\mathscr{U}$ is the potential for a pair of molecules. 

We start from doing Taylor expansions on $f(\bm{x}_j,\mathfrak{p}_j)=f(\bm{x}_1+\bm{r}_j,\mathfrak{p}_j)$ about $\bm{r}_j$, leading to 
\begin{align}
{\mathscr{F}_l[f]}=
  \sum_{k_2,\ldots,k_l}\frac{1}{l!k_2!\ldots k_l!}\int&\md\bm{x}_1\md\mathfrak{p}_1\ldots\md \mathfrak{p}_l
  f(\bm{x}_1,\mathfrak{p}_1)\nonumber\\
  &\mathscr{M}_l^{k_2,\ldots,k_l}(\mathfrak{p}_1\ldots,\mathfrak{p}_l)\cdot\nabla^{k_2} f(\bm{x}_1,\mathfrak{p}_2)\otimes\ldots\otimes\nabla^{k_l} f(\bm{x}_1,\mathfrak{p}_l),\label{TlExp}
\end{align}
where we define the tensors $\mathscr{M}_l^{k_2,\ldots,k_l}$ as follows, 
\begin{equation}
  \mathscr{M}_l^{k_2,\ldots,k_l}(\mathfrak{p}_1,\ldots,\mathfrak{p}_l)=\int \mathscr{G}_l(\bm{r}_2,\ldots,\bm{r}_l,\mathfrak{p}_1,\ldots,\mathfrak{p}_l)
  \bm{r}_2^{k_2}\otimes\ldots\otimes\bm{r}_l^{k_l}\,\md\bm{r}_2\ldots \md\bm{r}_l. \label{spcmoment}
\end{equation}
%
%
%
In the above, we have effectively done the gradient expansion. It has been adopted in many systems where no orientational variables are involved (such as \cite{Cahn_Hilliard_1}), where $\mathscr{M}_l^{k_2,\ldots,k_l}$ are constant tensors, so that $\mathscr{F}$ becomes a functional about $f$ and their derivatives. 
In some systems, such a manipulation is done in the Fourier space, where the expansion is done about Fourier modes.
It leads to polynomials of Fourier modes \cite{LB,fredrickson1987fluctuation}, which is formally equivalent to the gradient expansion by Fourier transformations.
It certainly requires some conditions for the gradient expansion to be appropriate. 
In this work, however, we assume its appropriateness and start our discussion from \eqref{TlExp} and \eqref{spcmoment}. 

The focus of this paper is the expansion of $\mathscr{M}_l^{k_2,\ldots,k_l}$ about the orientational variables $\mathfrak{p}_i$. 
After the expansion, the variables $\mathfrak{p}_i$ are separated, so that the integrals $\int\md\mathfrak{p}_i$ can be decoupled. 
The interaction terms $\mathscr{F}_l$ then become functionals about several quantities averaged by $f(\mathfrak{p})$, denoted by $\langle h\rangle$ that we define as 
\begin{equation}
  \langle h\rangle=\int h(\mathfrak{p})f(\bm{x},\mathfrak{p})\,\md\mathfrak{p}. 
\end{equation}
The average is taken over $SO(3)$, so that $\langle h\rangle$ is a function of $\bm{x}$. 
The expansion shall satisfy several symmetry arguments, which we will discuss throughout the rest of paper. 
We shall discuss $\mathscr{M}_2^k$ in a detailed manner to clearly illustrate the principles. 
Explicit expressions will be given for the interaction terms up to $\mathscr{F}_4$. 
As indicated by the Landau-de Gennes theory, this is expected to cover most applications.


\subsection{Expansion of $\mathscr{M}_2^k$}

We begin with writing down the symmetry arguments that the kernel function shall satisfy. 
Regardless of the molecular potential, the interaction between a pair of molecules shall only depend on their relative position and orientation, and be invariant when two molecules interchange. 
This leads to natural symmetries in the kernel function $\mathscr{G}_2$, 
given by 
\begin{align}
  \mathscr{G}_2(\mathfrak{t}\bm{r}_2,\mathfrak{tp}_1,\mathfrak{tp}_2)&=\mathscr{G}_2(\bm{r}_2,\mathfrak{p}_1,\mathfrak{p}_2), \quad \forall \mathfrak{t}\in SO(3), \label{rotate0}\\
  \mathscr{G}_2(-\bm{r}_2,\mathfrak{p}_2,\mathfrak{p}_1)&=\mathscr{G}_2(\bm{r}_2,\mathfrak{p}_1,\mathfrak{p}_2). \label{switch00}
\end{align}

We first seek the expansion consistent with \eqref{rotate0}.
It yields
\begin{align*}
  \mathscr{M}_2^k(\mathfrak{tp}_1,\mathfrak{tp}_2)=&\int \bm{r}_2^k\mathscr{G}_2(\bm{r}_2,\mathfrak{tp}_1,\mathfrak{tp}_2) \md\bm{r}_2
  =\int (\mathfrak{t}\bm{r}_2)^k\mathscr{G}_2(\mathfrak{t}\bm{r}_2,\mathfrak{tp}_1,\mathfrak{tp}_2) \md(\mathfrak{t}\bm{r}_2)\\
  =&\int (\mathfrak{t}\bm{r}_2)^k\mathscr{G}_2(\bm{r}_2,\mathfrak{p}_1,\mathfrak{p}_2) \md\bm{r}_2
  =\mathfrak{t}\circ \int \bm{r}_2^k\mathscr{G}_2(\bm{r}_2,\mathfrak{p}_1,\mathfrak{p}_2) \md\bm{r}_2\\
  =&\mathfrak{t}\circ \mathscr{M}_2^k(\mathfrak{p}_1,\mathfrak{p}_2). 
\end{align*}
Since $\mathscr{M}_2^k$ is a $k$-th order tensor, let us express it in the basis $\bm{m}_i(\mathfrak{p}_1)$, 
$$
\mathscr{M}_2^k=\sum_{k_1+k_2+k_3=k}\big(\mathscr{M}_2^k\cdot\bm{m}_1^{k_1}(\mathfrak{p}_1)\bm{m}_2^{k_2}(\mathfrak{p}_1)\bm{m}_3^{k_3}(\mathfrak{p}_1)\big)
\frac{k!}{k_1!k_2!k_3!}\bm{m}_1^{k_1}(\mathfrak{p}_1)\bm{m}_2^{k_2}(\mathfrak{p}_1)\bm{m}_3^{k_3}(\mathfrak{p}_1). 
$$
Notice that the coefficients $\mathscr{M}_2^k\cdot\bm{m}_1^{k_1}(\mathfrak{p}_1)\bm{m}_2^{k_2}(\mathfrak{p}_1)\bm{m}_3^{k_3}(\mathfrak{p}_1)$ are scalar functions of $\mathfrak{p}_1^{-1}\mathfrak{p}_2$, because we have the following, 
\begin{align*}
  \mathscr{M}_2^k(\mathfrak{tp}_1,&\mathfrak{tp}_2)\cdot\bm{m}_1^{k_1}(\mathfrak{tp}_1)\bm{m}_2^{k_2}(\mathfrak{tp}_1)\bm{m}_3^{k_3}(\mathfrak{tp}_1)\\
  =&\Big(\mathfrak{t}\circ \mathscr{M}_2^k(\mathfrak{p}_1,\mathfrak{p}_2)\Big)\cdot \mathfrak{t}\circ\Big(\bm{m}_1^{k_1}(\mathfrak{p}_1)\bm{m}_2^{k_2}(\mathfrak{p}_1)\bm{m}_3^{k_3}(\mathfrak{p}_1)\Big)\\
  =&\mathscr{M}_2^k(\mathfrak{p}_1,\mathfrak{p}_2)\cdot\bm{m}_1^{k_1}(\mathfrak{p}_1)\bm{m}_2^{k_2}(\mathfrak{p}_1)\bm{m}_3^{k_3}(\mathfrak{p}_1)\\
 (\text{let }\mathfrak{t}=\mathfrak{p}_1^{-1})\quad =&\mathscr{M}_2^k(\mathfrak{i},\mathfrak{p}_1^{-1}\mathfrak{p}_2)\cdot\bm{e}_1^{k_1}\bm{e}_2^{k_2}\bm{e}_3^{k_3}. 
\end{align*}
So, we expand this scalar by the orthogonal basis given in Proposition \ref{orth_basis_mat}. 
It can be written as the sum of some terms given by $V_1^m(\mathfrak{i})\cdot V^m(\mathfrak{p}_1^{-1}\mathfrak{p}_2)$,
where $V_1^m$ and $V^m$ are two symmetric traceless tensors of $m$-th order. 
Plugging it into $\mathscr{M}_2^k$, we know that $\mathscr{M}_2^k$ can be expanded into the sum of terms of the following form, 
\begin{align}
  &Y_1^k(\mathfrak{p}_1)_{i_1\ldots i_k}\big(V_1^m(\mathfrak{i})\cdot V^m(\mathfrak{p}_1^{-1}\mathfrak{p}_2)\big)=Y_1^k(\mathfrak{p}_1)_{i_1\ldots i_k}\big(V_1^m(\mathfrak{p}_1)\cdot V^m(\mathfrak{p}_2)\big)\nonumber\\
  =&Y(\mathfrak{p}_1)_{i_1\ldots i_kj_1\ldots j_m}V^m(\mathfrak{p}_2)_{j_1\ldots j_m}, \label{term0}
\end{align}
where $Y_1^k$ is a $k$-th order symmetric tensor, and we denote $Y=Y_1^k\otimes V_1^m$. 
The above expansion is already variable separated: 
we could take it back into the Taylor's expansion \eqref{TlExp} and obtain 
\begin{align}
  &\int f(\bm{x},\mathfrak{p}_1)Y(\mathfrak{p}_1)_{i_1\ldots i_kj_1\ldots j_m}V^m(\mathfrak{p}_2)_{j_1\ldots j_m}\cdot \nabla^k f(\bm{x},\mathfrak{p}_2)\, \md\bm{x}\md\mathfrak{p}_1\md \mathfrak{p}_2\nonumber\\
  =&\int \left(\int Y(\mathfrak{p}_1)_{i_1\ldots i_kj_1\ldots j_m}f(\bm{x},\mathfrak{p}_1) \md\mathfrak{p}_1\right)
  \partial_{i_1\ldots i_k}\left( \int V^m(\mathfrak{p}_2)_{j_1\ldots j_m}f(\bm{x},\mathfrak{p}_2)\,\md \mathfrak{p}_2\right)\,\md\bm{x}\nonumber\\
  =&\int \langle Y\rangle_{i_1\ldots i_kj_1\ldots j_m}\partial_{i_1\ldots i_k}\langle V^m\rangle_{j_1\ldots j_m}\,\md\bm{x}. \label{sepvar0}
\end{align}
We can see that the last expression is already a term about two averaged tensors $\langle Y\rangle$ and $\langle V^m\rangle$, which is our motivation of doing the expansion. 

When $Y_1^k$ takes the basis tensors $\bm{m}_1^{k_1}\bm{m}_2^{k_2}\bm{m}_3^{k_3}$ and $V_1^m$, $V^m$ take the basis tensors in $\mathbb{W}^m$, the terms in \eqref{term0} are linearly independent. Note that $Y_1^k$ has ${k+2\choose 2}$ choices, $V_1^m$ and $V^m$ both have $2m+1$ choices. Thus, the total number of these terms is ${k+2\choose 2}(2m+1)^2$. 
Furthermore, from Proposition \ref{orth_basis_mat} we know that when $m$ takes all nonnegative integers, these terms form a complete orthogonal basis. 

However, the above form is inconvenient when discussing \eqref{switch00} and molecular symmetries afterwards. 
In what follows, we decompose $Y$ into symmetric traceless tensors and try to identify the linearly independent terms after the decomposition. 

The decomposition of a tensor into symmetric traceless tensors is described briefly in Appendix.
Here, we only present the result: for an $r$-th order tensor $X$, we have 
\begin{align}
  X_{j_1\ldots j_r}=\sum_{\substack{0\le s\le r, s\text{ even}\\ \{\tau_1,\ldots,\tau_s\}\cup\{\sigma_1,\ldots\sigma_{r-s}\}\\=\{1,\ldots,r\}}}
  &\delta_{j_{\tau_1}j_{\tau_2}}\ldots\delta_{j_{\tau_{s-1}}j_{\tau_s}}U_{j_{\sigma_1}\ldots j_{\sigma_{r-s}}}\nonumber\\
  &+\epsilon_{j_{\tau_1}j_{\tau_2}\nu}\delta_{j_{\tau_3}j_{\tau_4}}\ldots\delta_{j_{\tau_{s-1}}j_{\tau_s}}U_{\nu j_{\sigma_1}\ldots j_{\sigma_{r-s}}}\nonumber\\
  +\sum_{\substack{0\le s\le r, s\text{ odd}\\ \{\tau_1,\ldots,\tau_s\}\cup\{\sigma_1,\ldots\sigma_{r-s}\}\\=\{1,\ldots,r\}}}&\epsilon_{j_{\tau_1}j_{\tau_2}j_{\tau_3}}\delta_{j_{\tau_4}j_{\tau_5}}\ldots\delta_{j_{\tau_{s-1}}j_{\tau_s}}U_{j_{\sigma_1}\ldots j_{\sigma_{r-s}}}. \label{symtrlsdecomp}
\end{align}
Here, we use $U$ to denote any symmetric traceless tensor (in different terms $U$ can be different). 

Now we apply the above decomposition to the tensor $Y=Y_1^k\otimes V_1^m$ in \eqref{term0}. 
We shall keep in mind that $Y_1^k$ is a $k$-th order symmetric tensor, and $V_1^m$, $V^m$ are $m$-th order symmetric traceless tensors. 
Let us discuss indices of $\delta$ and $\epsilon$ in the decomposition. 
Note that in a symmetric tensor, every index is equivalent. 
Thus, we only need to examine how many of the indices are located in $Y_1^k$ or $V_1^m$. 
\begin{enumerate}
\item If both indices in a $\delta$ are located in $V_1^m$, the resulting term is zero when taking into \eqref{term0}, because it leads to the contraction of two indices in $V^m$. 
\item Both indices in a $\delta$ are located $Y_1^k$. Suppose the number of such $\delta$ is $q$. 
\item One index in a $\delta$ is located in $Y_1^k$, while the other is located in $V_1^m$. Suppose the number of such $\delta$ is $p$. 
\item For the three indices in $\epsilon$, if any two of them are located in $Y_1^k$ (or $V_1^m$), then the term will vanish because $Y_1^k$ and $V_1^m$ are symmetric. The non-vanishing term must have one index in $Y_1^k$, one in $V_1^m$, and the third can only be the $\nu$ of $U_{\nu\ldots}$ in \eqref{symtrlsdecomp}. In other words, the second summation in \eqref{symtrlsdecomp} contributes nothing in \eqref{term0}. 
\end{enumerate}
Summarizing these cases, we obtain some terms expressed by a pair of symmetric traceless tensors.
If we label the tensor order of $U$ in \eqref{symtrlsdecomp} by $U^r$, these terms can be written as 
\begin{subequations}\label{symtrlsinner}
\begin{align}
  &\left(\delta_{j_1j_2}\ldots \delta_{j_{2q-1}j_{2q}}U^r(\mathfrak{p}_1)_{j'_1\ldots j'_{r-p} i_1\ldots i_p}V^m(\mathfrak{p}_2)_{j''_{1}\ldots j''_{m-p}i_1\ldots i_p}\right)_{\mathrm{sym}}, \\
  &\left(\epsilon_{\zeta_1\zeta_2\nu}\delta_{j_1j_2}\ldots \delta_{j_{2q-1}j_{2q}}U^r(\mathfrak{p}_1)_{\zeta_1j'_{1}\ldots j'_{r-p-1}i_1\ldots i_p}V^m(\mathfrak{p}_2)_{\zeta_2j''_{1}\ldots j''_{m-p-1}i_1\ldots i_p}\right)_{\mathrm{sym}}. 
\end{align}
\end{subequations}
They are all symmetrized since $Y_1^k$ is symmetric. 
Let us define two short notations, 
\begin{subequations}\label{oversetdef}
\begin{align}
  &U^r\overset{p}{\cdot} V^m=(U^r_{i_1\ldots i_pj_1\ldots j_{r-p}}V^m_{i_1\ldots i_pj'_1\ldots j'_{m-p}})_{\mathrm{sym}}, \\
  &U^r\overset{p}\times V^m=(\epsilon_{\zeta_1\zeta_2 \nu} U^r_{\zeta_1i_1\ldots i_pj_1\ldots j_{r-p-1}}V^m_{\zeta_2i_1\ldots i_pj'_1\ldots j'_{m-p-1}})_{\mathrm{sym}}. 
\end{align}
\end{subequations}
Rewriting the terms in \eqref{symtrlsinner} and noticing the relation of tensor order, any term given by \eqref{term0} can be expressed linearly by the following terms, 
\begin{subequations}\label{terms}
\begin{align}
  &\idn^qU^r(\mathfrak{p}_1)\overset{p}{\cdot}V^m(\mathfrak{p}_2),\quad k=2q+r+m-2p,\label{terms:a}\\
  &\idn^qU^r(\mathfrak{p}_1)\overset{p}{\times} V^m(\mathfrak{p}_2),\quad k=2q+r+m-2p-1.\label{terms:b}
\end{align}
\end{subequations}
%
Note that $V^m$ is the same tensor in \eqref{term0} and \eqref{terms}. 
We shall prove the following. 

\begin{theorem}\label{twoforms}
  Let $k$ and $V^m$ be fixed. Let $r,p,q$ vary and $U^r\in \mathbb{W}^r$. The terms given in \eqref{terms} are linearly independent, and are linearly equivalent to the terms given in \eqref{term0}. 
\end{theorem}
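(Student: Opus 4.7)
The plan is to establish both claims of Theorem \ref{twoforms}---linear equivalence of spans and linear independence of the list in \eqref{terms}---simultaneously by a dimension count. Let $D_0$, $D_1$, and $N$ denote, respectively, the dimension of the span of \eqref{term0} (with $V^m$ fixed), the dimension of the span of \eqref{terms}, and the total number of terms listed in \eqref{terms}. The argument hinges on three ingredients: (i) span \eqref{term0} $\subseteq$ span \eqref{terms}, so $D_0 \leq D_1$; (ii) $D_0 = {k+2\choose 2}(2m+1)$; (iii) $N = {k+2\choose 2}(2m+1)$. Combined with the tautology $D_1 \leq N$, these give $D_0 \leq D_1 \leq N = D_0$, forcing all three quantities to coincide. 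The equality $D_0 = D_1$ is the span equivalence, and the equality $D_1 = N$ is precisely the linear independence of the terms in \eqref{terms}.

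Step (i) is essentially a recap of the analysis preceding the theorem: substitute the decomposition \eqref{symtrlsdecomp} into $Y = Y_1^k \otimes V_1^m$, contract with $V^m(\mathfrak{p}_2)$, and retain only those $\delta$/$\epsilon$ patterns that are not killed by the symmetry and tracelessness of the factors (items~2 and~3 in the case analysis, together with the single-$\epsilon$ remnant of item~4). This writes any \eqref{term0} term as a linear combination of terms of the forms \eqref{terms:a} and \eqref{terms:b}. Step (ii) follows from the paper's prior observation---based on Proposition \ref{orth_basis_mat} and the basis property of the monomials $\bm{m}_1^{k_1}\bm{m}_2^{k_2}\bm{m}_3^{k_3}$ for symmetric $k$-tensors---that the full family \eqref{term0} is linearly independent as $Y_1^k$, $V_1^m$, and $V^m$ range over their bases; any subfamily obtained by fixing $V^m$ inherits linear independence, and has size ${k+2\choose 2}(2m+1)$.

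The main obstacle is step (iii), a combinatorial count of the triples $(r, p, q)$ appearing in \eqref{terms} weighted by $2r+1 = \dim \mathbb{W}^r$. The constraints are $2q + r + m - 2p = k$ with $0 \leq p \leq \min(r, m)$, $q \geq 0$ for type~(a) (which forces $r \equiv k - m \pmod 2$), and $2q + r + m - 2p - 1 = k$ with $0 \leq p \leq \min(r-1, m-1)$, $q \geq 0$ for type~(b) (which forces $r \equiv k - m + 1 \pmod 2$). I plan to compute the weighted sum by fixing $r$, tallying the admissible $p$'s (for type~(a), this is $\min(r,m) - \max(0,(r+m-k)/2) + 1$ when positive, with an analogous count for~(b)), and then summing $(2r+1)$ over all valid $r$ in the correct parity class. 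Spot-checking small cases---$(k,m) = (2,0)$ gives $1 + 5 = 6 = {4\choose 2}\cdot 1$; $(k,m) = (3,1)$ gives $30 = {5\choose 2}\cdot 3$; $(k,m) = (3,3)$ gives $42 + 28 = 70 = {5\choose 2}\cdot 7$---strongly supports the identity $N = {k+2\choose 2}(2m+1)$. The general verification should reduce to elementary (but tedious) bookkeeping split by the parity of $r$, or alternatively to a generating-function comparison of the two sides as polynomials in auxiliary variables.
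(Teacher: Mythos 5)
Your skeleton is exactly the paper's: show span\eqref{term0} $\subseteq$ span\eqref{terms}, note that the fixed-$V^m$ subfamily of \eqref{term0} has dimension $\binom{k+2}{2}(2m+1)$, and then close the loop by showing the list \eqref{terms} has precisely that many entries. Steps (i) and (ii) are fine. The problem is that step (iii) --- the weighted count $N=\sum_r(2r+1)\cdot\#\{(p,q)\}$ --- is the entire mathematical content of this proof, and you have not carried it out: you give the correct per-$r$ count $\min(r,m)-\max\{0,(r+m-k)/2\}+1$ for \eqref{terms:a} (and its analogue for \eqref{terms:b}), verify three small cases, and then assert that the general identity ``should reduce to elementary (but tedious) bookkeeping.'' Spot checks are not a proof, and this particular sum is not as routine as the phrasing suggests.

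Concretely, the difficulty your bookkeeping will hit is the asymmetry introduced by the cap $p\le\min(r,m)$. When $k\le m$ the admissible $r$ range is $m-k\le r\le m+k$, symmetric about $m$; the paper substitutes $u=r-m$, writes $2r+1=2m+1+2u$, and the $2u$ contributions cancel pairwise, giving $(2m+1)\cdot\frac{1}{2}(k+1)(k+2)$ cleanly. When $k>m$ that range is truncated at $r\ge 0$ (equivalently, the number of admissible $p$ is no longer $\frac{k-|r-m|}{2}+1$ for small $r$ because $\min(r,m)$ becomes the binding constraint), the cancellation breaks, and the paper has to switch to an induction on $k$: terms with $q>0$ biject with terms at level $k-2$, and the $q=0$ terms are counted separately via $\sum_{r=k-m}^{k+m}(2r+1)=(2k+1)(2m+1)$. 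Your plan of ``splitting by the parity of $r$'' does not address this $k>m$ regime at all, and two of your three spot checks happen to land in it without your noting that they require a different argument. Either carry out the two-case computation (symmetric substitution for $k\le m$, induction for $k>m$), or replace it with something genuinely closed-form --- e.g.\ identifying the $(p,q)$-count with the Clebsch--Gordan multiplicity of $\mathbb{W}^r$ in $\mathrm{Sym}^k(\mathbb{R}^3)\otimes\mathbb{W}^m$ so that $N=\dim\bigl(\mathrm{Sym}^k(\mathbb{R}^3)\otimes\mathbb{W}^m\bigr)$ is immediate --- but as written the claim $N=\binom{k+2}{2}(2m+1)$ is only conjectured.
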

\begin{proof}
In the above derivation, we actually show that \eqref{term0} can be linearly expressed by the terms given in \eqref{terms}. 
Recall that when $V^m$ is fixed, the total number of linearly independent terms given by \eqref{term0} is $\frac{1}{2}(k+2)(k+1)\cdot (2m+1)$. 
Thus, we only need to prove that the number of terms given by \eqref{terms} (when $U^r\in\mathbb{W}^r$) equals to this value. 


The $r$-th order symmetric traceless tensor $U^r$ has $2r+1$ choices.
We shall count the choices of $(p,q)$ with a fixed $r$. 
In \eqref{terms}, the indices shall be nonnegative integers.
Moreover, we require $p\le r,m$ in \eqref{terms:a} and $p+1\le r,m$ in \eqref{terms:b}. Thus, we deduce the range for the indices, 
\begin{align*}
  \eqref{terms:a}:\quad & r+m-k\text{ even, }\max\{0,\frac{r+m-k}{2}\}\le p\le \min\{m,r\};\\
  \eqref{terms:b}:\quad & r+m-k\text{ odd, }\max\{0,\frac{r+m-k-1}{2}\}\le p\le \min\{m,r\}-1. 
\end{align*}
If $k\le m$, by requiring the upper bounds no less than the lower bounds in \eqref{terms}, we deduce that $m-k\le r\le m+k$. The number of $p$ available is given by 
\begin{align*}
  \eqref{terms:a}:\quad & \frac{k-|r-m|}{2}+1,\\
  \eqref{terms:b}:\quad & \frac{k-|r-m|-1}{2}+1.
\end{align*}
Hence, the total number of terms is 
\begin{align*}
  &\sum_{k-|r-m|\ge 0\ \mathrm{even}} \frac{k-|r-m|+2}{2}(2r+1)+\sum_{k-|r-m|\ge 0\ \mathrm{odd}} \frac{k-|r-m|+1}{2}(2r+1). 
\end{align*}
Let $u=r-m$ so that $r=m+u$. The above number is calculated as 
\begin{align*}
  &\sum_{k-|u|\ge 0\ \mathrm{even}} \frac{k-|u|+2}{2}(2m+1+2u)+\sum_{k-|u|\ge 0\ \mathrm{odd}} \frac{k-|u|+1}{2}(2m+1+2u)\\
  =&\sum_{k-|u|\ge 0\ \mathrm{even}} \frac{k-|u|+2}{2}(2m+1)+\sum_{k-|u|\ge 0\ \mathrm{odd}} \frac{k-|u|+1}{2}(2m+1)\\
  =&(2m+1)\Big(\sum_{k-|u|\ge 0\ \mathrm{even}} \frac{k-|u|+2}{2}+\sum_{k-|u|\ge 0\ \mathrm{odd}} \frac{k-|u|+1}{2}\Big)\\
  =&(2m+1)\cdot \frac{1}{2}(k+1)(k+2). 
\end{align*}

If $k>m$, let us do induction about $k$, based on $k=m,m-1$ that have been shown above. 
Suppose that for $k-2$, the total number is $\frac{1}{2}k(k-1)(2m+1)$. 
If $q>0$, a term for $k$ corresponds to a term for $k-2$ by substituting $q$ with $q-1$. 
Now we count the number of terms where $q=0$. 
There are two cases:
\begin{itemize}
\item $r=k-m+2p$ for $0\le p\le m$;
\item $r=k-m+2p+1$ for $0\le p\le m-1$.
\end{itemize}
Summarizing the two cases, we have $k-m\le r\le k+m$ and $p$ is determined correspondingly. The total number of terms when $q=0$ is thus 
$$
\sum_{r=k-m}^{k+m}(2r+1)=(2k+1)(2m+1)=\frac{1}{2}(k+2)(k+1)(2m+1)-\frac{1}{2}k(k-1)(2m+1). 
$$

The only case remaining is $m=0$, $k=1$, for which we can count directly. 
\end{proof}
\textit{Remark.} It follows from the paragraph below \eqref{sepvar0} that when $(U^r,V^m)\in\mathbb{W}^r\times \mathbb{W}^m$ with both $r$ and $m$ varying, the terms in \eqref{terms} are linearly independent. 
We shall point out that it is not always the case that the terms expressed by symmetric traceless tensors are linearly independent, as we will see in the expansion of $\mathscr{M}_4^{0,0,0}$ afterwards. In that case, the approach in the proof above is also useful. 


We next deal with the property \eqref{switch00}. From the definiton of $\mathscr{M}_2^k$, it leads to
\begin{align}
  \mathscr{M}_2^k(\mathfrak{p}_2,\mathfrak{p}_1)=(-1)^k\mathscr{M}_2^k(\mathfrak{p}_1,\mathfrak{p}_2). \label{switch0}
\end{align}
For the terms in \eqref{terms} where $(U^r,V^m)\in\mathbb{W}^r\times \mathbb{W}^m$, we consider the following two sets, 
\begin{subequations}\label{subspace_switch}
\begin{align}
  \mathbb{X}^k_{n,+1}=&\big\{\mathfrak{i}^qU^r(\mathfrak{p}_1)\overset{p}{\cdot}V^m(\mathfrak{p}_2)+\mathfrak{i}^qV^m(\mathfrak{p}_1)\overset{p}{\cdot}U^r(\mathfrak{p}_2):\ k=2q+r+m-2p,\ r,m\le n\big\}\nonumber\\
  &\cup
  \big\{\mathfrak{i}^qU^r(\mathfrak{p}_1)\overset{p}{\times} V^m(\mathfrak{p}_2)-\mathfrak{i}^qV^m(\mathfrak{p}_1)\overset{p}{\times} U^r(\mathfrak{p}_2):\ k=2q+r+m-2p-1,\ r,m\le n,\ U^r\ne V^m\big\},\label{subspace_switch:a}\\
  \mathbb{X}^{k}_{n,-1}=&\big\{\mathfrak{i}^qU^r(\mathfrak{p}_1)\overset{p}{\cdot}V^m(\mathfrak{p}_2)-\mathfrak{i}^qV^m(\mathfrak{p}_1)\overset{p}{\cdot}U^r(\mathfrak{p}_2):\ k=2q+r+m-2p,\ r,m\le n,\ U^r\ne V^m\big\}\nonumber\\
  &\cup
  \big\{\mathfrak{i}^qU^r(\mathfrak{p}_1)\overset{p}{\times} V^m(\mathfrak{p}_2)+\mathfrak{i}^qV^m(\mathfrak{p}_1)\overset{p}{\times} U^r(\mathfrak{p}_2):\ k=2q+r+m-2p-1,\ r,m\le n\big\}.\label{subspace_switch:b} 
\end{align}
\end{subequations}
%
Here, we require $U^r\ne V^m$ in some sets to avoid zero. 
The terms in $\mathbb{X}^k_{n,1}\cup\mathbb{X}^k_{n,-1}$ are also linearly independent and linearly equivalent to \eqref{terms}. 

Any term $A(\mathfrak{p}_1,\mathfrak{p}_2)\in \mathbb{X}^{k}_{n,\pm 1}$ satisfies $A(\mathfrak{p}_2,\mathfrak{p}_1)=\pm A(\mathfrak{p}_1,\mathfrak{p}_2)$. 
Thus, it is easy to verify that the two spaces are orthogonal using the definition \eqref{innp_tensor}. 
It follows from \eqref{switch0} that $(\mathscr{M}_2^k, A)=0$ for any $A(\mathfrak{p}_1,\mathfrak{p}_2)\in \mathbb{X}^{k}_{n,(-1)^{k+1}}$. 
Therefore, for odd $k$, the expansion of $\mathscr{M}_2^k$ can only have terms in $\mathbb{X}^{k}_{n,-1}$, while for even $k$ it can only have terms in $\mathbb{X}^{k}_{n,+1}$. 


\begin{corollary}
  When $(U^r,V^m)\in\mathbb{W}^r\times\mathbb{W}^m$ for $r,m\le n$, the terms in $\mathbb{X}^{k}_{n,(-1)^k}$ satisfy \eqref{rotate0} and \eqref{switch00}, and are linearly independent. 
\end{corollary}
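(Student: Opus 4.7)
The plan is to verify the three assertions of the corollary---rotation covariance \eqref{rotate0}, swap symmetry \eqref{switch00}, and linear independence---separately, each reducing immediately to material already established in this section.

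First I would check that every term in $\mathbb{X}^{k}_{n,(-1)^k}$ respects \eqref{rotate0}. The building blocks $\idn^q U^r(\mathfrak{p}_1)\overset{p}{\cdot} V^m(\mathfrak{p}_2)$ and $\idn^q U^r(\mathfrak{p}_1)\overset{p}{\times} V^m(\mathfrak{p}_2)$ were singled out in the derivation leading to Theorem \ref{twoforms} precisely because they carry the right rotation covariance: the identity tensor $\idn$ is isotropic, $U^r(\mathfrak{tp}_1)=\mathfrak{t}\circ U^r(\mathfrak{p}_1)$ by \eqref{rot0}, and both $\overset{p}{\cdot}$ and $\overset{p}{\times}$ commute with the simultaneous rotation of their two arguments. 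Hence each such building block satisfies the covariance $\mathscr{M}_2^k(\mathfrak{tp}_1,\mathfrak{tp}_2)=\mathfrak{t}\circ \mathscr{M}_2^k(\mathfrak{p}_1,\mathfrak{p}_2)$ equivalent to \eqref{rotate0}, and this property is preserved by arbitrary linear combinations and hence by every element of $\mathbb{X}^{k}_{n,\pm 1}$.

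Next I would match the swap rule \eqref{switch0}. A direct index manipulation shows that, for $\overset{p}{\cdot}$, the symmetric combination $\idn^q U^r(\mathfrak{p}_1)\overset{p}{\cdot} V^m(\mathfrak{p}_2)+\idn^q V^m(\mathfrak{p}_1)\overset{p}{\cdot} U^r(\mathfrak{p}_2)$ is invariant under $\mathfrak{p}_1\leftrightarrow\mathfrak{p}_2$ while the antisymmetric one flips sign; for $\overset{p}{\times}$, the antisymmetry of $\epsilon_{ijk}$ contributes an extra sign that swaps the roles of the two combinations, which is exactly how \eqref{subspace_switch} is organized. Therefore every $A\in \mathbb{X}^{k}_{n,\epsilon}$ satisfies $A(\mathfrak{p}_2,\mathfrak{p}_1)=\epsilon\, A(\mathfrak{p}_1,\mathfrak{p}_2)$, and the choice $\epsilon=(-1)^k$ reproduces \eqref{switch0}.

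For linear independence, I would invoke the remark following Theorem \ref{twoforms}: the terms in \eqref{terms} with $(U^r,V^m)\in \mathbb{W}^r\times \mathbb{W}^m$, $r,m\le n$, are linearly independent. The elements of $\mathbb{X}^{k}_{n,+1}\cup \mathbb{X}^{k}_{n,-1}$ are produced from these by pairing each ordered pair with its swap and taking symmetric and antisymmetric combinations; the exclusions $U^r\ne V^m$ in \eqref{subspace_switch} precisely remove the combinations that would collapse to zero, so the assignment is an invertible linear change of basis, yielding linear independence of the union. Since elements of $\mathbb{X}^{k}_{n,+1}$ and $\mathbb{X}^{k}_{n,-1}$ have opposite parity under the measure-preserving involution $\mathfrak{p}_1\leftrightarrow \mathfrak{p}_2$, the two subsets are orthogonal in the inner product \eqref{innp_tensor}, so linear independence descends to each subset separately, and in particular to $\mathbb{X}^{k}_{n,(-1)^k}$. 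The main obstacle is the bookkeeping step of confirming that the change of basis from \eqref{terms} to $\mathbb{X}^{k}_{n,+1}\cup\mathbb{X}^{k}_{n,-1}$ is invertible after the $U^r\ne V^m$ exclusions, but this is conceptually routine rather than a genuine difficulty.
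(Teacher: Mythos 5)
Your proposal is correct and follows essentially the same route as the paper, which treats this corollary as a summary of the preceding discussion: rotation covariance is inherited from the construction of \eqref{terms} out of \eqref{term0}, the swap parity is built into the definition \eqref{subspace_switch} (with the extra sign from $\epsilon$ in the $\overset{p}{\times}$ terms handled exactly as you describe), and linear independence follows from the remark after Theorem \ref{twoforms} via the invertible symmetrization/antisymmetrization change of basis with the $U^r\ne V^m$ exclusions. Your final appeal to orthogonality of $\mathbb{X}^k_{n,+1}$ and $\mathbb{X}^k_{n,-1}$ is harmless but unnecessary, since any subset of a linearly independent set is linearly independent.
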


In the following, we write down orthogonal basis of $\mathrm{span}\mathbb{X}^k_{n,\pm 1}$. 
Note that for a pair of tensors $(U^r,V^m)$, there can be multiple terms in \eqref{terms} involving them. 
To achieve orthogonality for these terms, we derive some symmetric traceless tensors related to these terms. 
We know from Proposition \ref{U0} that there exists a unique symmetric traceless tensor generated by $U^r\overset{p}{\cdot}V^m$. 
Below, we would like to derive the explicit formulae. 
Consider 
\begin{align}
  (U^r\overset{p}{\cdot}V^m)_0=U^r\overset{p}{\cdot}V^m+\sum_{l=1}^{\min\{r,m\}-p}a^{r,m,p}_{l}\mathfrak{i}^{l}U\overset{p+l}{\cdot}V. \label{UVtrls1}
\end{align}
Calculating the trace using \eqref{Usym}, we deduce that
\begin{align}
  \mathrm{tr}\Big(\mathfrak{i}^{l}U^r\overset{p+l}{\cdot}V^m\Big)=&2l\big(2(r+m-2p)+1-2l\big)\mathfrak{i}^{l-1}U^r\overset{p+l}{\cdot}V^m\nonumber\\
  &+2(r-p-l)(m-p-l)\mathfrak{i}^{l}U^r\overset{p+l+1}{\cdot}V^m. \nonumber
\end{align}
So we have 
\begin{align}
  2l\big(2(r+m-2p)+1-2l\big)a^{r,m,p}_{l}+2(r-p-l+1)(m-p-l+1)a^{r,m,p}_{l-1}=0. \nonumber
\end{align}
Therefore,
\begin{align}
  a^{r,m,p}_{l}=(-1)^l\frac{(r-p)!(m-p)!\big(2(r+m-2p)-1-2l\big)!!}{l!(r-p-l)!(m-p-l)!\big(2(r+m-2p)-1\big)!!}. \nonumber
\end{align}
Similarly, we deduce that 
\begin{align}
  (U^r\overset{p}{\times}V^m)_0=U^r\overset{p}{\times}V^m+\sum_{l=1}^{\min\{r,m\}-p-1}b^{r,m,p}_{l}\mathfrak{i}^{l}U^r\overset{p+l}{\times}V^m, \label{UVtrls2}
\end{align}
where the coefficients are
$$
b^{r,m,p}_{l}=(-1)^l\frac{(r-p-1)!(m-p-1)!\big(2(r+m-2p)-3-2l\big)!!}{l!(r-p-l-1)!(m-p-l-1)!\big(2(r+m-2p)-3\big)!!}. 
$$

\begin{theorem}\label{orth_M2}
  The following terms give an orthogonal basis of the space $\mathrm{span}\mathbb{X}^{k}_{n,(-1)^k}$: 
  \begin{subequations}\label{orthterms}
  \begin{align}
    &\mathfrak{i}^q\Big(U^r(\mathfrak{p}_1)\overset{p}{\cdot}V^m(\mathfrak{p}_2)+(-1)^kV^m(\mathfrak{p}_1)\overset{p}{\cdot}U^r(\mathfrak{p}_2)\Big)_0,\quad k=2q+r+m-2p,\label{orthterms:a}\\
    &\mathfrak{i}^q\Big(U^r(\mathfrak{p}_1)\overset{p}{\times} V^m(\mathfrak{p}_2)-(-1)^kV^m(\mathfrak{p}_1)\overset{p}{\times} U^r(\mathfrak{p}_2)\Big)_0,\quad k=2q+r+m-2p-1.\label{orthterms:b}
  \end{align}
  where $U^r$ and $V^m$ take symmetric traceless tensors in $\mathbb{W}^r$ and $\mathbb{W}^m$, respectively, for $r,m\le n$. 
  \end{subequations}
\end{theorem}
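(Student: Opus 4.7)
The plan is to establish three things in order: first, every term in \eqref{orthterms} belongs to $\mathrm{span}\,\mathbb{X}^k_{n,(-1)^k}$; second, the listed terms are pairwise orthogonal under the inner product \eqref{innp_tensor}; third, their number matches the dimension of $\mathbb{X}^k_{n,(-1)^k}$ identified in the Corollary. Combined with nonvanishing norms (the leading summand survives the traceless completion), these together deliver the orthogonal basis.

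For the first claim, I would expand the traceless completions using the explicit formulae \eqref{UVtrls1}--\eqref{UVtrls2}, observing that each summand $\mathfrak{i}^{q+l} U^r \overset{p+l}{\cdot} V^m$ preserves the total order $k = 2(q+l) + r + m - 2(p+l)$; after applying the $(-1)^k$-weighted swap $\mathfrak{p}_1 \leftrightarrow \mathfrak{p}_2$ the expression lies inside $\mathbb{X}^k_{n,(-1)^k}$. The $\overset{p}{\times}$ variants are handled identically via \eqref{UVtrls2}.

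The heart of the argument is orthogonality, which I would split according to the value of $q$. For $q \ne q'$ the inner product vanishes pointwise in $(\mathfrak{p}_1, \mathfrak{p}_2)$, because the decomposition $S^k = \bigoplus_q \mathfrak{i}^q \cdot S^{k-2q}_0$ of symmetric $k$-th order tensors is Frobenius-orthogonal; the key identity is that $(X)_0 \cdot \mathfrak{i}Y = 0$ whenever $X$ is symmetric traceless, since the contraction with $\mathfrak{i}$ reduces to a trace of $(X)_0$. For $q = q'$, the cores share order, and the $\mathfrak{p}_1$-dependence of any core enters only through $U^r(\mathfrak{p}_1)$. By Proposition \ref{orth_basis_mat}, the scalar components of $U^r(\mathfrak{p}_1)$ and $U^{r'}(\mathfrak{p}_1)$ occupy orthogonal subspaces of $L^2(SO(3))$ whenever $r \ne r'$, and the analogous statement holds for $V^m$ as a function of $\mathfrak{p}_2$. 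The constraint $r + m - 2p = k - 2q$ (dot case) or $r + m - 2p - 1 = k - 2q$ (cross case) then pins down $p$ and the dot/cross type once $(r,m)$ is fixed, so the residual orthogonality reduces to the orthogonality of the bases $\mathbb{W}^r$ and $\mathbb{W}^m$.

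The counting step is a direct check against the tally in the paragraph before the Corollary, parameter by parameter in $(q,r,m,p,U^r,V^m)$. The main obstacle I anticipate is the same-$q$ case of orthogonality, specifically ruling out accidental pairings between dot and cross cores of the common order $k - 2q$: this is resolved by observing that the parity constraints $r + m - 2p = k - 2q$ and $r + m - 2p - 1 = k - 2q$ cannot both be satisfied by the same $(r,m,p)$, so any dot--cross collision at the same $q$ is forced to involve different $(r, m)$ and is killed by Proposition \ref{orth_basis_mat}.
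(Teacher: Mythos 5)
Your proposal is correct and follows essentially the same route as the paper's proof: membership and the count come from the triangular traceless-completion formulae \eqref{UVtrls1}--\eqref{UVtrls2}, orthogonality for distinct tensor pairs comes from Proposition \ref{orth_basis_mat} applied componentwise, and orthogonality for distinct $q$ comes from the pointwise Frobenius orthogonality of the $\mathfrak{i}^q$-graded decomposition (the paper's identity $(\mathfrak{i}^qX)\overset{2q+2}{\cdot}\mathfrak{i}^{q+1}=a\,\mathrm{tr}X=0$). The only difference is cosmetic: you case-split on $q$ first and then on the tensor pair, whereas the paper does the reverse; your parity observation pinning down $p$ and the dot/cross type for fixed $(q,r,m)$ is exactly what makes the two orderings equivalent.
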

\begin{proof}
  The expressions \eqref{UVtrls1} and \eqref{UVtrls2} indicate that the terms in \eqref{orthterms} are linearly equivalent to $\mathbb{X}^k_{n,(-1)^k}$, and are linarly independent by since the number of terms does not change. So, we only need to show orthogonality. 
  
  First, we need to notice that if $r\ne m$ or $j\ne l$, then we have 
  \begin{align}
    \int (W^r_j(\mathfrak{p})\otimes W^m_l(\mathfrak{p}))_{i_1\ldots i_ri_1'\ldots i_m'} \,\md\mathfrak{p}=\int (W^r_j(\mathfrak{p}))_{i_1\ldots i_r}(W^m_l(\mathfrak{p}))_{i_1'\ldots i_m'} \,\md\mathfrak{p}=0. \label{orth_compon}
  \end{align}
  Here, we could write 
  \begin{align*}
    (W^r_j(\mathfrak{p}))_{i_1\ldots i_r}=&W^r_j(\mathfrak{p})\cdot \bm{e}_{i_1}\ldots\bm{e}_{i_r}
    =W^r_j(\mathfrak{p})\cdot (\bm{e}_{i_1}\ldots\bm{e}_{i_r})_0\nonumber\\
    =&\sum_{j'}W^r_j(\mathfrak{p})\cdot \lambda_{j'}W^r_{j'}(\mathfrak{i}), 
  \end{align*}
  where in the second equality we use the fact that $W^r_j$ is symmetric traceless, and in the last equality we express $(\bm{e}_{i_1}\ldots\bm{e}_{i_r})_0$ by the basis in $\mathbb{W}^r$. 
  The equation \eqref{orth_compon} then comes directly from the orthogonality in Proposition \ref{orth_basis_mat}. 
  
  To deal with the terms in \eqref{orthterms}, we go back to \eqref{terms}. 
  Let us denote in short a term in \eqref{terms} by $\Phi(U^r(\mathfrak{p}_1),V^m(\mathfrak{p}_2))$. 
  Consider two terms in \eqref{terms}, $\Phi_1(U_1^{r_1}(\mathfrak{p}_1),V_1^{m_1}(\mathfrak{p}_2))$ and $\Phi_2(U_2^{r_2}(\mathfrak{p}_1),V_2^{m_2}(\mathfrak{p}_2))$, where $(U_i^{r_i},V_i^{m_i})\in\mathbb{W}^{r_i}\times\mathbb{W}^{m_i}$. 
  If $U_1^{r_1}\ne U_2^{r_2}$ or $V_1^{m_1}\ne V_2^{r_2}$, we show the orthogonality using \eqref{orth_compon}. 
  To recognize this, we notice that a dot product of two tensors can be rewritten as 
  \begin{align}
    R\cdot S=R_{j_1\ldots j_k}S_{j'_1\ldots j'_k}\delta_{j_1j'_1}\ldots\delta_{j_kj'_k}=(R\otimes S)\cdot Z, \label{up1}
  \end{align}
  where the tensor $Z$ is composed by those $\delta$. 
  In the same way, the inner product $(\Phi_1,\Phi_2)$ can be written in the following form, 
  \begin{align*}
    (\Phi_1,\Phi_2)
    =&\int U_1^{r_1}(\mathfrak{p}_1)\otimes U_2^{r_2}(\mathfrak{p}_1)\otimes V_1^{m_1}(\mathfrak{p}_2)\otimes V_2^{m_2}(\mathfrak{p}_2)\cdot Z\md\mathfrak{p}_1\md \mathfrak{p}_2\\
    =& \left(\int U_1^{r_1}(\mathfrak{p}_1)\otimes U_2^{r_2}(\mathfrak{p}_1)\md\mathfrak{p}_1\right)\otimes \left(\int V_1^{m_1}(\mathfrak{p}_2)\otimes V_2^{m_2}(\mathfrak{p}_2)\md \mathfrak{p}_2\right)\cdot Z, 
  \end{align*}
  where $Z$ is some constant tensor. 
  In the case of $U_1^{r_1}\ne U_2^{r_2}$ or $V_1^{m_1}\ne V_2^{r_2}$, at least one of the two integrals is zero, so $(\Phi_1,\Phi_2)=0$. 
  Since the terms in \eqref{orthterms} are linear combinations of the terms in \eqref{terms}, we deduce the orthogonality if the tensor pairs are not identical in two terms. 
  
  Next, we consider the case $U_1^{r_1}=U_2^{r_2}=U^r$ and $V_1^{m_1}=V_2^{m_2}=V^m$. 
  Under this assumption, the different terms in the \eqref{orthterms} must have different $q$. 
  We shall use the following fact: for a symmetric traceless tensor $X$ no less than second order, we have
  $$
  (\mathfrak{i}^qX)\overset{2q+2}{\cdot} \mathfrak{i}^{q+1}=a\mathrm{tr}X=0, 
  $$
  where $a$ is some constant. Therefore, when calculating the inner product $(\Phi_1,\Phi_2)$, we can verify that the integrand will be zero. 
\end{proof}


Finally, we state the approximation result for $\mathscr{M}_2^k$.
\begin{theorem}\label{approx_M2}
  The functions given in \eqref{orthterms} form a complete orthogonal basis in the sense that: if $\mathscr{M}_2^k\in L^2$, then we have
  \begin{align}
    \lim_{n\to\infty}\min_{A\in \mathrm{span}\mathbb{X}^{k}_{n,(-1)^k}}\|\mathscr{M}_2^k-A\|^2=0. 
  \end{align}
  Moreover, assume that $\mathscr{M}_2^k\in H^s$. 
  For $n>k$, the approximation error by the subspace $\mathrm{span}\mathbb{X}^{k}_{n,(-1)^k}$ satisfies 
  \begin{align}
    \min_{A\in \mathrm{span}\mathbb{X}^{k}_{n,(-1)^k}}\|\mathscr{M}_2^k-A\|^2\le C(k)(n-k)^{-s}|\mathscr{M}_2^k|_{H^s}. \label{Hsest}
  \end{align}
  where the constant $C(k)$ only depends on $k$. 
\end{theorem}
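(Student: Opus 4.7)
The plan is to reduce the tensor approximation problem to the scalar projection result in Proposition \ref{approxbas}. From the computation preceding Theorem \ref{twoforms}, the rotation invariance \eqref{rotate0} implies that, for each triple $(k_1,k_2,k_3)$ with $k_1+k_2+k_3=k$, the scalar coefficient
\begin{equation*}
c_{k_1,k_2,k_3}(\mathfrak{p}_1,\mathfrak{p}_2):=\mathscr{M}_2^k(\mathfrak{p}_1,\mathfrak{p}_2)\cdot \bm{m}_1^{k_1}\bm{m}_2^{k_2}\bm{m}_3^{k_3}(\mathfrak{p}_1)
\end{equation*}
depends only on $\mathfrak{p}_1^{-1}\mathfrak{p}_2$, so it may be regarded as a function $\tilde c_{k_1,k_2,k_3}$ on $SO(3)$. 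Thus the approximation of the tensor-valued $\mathscr{M}_2^k$ reduces to the simultaneous approximation of the finite family of $\binom{k+2}{2}$ scalar functions $\tilde c_{k_1,k_2,k_3}$.

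For the completeness claim, I would expand each $\tilde c_{k_1,k_2,k_3}$ in the complete orthogonal basis $\{w^m_{ij}\}$ of $L^2(SO(3))$ given by Proposition \ref{orth_basis_mat}. Substituting the identity $w^m_{ij}(\mathfrak{p}_1^{-1}\mathfrak{p}_2)=W^m_i(\mathfrak{p}_1)\cdot W^m_j(\mathfrak{p}_2)$ and the decomposition \eqref{symtrlsdecomp} applied to $Y_1^k\otimes W^m_i$ (where $Y_1^k=\bm{m}_1^{k_1}\bm{m}_2^{k_2}\bm{m}_3^{k_3}$) expresses the resulting tensor-valued series as a linear combination of the terms listed in \eqref{terms}. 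Because $\mathscr{M}_2^k$ already lies in the $(-1)^k$-component with respect to the orthogonal splitting $\mathbb{X}^k_{n,+1}\oplus\mathbb{X}^k_{n,-1}$ (as shown before Theorem \ref{orth_M2} using \eqref{switch0}), the projection onto $\mathrm{span}\mathbb{X}^k_{n,(-1)^k}$ coincides with the projection onto $\mathrm{span}\mathbb{X}^k_n$, and the tail of the scalar expansion goes to zero in $L^2$, proving the first assertion.

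For the $H^s$ estimate, I would choose the scalar cutoff $N=n-k$ and replace each $\tilde c_{k_1,k_2,k_3}$ by $\pi_N\tilde c_{k_1,k_2,k_3}$, defining the tensor-valued approximation $\mathscr{M}_{2,N}^k$ accordingly. The decomposition \eqref{symtrlsdecomp} of $Y_1^k\otimes W^m_i$ produces symmetric traceless factors $U^r$ of order at most $k+m$, since every contraction by $\delta$ or $\epsilon$ strictly reduces the tensor order; therefore every surviving term in $\mathscr{M}_{2,N}^k$ has $r\le k+N=n$ and $m\le N\le n$, placing $\mathscr{M}_{2,N}^k$ inside $\mathrm{span}\mathbb{X}^k_n$, and the switching symmetry \eqref{switch00} is inherited because $\pi_N$ commutes with the involution $(\mathfrak{p}_1,\mathfrak{p}_2)\mapsto(\mathfrak{p}_2,\mathfrak{p}_1)$. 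Applying Proposition \ref{approxbas} to each scalar coefficient and summing the contributions—with elementary bounds $|\tilde c_{k_1,k_2,k_3}|_{H^s}\le C'(k)|\mathscr{M}_2^k|_{H^s}$ coming from expressing the differential operators $\mathcal{L}_i$ on the coefficient through those on $\mathscr{M}_2^k$ and the fixed frame vectors $\bm{m}_i(\mathfrak{p}_1)$—produces a bound of the form $C(k)(n-k)^{-s}|\mathscr{M}_2^k|_{H^s}$. Since the minimum over $\mathrm{span}\mathbb{X}^k_{n,(-1)^k}$ cannot exceed the error of this explicit approximation, \eqref{Hsest} follows.

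The main obstacle is the bookkeeping linking the scalar truncation level $N$ to the tensor-order cutoff $n$: one must be certain that the symmetric-traceless decomposition applied to $Y_1^k\otimes W^m_i$ genuinely caps the top tensor order at $k+m$, so that $N=n-k$ is indeed sufficient. A secondary technical point is the verification that the approximation $\mathscr{M}_{2,N}^k$ remains in the switching-symmetric subspace $\mathrm{span}\mathbb{X}^k_{n,(-1)^k}$ rather than merely in $\mathrm{span}\mathbb{X}^k_n$; this is most cleanly argued via the orthogonality of $\mathscr{M}_2^k$ to $\mathbb{X}^k_{n,(-1)^{k+1}}$ already established before Theorem \ref{orth_M2}, which ensures the orthogonal projection has the desired parity automatically.
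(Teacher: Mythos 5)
Your proposal is correct and follows essentially the same route as the paper: both reduce the problem to the scalar projection $\pi_{n-k}$ applied to the frame coefficients $\mathscr{M}_2^k\cdot\bm{m}_1^{k_1}\bm{m}_2^{k_2}\bm{m}_3^{k_3}(\mathfrak{p}_1)$, which are functions of $\mathfrak{p}_1^{-1}\mathfrak{p}_2$, invoke Proposition \ref{approxbas}, and use the order cap $r\le m+k\le n$ to place the resulting approximant in $\mathrm{span}\mathbb{X}^k_n$. The only point of divergence is how the switching parity is restored: the paper explicitly symmetrizes its approximant to $\tfrac{1}{2}\big(A(\mathfrak{p}_1,\mathfrak{p}_2)+(-1)^kA(\mathfrak{p}_2,\mathfrak{p}_1)\big)$ and controls the error by the triangle inequality, whereas you use the orthogonality of $\mathscr{M}_2^k$ to $\mathbb{X}^{k}_{n,(-1)^{k+1}}$ so that the minimum over all of $\mathrm{span}\mathbb{X}^k_n$ already coincides with the minimum over the parity-correct subspace; both arguments are valid and equally short (and your projection argument is the cleaner of the two, rightly preferred over the looser ``$\pi_N$ commutes with the involution'' remark).
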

\begin{proof}
We only show the inequality \eqref{Hsest}. 
We need to return to the form \eqref{term0}. 
Recall that $\mathscr{M}_2^k\cdot \bm{m}_1^{k_1}(\mathfrak{p}_1)\bm{m}_2^{k_2}(\mathfrak{p}_1)\bm{m}_3^{k_3}(\mathfrak{p}_1)$ is a function of $\mathfrak{p}_1^{-1}\mathfrak{p}_2$. 
Also recall that the projection operator of a scalar function on $SO(3)$, $\pi_N$, is defined in \eqref{projscal}. 
By Proposition \ref{approxbas}, we have
\begin{align*}
&\|\mathscr{M}_2^k\cdot \bm{m}_1^{k_1}(\mathfrak{p}_1)\bm{m}_2^{k_2}(\mathfrak{p}_1)\bm{m}_3^{k_3}(\mathfrak{p}_1)-\pi_{n-k}\big(\mathscr{M}_2^k\cdot \bm{m}_1^{k_1}(\mathfrak{p}_1)\bm{m}_2^{k_2}(\mathfrak{p}_1)\bm{m}_3^{k_3}(\mathfrak{p}_1)\big)\|\\
\le &C(k)(n-k)^s|\mathscr{M}_2^k\cdot \bm{m}_1^{k_1}(\mathfrak{p}_1)\bm{m}_2^{k_2}(\mathfrak{p}_1)\bm{m}_3^{k_3}(\mathfrak{p}_1)|_{H^s}. 
\end{align*}
Let us denote
\begin{align}
  A(\mathfrak{p}_1,\mathfrak{p}_2)=\sum_{k_1+k_2+k_3=k}\bm{m}_1^{k_1}(\mathfrak{p}_1)\bm{m}_2^{k_2}(\mathfrak{p}_1)\bm{m}_3^{k_3}(\mathfrak{p}_1)\frac{k!}{k_1!k_2!k_3!}\pi_{n-k}\big(\mathscr{M}_2^k\cdot \bm{m}_1^{k_1}(\mathfrak{p}_1)\bm{m}_2^{k_2}(\mathfrak{p}_1)\bm{m}_3^{k_3}(\mathfrak{p}_1)\big). \label{approx_component}
\end{align}
Thus, we deduce that
\begin{align}
  \|\mathscr{M}_2^k-A\| \le  C(k)(n-k)^s|\mathscr{M}_2^k|_{H^s}. \label{approx0}
\end{align}
Note that $A(\mathfrak{p}_1,\mathfrak{p}_2)$ can be written in the form 
\begin{align}
  A(\mathfrak{p}_1,\mathfrak{p}_2)=\sum \bm{m}_1^{k_1}(\mathfrak{p}_1)\bm{m}_2^{k_2}(\mathfrak{p}_1)\bm{m}_3^{k_3}(\mathfrak{p}_1)\big(V_1^m(\mathfrak{p}_1)\cdot V^m(\mathfrak{p}_2)\big),\nonumber
\end{align}
where in the summation the symmetric traceless tensors $V_1^m$ and $V^m$ have the same order $m\le n-k$. 
Using Theorem \ref{twoforms}, the above form can be expressed linearly by the terms in \eqref{terms}. 
By the definition of the norm \eqref{norm_tensor} and the property \eqref{switch0} of $\mathscr{M}_2^k$, we obtain another approximation with the same error, 
\begin{align}
  \|\mathscr{M}_2^k(\mathfrak{p}_1,\mathfrak{p}_2)-(-1)^kA(\mathfrak{p}_2,\mathfrak{p}_1)\|^2=&\int \|\mathscr{M}_2^k(\mathfrak{p}_1,\mathfrak{p}_2)-(-1)^kA(\mathfrak{p}_2,\mathfrak{p}_1)\|_F^2\md\mathfrak{p}_1\md \mathfrak{p}_2\nonumber\\
  =& \int \|(-1)^k\mathscr{M}_2^k(\mathfrak{p}_2,\mathfrak{p}_1)-(-1)^kA(\mathfrak{p}_2,\mathfrak{p}_1)\|_F^2\md\mathfrak{p}_1\md \mathfrak{p}_2\nonumber\\
  =& \int \|(-1)^k\mathscr{M}_2^k(\mathfrak{p}_1,\mathfrak{p}_2)-(-1)^kA(\mathfrak{p}_1,\mathfrak{p}_2)\|_F^2\md\mathfrak{p}_2\md \mathfrak{p}_1\nonumber\\
  =&\|\mathscr{M}_2^k(\mathfrak{p}_1,\mathfrak{p}_2)-A(\mathfrak{p}_1,\mathfrak{p}_2)\|^2. \nonumber
\end{align}
We notice that \eqref{terms} requires $r\le m+k\le n$. 
Therefore, $A(\mathfrak{p}_1,\mathfrak{p}_2)+(-1)^kA(\mathfrak{p}_2,\mathfrak{p}_1)$ belongs to $\mathrm{span}\mathbb{X}^{k}_{n,(-1)^k}$. 
So we arrive at 
\begin{align}
  &\|\mathscr{M}_2^k(\mathfrak{p}_1,\mathfrak{p}_2)-\frac{1}{2}\big(A(\mathfrak{p}_1,\mathfrak{p}_2)+(-1)^kA(\mathfrak{p}_2,\mathfrak{p}_1)\big)\|\nonumber\\
  \le &\frac{1}{2}\Big(\|\mathscr{M}_2^k(\mathfrak{p}_1,\mathfrak{p}_2)-A(\mathfrak{p}_1,\mathfrak{p}_2)\|+\|\mathscr{M}_2^k(\mathfrak{p}_1,\mathfrak{p}_2)-(-1)^kA(\mathfrak{p}_2,\mathfrak{p}_1)\|\Big)\nonumber\\
  =&\|\mathscr{M}_2^k(\mathfrak{p}_1,\mathfrak{p}_2)-A(\mathfrak{p}_1,\mathfrak{p}_2)\|. 
\end{align}
Together with \eqref{approx0}, we get the result. 
\end{proof}



\subsection{Clusters of three or more molecules}
We turn to the interaction kernels for clusters of three or more molecules, 
for which the same procedure of dealing with $\mathscr{M}_2^k$ is carried out. 
The invariance when the whole cluster is displaced or rotated requires 
\begin{align*}
  &\mathscr{G}_l(\mathfrak{t}\bm{r}_2,\ldots,\mathfrak{t}\bm{r}_l,\mathfrak{tp}_1,\ldots,\mathfrak{tp}_l)=\mathscr{G}_l(\bm{r}_2,\ldots,\bm{r}_l,\mathfrak{p}_1,\ldots,\mathfrak{p}_l),\quad \mathfrak{t}\in SO(3), 
\end{align*}
yielding $\mathscr{M}_l^{k_2,\ldots,k_l}(\mathfrak{tp}_1,\ldots,\mathfrak{tp}_l)=\mathscr{M}_l^{k_2,\ldots,k_l}(\mathfrak{p}_1,\ldots,\mathfrak{p}_l)$.
Therefore, $\mathscr{M}_l^{k_2,\ldots,k_l}$ are functions of $\mathfrak{p}_1^{-1}\mathfrak{p}_j$ for $j=2,\ldots,l$. 
We could then expand $\mathscr{M}_l^{k_2,\ldots,k_l}$ about these variables like in \eqref{term0}, and decompose the tensor $Y(\mathfrak{p}_1)$ into a symmetric traceless tensor. 
As a result, we obtain some terms given by multi-linear maps from $l$ symmetric traceless tensors to another tensor (cf. \eqref{up1}), 
\begin{align}
  \varPhi\big(U_1(\mathfrak{p}_1),\ldots,U_l(\mathfrak{p}_l)\big)=\big(U_1(\mathfrak{p}_1)\otimes\ldots\otimes U_l(\mathfrak{p}_l)\big)_{i_1\ldots i_s}Z_{i_{\tau_1}\ldots i_{\tau_w}j_1\ldots j_t},\nonumber\\
  s-w+t=k_2+\ldots k_l, \label{UZ}
\end{align}
where $Z$ is a tensor containing some $\delta$ and $\epsilon$. 
The terms in \eqref{terms}, giving bilinear maps about symmetric traceless tensors $U^r(\mathfrak{p}_1)$ and $V^m(\mathfrak{p}_2)$, are actually a special case of \eqref{UZ}.
Following the arguments in the proof of Theorem \ref{orth_M2}, we immediately obtain the orthogonality if the tensors are not identical in $\varPhi$. 
\begin{theorem}\label{expansion_tensor}
  $\varPhi\big(U_1(\mathfrak{p}_1),\ldots,U_l(\mathfrak{p}_l)\big)$ and $\varPhi'\big(U'_1(\mathfrak{p}_1),\ldots,U'_l(\mathfrak{p}_l)\big)$ are orthogonal if $U_i\cdot U'_i=0$ for some $i$. 
\end{theorem}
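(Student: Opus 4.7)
The plan is to mimic, almost verbatim, the orthogonality argument in the proof of Theorem \ref{orth_M2}, now applied to a tensor-valued integral in $l$ independent rotation variables instead of two. The central observation is that the multilinear form $\varPhi$ depends on $\mathfrak{p}_1,\ldots,\mathfrak{p}_l$ only through the factors $U_1(\mathfrak{p}_1),\ldots,U_l(\mathfrak{p}_l)$, while the index-contraction pattern is encoded in a constant tensor $Z$. Hence, upon taking the inner product with a second form $\varPhi'$ whose contraction pattern is encoded by $Z'$, all combinatorial data of both forms can be packaged into a single constant tensor $\tilde Z$ that is independent of any $\mathfrak{p}_i$.

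First, I would apply the identity \eqref{up1} on both $\varPhi$ and $\varPhi'$ to convert every Kronecker-delta contraction (dot product) inside the inner product $(\varPhi,\varPhi')$ into an outer tensor product contracted against a constant tensor. After this reorganisation, the inner product assumes the schematic form
\begin{equation*}
(\varPhi,\varPhi')=\Bigl(\int U_1(\mathfrak{p}_1)\otimes U'_1(\mathfrak{p}_1)\,\md\mathfrak{p}_1\Bigr)\otimes\cdots\otimes\Bigl(\int U_l(\mathfrak{p}_l)\otimes U'_l(\mathfrak{p}_l)\,\md\mathfrak{p}_l\Bigr)\cdot \tilde Z,
\end{equation*}
by Fubini, since the $\mathfrak{p}_i$-variables are independent and $\tilde Z$ carries no rotational dependence. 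Thus the inner product factorises into a tensor product of $l$ single-variable integrals of the form $\int U_i(\mathfrak{p})\otimes U'_i(\mathfrak{p})\,\md\mathfrak{p}$, each fully contracted against the appropriate slots of $\tilde Z$.

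Next, I would fix the index $i$ for which $U_i\cdot U'_i=0$. If $U_i$ and $U'_i$ have the same order, this hypothesis expresses that they are distinct elements of an orthogonal basis $\mathbb{W}^{r}$, while if their orders differ it is vacuously built into the convention. In either case, the argument establishing \eqref{orth_compon} in the proof of Theorem \ref{orth_M2}, namely rewriting each component as $W^r_j(\mathfrak{p})\cdot(\bm{e}_{i_1}\ldots\bm{e}_{i_r})_0$ and expanding the latter in $\mathbb{W}^r$, combined with Proposition \ref{orth_basis_mat}, yields $\int U_i(\mathfrak{p})\otimes U'_i(\mathfrak{p})\,\md\mathfrak{p}=0$ as a tensor identity. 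Since this vanishing factor appears in the tensor product above, the entire contraction with $\tilde Z$ is zero, whence $(\varPhi,\varPhi')=0$.

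The only real bookkeeping obstacle is verifying the factorisation step in full generality: the tensors $Z$ and $Z'$ may entangle indices coming from different $\mathfrak{p}_j$-slots, and one has to confirm that after grouping the indices by their originating rotation variable, the remaining combinatorial data is a genuine constant tensor $\tilde Z$ and not something that couples the factor integrals. This is essentially the same manoeuvre that was carried out between the two displays in the proof of Theorem \ref{orth_M2} for $l=2$; extending it to general $l$ requires only a careful relabelling of indices, after which the argument closes immediately.
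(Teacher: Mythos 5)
Your proposal is correct and follows essentially the same route as the paper: the paper proves this theorem by pointing back to the argument in the proof of Theorem \ref{orth_M2}, i.e.\ rewriting the contractions via \eqref{up1} against a constant tensor, factorizing the integral over the independent $\mathfrak{p}_i$ into single-variable factors $\int U_i(\mathfrak{p})\otimes U'_i(\mathfrak{p})\,\md\mathfrak{p}$, and killing one factor with \eqref{orth_compon}. Nothing further is needed.
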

It is also straightforward to state the approximation results like Theorem \ref{approx_M2} when requiring $U_i$ to be no greater than $n$-th order, which we would omit here. 

The difficulty in expanding $\mathscr{M}_l^{k_2,\ldots,k_l}$ is to identify linearly independent terms.
For the same $(U_1,\ldots,U_l)$ there are multiple terms that might have complicated linear relations, especially when combined with the arguments of switching labels (cf. \eqref{switch00}). 
We shall discuss two cases, $\mathscr{M}_3^{0,0}$ and $\mathscr{M}_4^{0,0,0}$, which are expected to be important in applications. 

\subsubsection{Expansion of $\mathscr{M}_3^{0,0}$}
To prepare for our discussion, we introduce the notation $\mathpzc{a}_3$ for a scalar by contracting indices of three symmetric traceless tensors $(U_1^{n_1},U_2^{n_2},U_3^{n_3})$, 
\begin{subequations}\label{terms_a3}
\begin{align}
  &\mathpzc{a}_3(U_1^{n_1},U_2^{n_2},U_3^{n_3};l_{12},l_{13},l_{23})\nonumber\\
  =&(U_1^{n_1})_{i^{(12)}_1\ldots i^{(12)}_{l_{12}}i^{(13)}_1\ldots i^{(13)}_{l_{13}}}(U_2^{n_2})_{i^{(12)}_1\ldots i^{(12)}_{l_{12}}i^{(23)}_1\ldots i^{(23)}_{l_{23}}}(U_3^{n_3})_{i^{(13)}_1\ldots i^{(13)}_{l_{13}}i^{(23)}_1\ldots i^{(23)}_{l_{23}}}, \label{noeps}\\
  &\mathpzc{a}_3(U_1^{n_1},U_2^{n_2},U_3^{n_3};l_{12},l_{13},l_{23},(123))\nonumber\\
  =&\epsilon_{j_1j_2j_3}(U_1^{n_1})_{j_1i^{(12)}_1\ldots i^{(12)}_{l_{12}}i^{(13)}_1\ldots i^{(13)}_{l_{13}}}(U_2^{n_2})_{j_2i^{(12)}_1\ldots i^{(12)}_{l_{12}}i^{(23)}_1\ldots i^{(23)}_{l_{23}}}(U_3^{n_3})_{j_3i^{(13)}_1\ldots i^{(13)}_{l_{13}}i^{(23)}_1\ldots i^{(23)}_{l_{23}}}. \label{eps}
\end{align}
\end{subequations}
The nonnegative integers $l_{ij}$ represent the number of indices contracted between $U_i^{n_i}$ and $U_j^{n_j}$, and $(\tau_1\tau_2\tau_3)=(123)$ means that there is an $\epsilon_{j_1j_2j_3}$ such that $j_1$ appears in $U_{\tau_1}^{n_{\tau_1}}=U_1^{n_1}$, $j_2$ appears in $U_2^{n_2}$, and $j_3$ appears in $U_3^{n_3}$. 
The parameters $l_{ij}$ here are actually redundant. Actually, we have 
\begin{align}
  l_{12}+l_{13}=n_1,\  l_{12}+l_{23}=n_2,\  l_{13}+l_{23}=n_3, \label{tn3_1}
\end{align}
in \eqref{noeps}, where we require that $K=n_1+n_2+n_3$ is even and $K\ge 2n_i$ for $i=1,2,3$. 
Similarly, we have 
\begin{align}
  l_{12}+l_{13}=n_1-1,\  l_{12}+l_{23}=n_2-1,\  l_{13}+l_{23}=n_3-1, \label{tn3_2}
\end{align}
in \eqref{eps}, where we require that $n_i\ge 1$, $K=n_1+n_2+n_3$ is odd, and $K\ge 2n_i+1$. 
However, we still keep $l_{ij}$ in the expression, because we will use similar notations for four tensors. 
It is noticed that when permutating the three tensors in $\mathpzc{a}_3$, we could get some identical or opposite terms, such as
\begin{subequations}\label{a3swlb}
\begin{align}
  \mathpzc{a}_3(U_2^{n_2},U_1^{n_1},U_3^{n_3};l_{12},l_{23},l_{13}) 
  =&\mathpzc{a}_3(U_1^{n_1},U_2^{n_2},U_3^{n_3};l_{12},l_{13},l_{23}), \\
  \mathpzc{a}_3(U_2^{n_2},U_1^{n_1},U_3^{n_3};l_{12},l_{23},l_{13},(123))
  =&-\mathpzc{a}_3(U_1^{n_1},U_2^{n_2},U_3^{n_3};l_{12},l_{13},l_{23},(123)). 
\end{align}
\end{subequations}
Thus, once the three tensors are chosen, we can fix how they are arranged in $\mathpzc{a}_3$. 
In particular, if two tensors are identical in \eqref{eps}, the term equals to zero. 

Now we are ready to expand $\mathscr{M}_3^{0,0}$. As we have mentioned, it is a function of $\mathfrak{p}_1^{-1}\mathfrak{p}_2$ and $\mathfrak{p}_1^{-1}\mathfrak{p}_3$.
When expanding about these two variables, the resulting terms can be written as 
\begin{align}
  \big(Y_2^{n_2}(\mathfrak{p}_1)\cdot U_2^{n_2}(\mathfrak{p}_2)\big)\big(Y_3^{n_3}(\mathfrak{p}_1)\cdot U_3^{n_3}(\mathfrak{p}_3)\big)
  =\big(Y_2^{n_2}(\mathfrak{p}_1)\otimes Y_3^{n_3}(\mathfrak{p}_1)\big)\cdot \big(U_2^{n_2}(\mathfrak{p}_2)\otimes U_3^{n_3}(\mathfrak{p}_3)\big), \label{term0_a3}
\end{align}
where $Y_i^{n_i}$ and $U_i^{n_i}$ are symmetric traceless tensors.
After we decompose $Y_2^{n_2}\otimes Y_3^{n_3}$ into symmetric traceless tensors, the above terms can be expressed linearly by terms in \eqref{UZ} where $l=3$, $s=w$ and $t=0$. 
Using the notation $\mathpzc{a}_3$, they are given by 
\begin{subequations}\label{terms_a3p}
\begin{align}
  &\mathpzc{a}_3\big(U_1^{n_1}(\mathfrak{p}_{1}),U_2^{n_2}(\mathfrak{p}_{2}),U_3^{n_3}(\mathfrak{p}_{3});l_{12},l_{13},l_{23}\big),\label{a3p_noeps}\\
  &\mathpzc{a}_3\big(U_1^{n_1}(\mathfrak{p}_{1}),U_2^{n_2}(\mathfrak{p}_{2}),U_3^{n_3}(\mathfrak{p}_{3});l_{12},l_{13},l_{23},(123)\big). \label{a3p_eps}
\end{align}
\end{subequations}
Similar to Theorem \ref{twoforms}, let us fix $U_2^{n_2}$ and $U_3^{n_3}$, and examine the linearly independent terms, by comparing \eqref{term0_a3} and \eqref{terms_a3p}.
The former can be expressed linearly by the latter. 
On the other hand, because $l_{ij}\ge 0$ in \eqref{tn3_1} and \eqref{tn3_2}, the tensor order $n_1$ in \eqref{terms_a3p} ranges from $|n_2-n_3|$ to $n_2+n_3$.
Once $n_1$ is determined, the $l_{ij}$ are also determined, and $U_1^{n_1}$ has $2n_1+1$ choices. 
Hence, the total number of choices of $U_1^{n_1}$ is 
\begin{align*}
  \sum_{r=|n_2-n_3|}^{n_2+n_3}(2r+1)=(2n_2+1)(2n_3+1), 
\end{align*}
which is equal to the dimension of $Y_2^{n_2}\otimes Y_3^{n_3}$. 
Therefore, when we let $U_i^{n_i}$ be the tensors in $\mathbb{W}^{n_i}$, the terms given by \eqref{terms_a3p} are linearly independent. 

Then, we take the switching of labels into consideration.
It requires $\mathscr{M}_3^{0,0}(\mathfrak{p}_{\sigma(1)},\mathfrak{p}_{\sigma(2)},\mathfrak{p}_{\sigma(3)})=\mathscr{M}_3^{0,0}(\mathfrak{p}_1,\mathfrak{p}_2,\mathfrak{p}_3)$ for any permutation $\sigma$ of $1,2,3$. 
Thus, the expansion can only have the terms below, 
\begin{subequations}\label{a3perm}
\begin{align}
  &\sum_{\sigma}\mathpzc{a}_3(U_1^{n_1}(\mathfrak{p}_{\sigma(1)}),U_2^{n_2}(\mathfrak{p}_{\sigma(2)}),U_3^{n_3}(\mathfrak{p}_{\sigma(3)});l_{12},l_{13},l_{23}), \label{a3perm1}\\
  &\sum_{\sigma}\mathpzc{a}_3(U_1^{n_1}(\mathfrak{p}_{\sigma(1)}),U_2^{n_2}(\mathfrak{p}_{\sigma(2)}),U_3^{n_3}(\mathfrak{p}_{\sigma(3)});l_{12},l_{13},l_{23},(123)), \label{a3perm2}
\end{align}
\end{subequations}
where the $\sigma$ in the summation takes all the permutations. 
As we mentioned in \eqref{a3swlb}, the terms are invariant or become opposite when interchanging the three tensors $U_i^{n_i}$. 
In particular, in \eqref{a3perm2}, if any two of $U_i^{n_i}$ are identical, then the term vanishes. 


\subsubsection{Expansion of $\mathscr{M}_4^{0,0,0}$}

We expand $\mathscr{M}_4^{0,0,0}$ about three variables $\mathfrak{p}_1^{-1}\mathfrak{p}_j$ for $j=2,3,4$, to obtain the terms 
\begin{align}
  &\big(Y_2^{n_2}(\mathfrak{p}_1)\cdot U_2^{n_2}(\mathfrak{p}_2)\big)\big(Y_3^{n_3}(\mathfrak{p}_1)\cdot U_3^{n_3}(\mathfrak{p}_3)\big)\big(Y_4^{n_4}(\mathfrak{p}_1)\cdot U_4^{n_4}(\mathfrak{p}_4)\big)\nonumber\\
  &\qquad =\big(Y_2^{n_2}(\mathfrak{p}_1)\otimes Y_3^{n_3}(\mathfrak{p}_1)\otimes Y_4^{n_4}(\mathfrak{p}_1)\big)\cdot \big(U_2^{n_2}(\mathfrak{p}_2)\otimes U_3^{n_3}(\mathfrak{p}_3)\otimes U_4^{n_4}(\mathfrak{p}_4)\big). \label{M4expansion}
\end{align}
The decomposition of $Y_2^{n_2}\otimes Y_3^{n_3}\otimes Y_4^{n_4}$ into symmetric traceless tensors is followed. 
Similar to \eqref{terms_a3}, we use the notation $\mathpzc{a}_4$ for a scalar generated by contraction of four tensors,
\begin{subequations}\label{a4_def}
\begin{align}
  &\mathpzc{a}_4\Big(U_i^{n_i}\big|_{i=1}^4;l_{12},l_{13},l_{14},l_{23},l_{24},l_{34}\Big),\\
  &\mathpzc{a}_4\Big(U_i^{n_i}\big|_{i=1}^4;l_{12},l_{13},l_{14},l_{23},l_{24},l_{34},(\tau_1\tau_2\tau_3)\Big), 
\end{align}
\end{subequations}
where the integers $l_{ij}$ represent how many indices are contracted between $U_i^{n_i}$ and $U_j^{n_j}$; $(\tau_1\tau_2\tau_3)$ means that there is an $\epsilon$ to contract the indices in the way $\epsilon_{j_1j_2j_3}(U_{\tau_1}^{n_{\tau_1}})_{j_1\ldots}(U_{\tau_2}^{n_{\tau_2}})_{j_2\ldots}(U_{\tau_3}^{n_{\tau_3}})_{j_3\ldots}$.
These nonnegative integers shall satisfy 
\begin{align}
  \sum_{j=i+1}^4l_{ij}+\sum_{j=1}^{i-1}l_{ji}=n_i-b_i,\quad b_i=\left\{
  \begin{array}{ll}
    1,& i=\tau_1,\tau_2,\text{ or }\tau_3,\\
    0,& \text{otherwise}. 
  \end{array}
  \right.\label{lrelation}
\end{align}
As we explained in \eqref{a3swlb}, when permutating the tensors $U_i^{n_i}$, some terms are identical or opposite. 

Together with the symmetry of switching the labels, we eventually obtain the terms
\begin{subequations}\label{terms_a4}
\begin{align}
  &\sum_{\sigma}\mathpzc{a}_4\Big(U_i^{n_i}(\mathfrak{p}_{\sigma(i)})\big|_{i=1}^4;l_{ij}\big|_{1\le i<j\le 4}\Big),\label{terms_a4_1}\\
  &\sum_{\sigma}\mathpzc{a}_4\Big(U_i^{n_i}(\mathfrak{p}_{\sigma(i)})\big|_{i=1}^4;l_{ij}\big|_{1\le i<j\le 4},(\tau_1\tau_2\tau_3)\Big),\label{terms_a4_2}
\end{align}
\end{subequations}
where $l_{ij}$ satisfy \eqref{lrelation}. 
However, unlike the cases we discussed above, the terms in \eqref{terms_a4} still have linear relations. 
Below, we write down the linearly independent terms. 
Denote $K=n_1+n_2+n_3+n_4$ and $D=n_1+n_2-n_3-n_4$.
Note that in \eqref{terms_a4_1} $K$ and $D$ are even with $K\ge 2n_i$, while in \eqref{terms_a4_2} $K$ and $D$ are odd and $K\ge 2n_i+1$.
\begin{enumerate}
\item The four tensors $U_i^{n_i}$ are mutually unequal. 
  \begin{itemize}
  \item For \eqref{terms_a4_1}, $D$ is even. If $D\le 0$, we require $l_{12}\le 1$; if $D\ge 0$ we require $l_{34}\le 1$. Notice that when $D=0$, by \eqref{lrelation} we have $n_1+n_2-n_3-n_4=2l_{12}-2l_{34}=0$.
  \item For \eqref{terms_a4_2}, $D$ is odd. If $D\le -1$, we let $(\tau_1\tau_2\tau_3)=(134),(234)$ and $l_{12}=0$; if $D\ge 1$, we let $(\tau_1\tau_2\tau_3)=(123),(124)$ and $l_{34}=0$. 
  \end{itemize}
\item Two tensors are equal, but they are not equal to the other two. We place these two tensors in the first two, i.e. $U_1^{n_1}=U_2^{n_2}$. 
  \begin{itemize}
  \item For \eqref{terms_a4_1}, if $D\le 0$, we require $l_{12}\le 1$ and $l_{13}\le l_{23}$; if $D\ge 0$ we require $l_{34}\le 1$ and $l_{13}\le l_{23}$. 
  \item For \eqref{terms_a4_2}, if $D\le -1$, we let $(\tau_1\tau_2\tau_3)=(134)$ and $l_{12}=0$; if $D\ge 1$, we let $(\tau_1\tau_2\tau_3)=(123),(124)$ and $l_{34}=0$, $l_{13}<l_{23}$. 
  \end{itemize}
  For the case $U_1^{n_1}=U_2^{n_2}$ and $U_3^{n_3}=U_4^{n_4}$, only \eqref{terms_a4_1} appears since $D$ is even. The above conditions still apply. 
\item Three tensors are equal. We let $U_1^{n_1}=U_2^{n_2}=U_3^{n_3}$. 
  \begin{itemize}
  \item For \eqref{terms_a4_1}, we require $l_{12}=l_{13}\le l_{23}$. 
  \item For \eqref{terms_a4_2}, let $(\tau_1\tau_2\tau_3)=(124)$ and we look at $D=n_1-n_4$. If $D\le -1$, we require $l_{12}=l_{13}< l_{23}$; if $D\ge 1$, we require $l_{34}=l_{24}<l_{14}$. 
  \end{itemize}
  If four tensors are equal, only \eqref{terms_a4_1} appears and the above conditions still apply. 
\end{enumerate}
The derivation is tedious and is left to Appendix. 
Here, we explain the conditions stated above by a couple of examples.
We consider the case 3 with $n_1=n_2=n_3=3$, and discuss two cases: $n_4=3$ and $n_4=1$.
\begin{itemize}
\item $n_4=3$. From \eqref{lrelation}, we derive that $2l_{34}-2l_{12}=n_3+n_4-n_1-n_2=0$. So we have $l_{12}=l_{34}$, $l_{13}=l_{24}$, $l_{14}=l_{23}$. We also deduce from \eqref{lrelation} that 
  \begin{align*}
    2(l_{12}+l_{13}+l_{23}+l_{14}+l_{24}+l_{34})=n_1+n_2+n_3+n_4. 
  \end{align*}
  It implies that $l_{12}+l_{13}+l_{23}=3$. Therefore, with the condition $l_{12}=l_{13}\le l_{23}$, we find two choices $(l_{12},l_{13},l_{23})=(0,0,3),\,(1,1,1)$.
\item $n_4=1$. Similarly, we can derive that $l_{12}-l_{34}=l_{13}-l_{24}=l_{14}-l_{23}=1$. Since $l_{ij}\ge 0$, we need $l_{12},l_{13},l_{23}\ge 1$. We can also find that $l_{12}+l_{13}+l_{23}=4$, which only gives us one choice $(l_{12},l_{13},l_{23})=(1,1,2)$. 
\end{itemize}

\subsection{Summary of explicit expressions}
Here, we summarize the explicit formulae for $\mathscr{M}_2^k$ for $0\le k\le 4$, $\mathscr{M}_3^{0,0}$ and $\mathscr{M}_4^{0,0,0}$ in Table \ref{sum_term}. 
When taking these terms back into \eqref{TlExp}, the integrals $\md\mathfrak{p}_i$ are decoupled like what is done in \eqref{sepvar0}, leading to the terms in the free energy expressed by tensors that are also listed in Table \ref{sum_term}. 
We do not distinguish terms that coincide under integration by parts. 
For example, we consider 
\begin{align*}
&\int \big(U^{n-1}(\mathfrak{p}_1)\overset{n-1}{\cdot}V^n(\mathfrak{p}_2)-V^n(\mathfrak{p}_1)\overset{n-1}{\cdot}U^{n-1}(\mathfrak{p}_2)\big)f(\bm{x},\mathfrak{p}_1)f(\bm{x},\mathfrak{p}_2)\, \md\mathfrak{p}_1\md\mathfrak{p}_2\\
=&\langle \tsn{n-1}{U}\rangle_{i_1\ldots i_{n-1}}\partial_j\langle \tsn{n}{V}\rangle_{i_1\ldots i_{n-1}j}-\langle \tsn{n}{V}\rangle_{i_1\ldots i_{n-1}j}\partial_j\langle \tsn{n-1}{U}\rangle_{i_1\ldots i_{n-1}}\\ 
=&2\langle \tsn{n-1}{U}\rangle_{i_1\ldots i_{n-1}}\partial_j\langle \tsn{n}{V}\rangle_{i_1\ldots i_{n-1}j}-\partial_j\left(\langle \tsn{n}{V}\rangle_{i_1\ldots i_{n-1}j}\langle \tsn{n-1}{U}\rangle_{i_1\ldots i_{n-1}}\right). 
\end{align*}
When integrated about $\md\bm{x}$, the second term in the last line leads to a surface integral. In this sense, we regard $\langle \tsn{n-1}{U}\rangle_{i_1\ldots i_{n-1}}\partial_j\langle \tsn{n}{V}\rangle_{i_1\ldots i_{n-1}j}$ and $\langle \tsn{n}{V}\rangle_{i_1\ldots i_{n-1}j}\partial_j\langle \tsn{n-1}{U}\rangle_{i_1\ldots i_{n-1}}$ as the same term. 

\begin{table}
  \centering
  \caption{Linearly independent terms in the expansion and the corresponding terms in the free energy. All the tensors are symmetric traceless. The notation $\langle U\rangle$ represents the average of $U(\mathfrak{p})$ about the density $f(\bm{x},\mathfrak{p})$. \label{sum_term}}\small
  \begin{tabular}{c|c|c}\hline
  & Orientational expansion & Free energy\\\hline
  $\mathscr{M}_2^0$ & $U^n(\mathfrak{p}_1){\cdot}V^n(\mathfrak{p}_2)+V^n(\mathfrak{p}_1){\cdot}U^n(\mathfrak{p}_2)$ & 
  $\langle\tsn{n}{U}\rangle_{i_1\ldots i_n}\langle\tsn{n}{V}\rangle_{i_1\ldots i_n}=\tsn{n}{U}\cdot \tsn{n}{V}$ \\\hline
  $\mathscr{M}_2^1$ & $U^{n-1}(\mathfrak{p}_1)\overset{n-1}{\cdot}V^n(\mathfrak{p}_2)-V^n(\mathfrak{p}_1)\overset{n-1}{\cdot}U^{n-1}(\mathfrak{p}_2)$ & 
  $\langle \tsn{n-1}{U}\rangle_{i_1\ldots i_{n-1}}\partial_j\langle \tsn{n}{V}\rangle_{i_1\ldots i_{n-1}j}$\\
  & $U^n(\mathfrak{p}_1)\overset{n-1}{\times} V^n(\mathfrak{p}_2)+V^n(\mathfrak{p}_1)\overset{n-1}{\times} U^n(\mathfrak{p}_2)$ &
  $\epsilon_{ijk}\langle U^{n}\rangle_{i_1\ldots i_{n-1}i}\partial_k\langle V^{n}\rangle_{i_1\ldots i_{n-1}j}$\\\hline
  $\mathscr{M}_2^2$ & $\mathfrak{i}(U^n(\mathfrak{p}_1){\cdot}V^n(\mathfrak{p}_2)+V^n(\mathfrak{p}_1){\cdot}U^n(\mathfrak{p}_2))$ & 
  $\partial_j\langle U^n\rangle_{i_1\ldots i_n}\partial_j\langle V^n\rangle_{i_1\ldots i_n}$\\
  & $U^n(\mathfrak{p}_1)\overset{n-1}{\cdot}V^n(\mathfrak{p}_2)+V^n(\mathfrak{p}_1)\overset{n-1}{\cdot}U^n(\mathfrak{p}_2)$ & 
  $\partial_{j_1}\langle U^n\rangle_{i_1\ldots i_{n-1}j_1}\partial_{j_2}\langle V^n\rangle_{i_1\ldots i_{n-1}j_2}$ \\
  & $U^{n-2}(\mathfrak{p}_1)\overset{n-2}{\cdot}V^n(\mathfrak{p}_2)+V^n(\mathfrak{p}_1)\overset{n-2}{\cdot}U^{n-2}(\mathfrak{p}_2)$ & 
  $\partial_j\langle U^{n-2}\rangle_{i_1\ldots i_{n-2}}\partial_k\langle V^n\rangle_{i_1\ldots i_{n-2}jk}$\\
  & $U^n(\mathfrak{p}_1)\overset{n-2}{\times} V^{n-1}(\mathfrak{p}_2)-V^{n-1}(\mathfrak{p}_1)\overset{n-2}{\times} U^n(\mathfrak{p}_2)$ & 
  $\epsilon_{ijk}\partial_l\langle U^{n}\rangle_{i_1\ldots i_{n-2}il}\partial_k\langle V^{n-1}\rangle_{i_1\ldots i_{n-2}j}$\\\hline
  $\mathscr{M}_2^3$ & $\mathfrak{i}U^{n-1}(\mathfrak{p}_1)\overset{n-1}{\cdot}V^n(\mathfrak{p}_2)-\mathfrak{i}V^n(\mathfrak{p}_1)\overset{n-1}{\cdot}U^{n-1}(\mathfrak{p}_2)$ & 
  $\partial_{j_1}\langle \tsn{n-1}{U}\rangle_{i_1\ldots i_{n-1}}\partial_{j_1j_2}\langle \tsn{n}{V}\rangle_{i_1\ldots i_{n-1}j_2}$\\
  & $\mathfrak{i}U^n(\mathfrak{p}_1)\overset{n-1}{\times} V^n(\mathfrak{p}_2)+\mathfrak{i}V^n(\mathfrak{p}_1)\overset{n-1}{\times} U^n(\mathfrak{p}_2)$ & 
  $\epsilon_{ijk}\partial_{l}\langle U^{n}\rangle_{i_1\ldots i_{n-1}i}\partial_{kl}\langle V^{n}\rangle_{i_1\ldots i_{n-1}j}$\\
  & $U^{n-1}(\mathfrak{p}_1)\overset{n-2}{\cdot}V^n(\mathfrak{p}_2)-V^n(\mathfrak{p}_1)\overset{n-2}{\cdot}U^{n-1}(\mathfrak{p}_2)$ & 
  $\partial_{j_1}\langle \tsn{n-1}{U}\rangle_{i_1\ldots i_{n-2}j_1}\partial_{j_2j_3}\langle \tsn{n}{V}\rangle_{i_1\ldots i_{n-2}j_2j_3}$\\
  & $U^n(\mathfrak{p}_1)\overset{n-2}{\times} V^n(\mathfrak{p}_2)+V^n(\mathfrak{p}_1)\overset{n-2}{\times} U^n(\mathfrak{p}_2)$ & 
  $\epsilon_{ijk}\partial_{j_1}\langle U^{n}\rangle_{i_1\ldots i_{n-2}j_1i}\partial_{kj_2}\langle V^{n}\rangle_{i_1\ldots i_{n-2}j_2j}$\\
  & $U^{n-3}(\mathfrak{p}_1)\overset{n-3}{\cdot}V^n(\mathfrak{p}_2)-V^n(\mathfrak{p}_1)\overset{n-3}{\cdot}U^{n-3}(\mathfrak{p}_2)$ & 
  $\partial_{j_1}\langle \tsn{n-3}{U}\rangle_{i_1\ldots i_{n-3}}\partial_{j_2j_3}\langle \tsn{n}{V}\rangle_{i_1\ldots i_{n-2}j_1j_2j_3}$\\
  & $U^{n-2}(\mathfrak{p}_1)\overset{n-3}{\times} V^n(\mathfrak{p}_2)+V^n(\mathfrak{p}_1)\overset{n-3}{\times} U^{n-2}(\mathfrak{p}_2)$ & 
  $\epsilon_{ijk}\partial_{k}\langle U^{n-2}\rangle_{i_1\ldots i_{n-3}i}\partial_{j_1j_2}\langle V^{n}\rangle_{i_1\ldots i_{n-3}j_1j_2j}$\\\hline
  $\mathscr{M}_2^4$ & $\mathfrak{i}^2(U^n(\mathfrak{p}_1){\cdot}V^n(\mathfrak{p}_2)+V^n(\mathfrak{p}_1){\cdot}U^n(\mathfrak{p}_2))$ & 
  $\partial_{j_1j_2}\langle U^n\rangle_{i_1\ldots i_n}\partial_{j_1j_2}\langle V^n\rangle_{i_1\ldots i_n}$\\
  & $\mathfrak{i}U^n(\mathfrak{p}_1)\overset{n-1}{\cdot}V^n(\mathfrak{p}_2)+\mathfrak{i}V^n(\mathfrak{p}_1)\overset{n-1}{\cdot}U^n(\mathfrak{p}_2)$ & 
  $\partial_{j_1j_3}\langle U^n\rangle_{i_1\ldots i_{n-1}j_1}\partial_{j_2j_3}\langle V^n\rangle_{i_1\ldots i_{n-1}j_2}$ \\
  & $U^n(\mathfrak{p}_1)\overset{n-2}{\cdot}V^n(\mathfrak{p}_2)+V^n(\mathfrak{p}_1)\overset{n-2}{\cdot}U^n(\mathfrak{p}_2)$ & 
  $\partial_{j_1j_2}\langle U^n\rangle_{i_1\ldots i_{n-2}j_1j_2}\partial_{j_3j_4}\langle V^n\rangle_{i_1\ldots i_{n-2}j_3j_4}$ \\
  & $\mathfrak{i}U^{n-2}(\mathfrak{p}_1)\overset{n-2}{\cdot}V^n(\mathfrak{p}_2)+\mathfrak{i}V^n(\mathfrak{p}_1)\overset{n-2}{\cdot}U^{n-2}(\mathfrak{p}_2)$ & 
  $\partial_{j_1j_2}\langle U^{n-2}\rangle_{i_1\ldots i_{n-2}}\partial_{j_1j_3}\langle V^n\rangle_{i_1\ldots i_{n-2}j_2j_3}$\\
  & $U^{n-2}(\mathfrak{p}_1)\overset{n-3}{\cdot}V^n(\mathfrak{p}_2)+V^n(\mathfrak{p}_1)\overset{n-3}{\cdot}U^{n-2}(\mathfrak{p}_2)$ & 
  $\partial_{j_1j_2}\langle U^{n-2}\rangle_{i_1\ldots i_{n-3}j_1}\partial_{j_3j_4}\langle V^n\rangle_{i_1\ldots i_{n-3}j_2j_3j_4}$\\
  & $\mathfrak{i}U^{n-1}(\mathfrak{p}_1)\overset{n-2}{\times} V^n(\mathfrak{p}_2)-\mathfrak{i}V^n(\mathfrak{p}_1)\overset{n-2}{\times} U^{n-1}(\mathfrak{p}_2)$ & 
  $\epsilon_{ijk}\partial_{j_1j_2}\langle U^{n}\rangle_{i_1\ldots i_{n-2}j_2i}\partial_{kj_1}\langle V^{n-1}\rangle_{i_1\ldots i_{n-2}j}$\\
  & $U^n(\mathfrak{p}_1)\overset{n-3}{\times} V^{n-1}(\mathfrak{p}_2)-V^{n-1}(\mathfrak{p}_1)\overset{n-3}{\times} U^n(\mathfrak{p}_2)$ & 
  $\epsilon_{ijk}\partial_{j_1j_2}\langle U^{n}\rangle_{i_1\ldots i_{n-3}j_1j_2i}\partial_{kj_3}\langle V^{n-1}\rangle_{i_1\ldots i_{n-3}j_3j}$\\
  & $U^{n-3}(\mathfrak{p}_1)\overset{n-4}{\times} V^n(\mathfrak{p}_2)-V^n(\mathfrak{p}_1)\overset{n-4}{\times} U^{n-3}(\mathfrak{p}_2)$ & 
  $\epsilon_{ijk}\partial_{kj_1}\langle U^{n-3}\rangle_{i_1\ldots i_{n-4}i}\partial_{j_2j_3}\langle V^{n}\rangle_{i_1\ldots i_{n-4}j_1j_2j_3j}$\\
  & $U^{n-4}(\mathfrak{p}_1)\overset{n-4}{\cdot}V^n(\mathfrak{p}_2)+V^n(\mathfrak{p}_1)\overset{n-4}{\cdot}U^{n-4}(\mathfrak{p}_2)$ & 
  $\partial_{j_1j_2}\langle U^{n-4}\rangle_{i_1\ldots i_{n-4}}\partial_{j_3j_4}\langle V^n\rangle_{i_1\ldots i_{n-4}j_1j_2j_3j_4}$\\\hline
  $\mathscr{M}_3^{0,0}$ 
  & $\sum_{\sigma}\mathpzc{a}_3\Big(U_i^{n_i}(\mathfrak{p}_{\sigma(i)})\big|_{i=1}^3;l_{ij}\big|_{1\le i<j\le 3}\Big)$ & $\mathpzc{a}_3\Big(\langle U_i^{n_i}\rangle\big|_{i=1}^3;l_{ij}\big|_{1\le i<j\le 3}\Big)$\\\cline{2-3}
  & \multicolumn{2}{l}{$K=n_1+n_2+n_3$ even, $K\ge 2n_i$}\\\cline{2-3}
  & $\sum_{\sigma}\mathpzc{a}_3\Big(U_i^{n_i}(\mathfrak{p}_{\sigma(i)})\big|_{i=1}^3;l_{ij}\big|_{1\le i<j\le 3},(123)\Big)$ & $\mathpzc{a}_3\Big(\langle U_i^{n_i}\rangle\big|_{i=1}^3;l_{ij}\big|_{1\le i<j\le 3},(123)\Big)$\\\cline{2-3}
  & \multicolumn{2}{l}{$K=n_1+n_2+n_3$ odd, $K-1\ge 2n_i$; $U_i^{n_i}$ mutually unequal}\\\hline
  $\mathscr{M}_4^{0,0,0}$ & $\sum_{\sigma}\mathpzc{a}_4\Big(U_i^{n_i}(\mathfrak{p}_{\sigma(i)})\big|_{i=1}^4;l_{ij}\big|_{1\le i<j\le 4}\Big)$ & $\mathpzc{a}_4\Big(\langle U_i^{n_i}\rangle\big|_{i=1}^4;l_{ij}\big|_{1\le i<j\le 4}\Big)$\\\cline{2-3}
  & \multicolumn{2}{l}{$K=n_1+n_2+n_3+n_4$ even, $K\ge 2n_i$, $D=n_1+n_2-n_3-n_4=2l_{12}-2l_{34}$}\\
  & \multicolumn{2}{l}{$U_i^{n_i}$ mutually unequal: If $D\le 0$, then $l_{12}\le 1$; if $D\ge 0$, then $l_{34}\le 1$}\\
  & \multicolumn{2}{l}{$U_1^{n_1}=U_2^{n_2}$: If $D\le 0$, then $l_{12}\le 1$, $l_{13}\le l_{23}$; if $D>0$, then $l_{34}\le 1$, $l_{13}\le l_{23}$}\\
  & \multicolumn{2}{l}{$U_1^{n_1}=U_2^{n_2}=U_3^{n_3}$: $l_{12}=l_{13}\le l_{23}$}\\\cline{2-3}
  & $\sum_{\sigma}\mathpzc{a}_4\Big(U_i^{n_i}(\mathfrak{p}_{\sigma(i)})\big|_{i=1}^4;l_{ij}\big|_{1\le i<j\le 4},(\tau_1\tau_2\tau_3)\Big)$ & $\mathpzc{a}_4\Big(\langle U_i^{n_i}\rangle\big|_{i=1}^4;l_{ij}\big|_{1\le i<j\le 4},(\tau_1\tau_2\tau_3)\Big)$\\\cline{2-3}
  & \multicolumn{2}{l}{$K=n_1+n_2+n_3+n_4$ odd, $K-1\ge 2n_i$, $D=n_1+n_2-n_3-n_4$}\\
  & \multicolumn{2}{l}{$U_i^{n_i}$ mutually unequal: If $D\ge 1$, then $(\tau_1\tau_2\tau_3)=(123),(124)$ with $l_{34}=0$;}\\
  & \multicolumn{2}{l}{\phantom{$U_i^{n_i}$ mutually unequal: }if $D\le -1$, then $(\tau_1\tau_2\tau_3)=(134),(234)$ with $l_{12}=0$}\\
  & \multicolumn{2}{l}{$U_1^{n_1}=U_2^{n_2}$: If $D\ge 1$, then $(\tau_1\tau_2\tau_3)=(123),(124)$ with $l_{34}=0,\ l_{13}< l_{23}$;}\\
  & \multicolumn{2}{l}{\phantom{$U_1^{n_1}=U_2^{n_2}$: }if $D\le -1$, then $(\tau_1\tau_2\tau_3)=(134)$ with $l_{12}=0$}\\
  & \multicolumn{2}{l}{$U_1^{n_1}=U_2^{n_2}=U_3^{n_3}$: $(\tau_1\tau_2\tau_3)=(124)$. If $D\le -1$, $l_{12}=l_{13}<l_{23}$; if $D\ge 1$, $l_{34}=l_{24}<l_{14}$}\\\hline
    \multicolumn{3}{l}{For the notations: $\overset{p}{\cdot}$ and $\overset{p}{\times}$, see \eqref{oversetdef}; $\mathpzc{a}_3$ and $\mathpzc{a}_4$, see \eqref{terms_a3}, \eqref{a4_def}, and \eqref{lrelation}.}
  \end{tabular}
\end{table}

The correspondence of terms in the free energy and the terms in the expansion is crucial for computing their coefficients from the microscopic interaction. 
The molecular potential determines $\mathscr{G}$, then determines $\mathscr{M}$, on which the expansion is done. 
When one attempts to compute the coefficients, the orthogonality can bring convenieces when doing the computation. 
The coefficients are calculated for rod-like \cite{RodModel} and bent-core molecules \cite{BentModel}. 
In these works, the results presented in this section are not utilized, so that lengthy calculation has to be done.

\section{Molecular symmetry\label{expansion_sym}}
Molecular symmetry is characterized by orthogonal transformations that leave the molecule invariant.
Under these transformations, the kernel function $\mathscr{G}_k$ shall also be invariant.
Therefore, the molecular symmetry enforces symmetries on the interaction kernels, thus affects the expansion of these kernels. 
In the previous section, we express the expansion by symmetric traceless tensors. 
This will bring conveniences when discussing molecular symmetry, since the conditions from molecular symmetry are imposed on symmetric traceless tensors. 

All the orthogonal transformations leaving the molecule invariant form a point group $\mathcal{G}$ in $O(3)$, of which all the proper rotations (determinant-one transfomations) form a $SO(3)$-subgroup $\mathcal{G}_1$. 
If $\mathcal{G}$ does not have improper rotations, then $\mathcal{G}_1=\mathcal{G}$. Otherwise, $\mathcal{G}$ can be divided into the union of two cosets, 
\begin{align}
  \mathcal{G}=\mathcal{G}_1\cup (-\mathfrak{k})\mathcal{G}_1=\mathcal{G}_1\cup \mathcal{G}_1(-\mathfrak{k}), 
\end{align}
where $-\mathfrak{k}$ is any improper rotation in $\mathcal{G}$. 
Here, we write the improper rotation as $-\mathfrak{k}$ so that $\mathfrak{k}\in SO(3)$. 

Let us first examine proper rotations. 
For a proper rotation $\mathfrak{s}\in SO(3)$ in the symmetry group $\mathcal{G}$, the kernel function shall be invariant if we rotate any molecule by $\mathfrak{s}$ in the body-fixed frame, i.e. $\mathfrak{p}\to \mathfrak{ps}$. 
Thus, we have
\begin{equation}
  \mathscr{G}_n(\bm{r}_2,\ldots,\bm{r}_n,\mathfrak{p}_1,\ldots,\mathfrak{p}_j\mathfrak{s},\ldots)=\mathscr{G}_n(\bm{r}_2\ldots,\bm{r}_n,\mathfrak{p}_1,\ldots,\mathfrak{p}_j,\ldots). \label{sym}
\end{equation}
It tells us
\begin{equation}
  \mathscr{M}_n^{k_2,\ldots,k_n}(\ldots,\mathfrak{p}_j\mathfrak{s},\ldots)=\mathscr{M}_n^{k_2,\ldots,k_n}(\ldots,\mathfrak{p}_j,\ldots). \label{symM}
\end{equation}

Recall that for the $SO(3)$-subgroup $\mathcal{G}_1$, the $l$-th order symmetric traceless tensors can be decomposed into two orthogonal subspaces $\mathbb{A}^{\mathcal{G}_1,l}$ and $(\mathbb{A}^{\mathcal{G}_1,l})^{\perp}$ \cite{xu_tensors}, such that
\begin{align*}
  &A(\mathfrak{ps})=A(\mathfrak{p}),\qquad\forall A\in\mathbb{A}^{\mathcal{G}_1,l},\ \mathfrak{s}\in\mathcal{G}_1;\\
  &\frac{1}{\#\mathcal{G}_1}\sum_{\mathfrak{s}\in\mathcal{G}_1}A(\mathfrak{ps})=0, \qquad \forall A\in(\mathbb{A}^{\mathcal{G}_1,l})^{\perp}. 
\end{align*}
Since any term $\varPhi\big(U_1(\mathfrak{p}_1),\ldots,U_n(\mathfrak{p}_n)\big)$ in the expansion is multi-linear about $(U_1,\ldots,U_n)$, we have 
\begin{align*}
  &\Big(\mathscr{M}_n^{k_2,\ldots,k_n}(\mathfrak{p}_1,\ldots,\mathfrak{p}_n),\varPhi\big(U_1(\mathfrak{p}_1),\ldots,U_n(\mathfrak{p}_n)\big)\Big)\\
  =&\int \mathscr{M}_n^{k_2,\ldots,k_n}(\mathfrak{p}_1,\ldots,\mathfrak{p}_n)\cdot\varPhi\big(U_1(\mathfrak{p}_1),\ldots,U_n(\mathfrak{p}_n)\big)\,\md\mathfrak{p}_1\ldots\md\mathfrak{p}_n\\
  =&\frac{1}{\# \mathcal{G}_1}\sum_{\mathfrak{s}\in\mathcal{G}_1}
  \int \mathscr{M}_n^{k_2,\ldots,k_n}(\mathfrak{p}_1\mathfrak{s},\ldots,\mathfrak{p}_n)\cdot\varPhi\big(U_1(\mathfrak{p}_1),\ldots,U_n(\mathfrak{p}_n)\big)\,\md\mathfrak{p}_1\ldots\md\mathfrak{p}_n\\
  =&\frac{1}{\# \mathcal{G}_1}\sum_{\mathfrak{s}\in\mathcal{G}_1}
  \int \mathscr{M}_n^{k_2,\ldots,k_n}(\mathfrak{p}_1,\ldots,\mathfrak{p}_n)\cdot\varPhi\big(U_1(\mathfrak{p}_1\mathfrak{s}^{-1}),\ldots,U_n(\mathfrak{p}_n)\big)\,\md(\mathfrak{p}_1\mathfrak{s})\md\mathfrak{p}_2\ldots\md\mathfrak{p}_n\\
  =&\int \mathscr{M}_n^{k_2,\ldots,k_n}(\mathfrak{p}_1,\ldots,\mathfrak{p}_n)\cdot\varPhi\big(\frac{1}{\# \mathcal{G}_1}\sum_{\mathfrak{s}\in\mathcal{G}_1}U_1(\mathfrak{p}_1\mathfrak{s}^{-1}),\ldots,U_n(\mathfrak{p}_n)\big)\,\md\mathfrak{p}_1\ldots\md\mathfrak{p}_n.
\end{align*}
%
Together with the orthogonality of terms (Theorem \ref{expansion_tensor}), the above derivation implies the following theorem. 
\begin{theorem}\label{tensor_proper}
For each term $\varPhi\big(U_1(\mathfrak{p}_1),\ldots,U_n(\mathfrak{p}_n)\big)$ in the expansion, the tensors $U_i$ can only take invariant tensors of $\mathcal{G}_1$. 
\end{theorem}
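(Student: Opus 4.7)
The plan is to combine the averaging computation presented immediately before the statement with the orthogonality of terms supplied by Theorem~\ref{expansion_tensor}. The core observation is that any tensor in $(\mathbb{A}^{\mathcal{G}_1,l})^{\perp}$ averages to zero under $\mathcal{G}_1$, so it cannot contribute a nonzero coefficient in any expansion of $\mathscr{M}_n^{k_2,\ldots,k_n}$ that respects this invariance.

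First I would fix an orthogonal basis of $l$-th order symmetric traceless tensors adapted to the decomposition $\mathbb{A}^{\mathcal{G}_1,l}\oplus(\mathbb{A}^{\mathcal{G}_1,l})^{\perp}$, and view the general expansion of $\mathscr{M}_n^{k_2,\ldots,k_n}$ as a linear combination of terms $\varPhi\big(U_1(\mathfrak{p}_1),\ldots,U_n(\mathfrak{p}_n)\big)$ with each $U_i$ chosen from such a basis. By Theorem~\ref{expansion_tensor}, distinct basis choices produce mutually orthogonal terms, so the coefficient attached to any particular choice is, up to a nonzero normalization, just the inner product
\[
  \Big(\mathscr{M}_n^{k_2,\ldots,k_n},\,\varPhi\big(U_1(\mathfrak{p}_1),\ldots,U_n(\mathfrak{p}_n)\big)\Big).
\]

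Next I would feed this inner product into the identity derived in the four displayed lines just before the statement. Using \eqref{symM} in the first slot together with the change of variables $\mathfrak{p}_1\mapsto\mathfrak{p}_1\mathfrak{s}$ and averaging over $\mathfrak{s}\in\mathcal{G}_1$, that chain of equalities rewrites the inner product as the same quantity with $U_1$ replaced by its $\mathcal{G}_1$-average $\bar{U}_1(\mathfrak{p}_1)=\frac{1}{\#\mathcal{G}_1}\sum_{\mathfrak{s}\in\mathcal{G}_1}U_1(\mathfrak{p}_1\mathfrak{s}^{-1})$. The same manoeuvre applies slot by slot: \eqref{symM} with $j$ set to any $i\in\{1,\ldots,n\}$ permits replacing $U_i$ by its $\mathcal{G}_1$-average without altering the coefficient, thanks to multilinearity of $\varPhi$. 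By the defining property of $(\mathbb{A}^{\mathcal{G}_1,l})^{\perp}$ recalled in the excerpt, this average vanishes whenever $U_i\in(\mathbb{A}^{\mathcal{G}_1,l})^{\perp}$.

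Combining these observations, any term $\varPhi\big(U_1(\mathfrak{p}_1),\ldots,U_n(\mathfrak{p}_n)\big)$ with at least one $U_i$ in $(\mathbb{A}^{\mathcal{G}_1,n_i})^{\perp}$ must have vanishing coefficient, so only tensors lying in the invariant subspaces $\mathbb{A}^{\mathcal{G}_1,n_i}$ can survive, which is the claim. There is no serious obstacle: the averaging identity has essentially been verified in the excerpt, and the only mild bookkeeping is to note that \eqref{symM} is stated separately for each $\mathfrak{p}_j$, so the first-slot argument transfers verbatim to the remaining slots.
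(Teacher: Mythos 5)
Your proposal is correct and follows essentially the same route as the paper: the displayed averaging computation before the theorem, applied slot by slot via \eqref{symM} and the multilinearity of $\varPhi$, combined with the orthogonality from Theorem~\ref{expansion_tensor} and the vanishing of the $\mathcal{G}_1$-average on $(\mathbb{A}^{\mathcal{G}_1,l})^{\perp}$ (noting, as you implicitly do, that summing over $\mathfrak{s}^{-1}$ is the same as summing over $\mathfrak{s}$). This is exactly the argument the paper intends when it says the derivation together with Theorem~\ref{expansion_tensor} implies the result.
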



Next, we discuss improper rotations.
Let us consider the following operations.
For a cluster with $n$ molecules, we inverse them as a whole.
The body-fixed frames are transformed from $(\bm{x}_i,\mathfrak{p}_i)$ into $(-\bm{x}_i,-\mathfrak{p}_i)$. 
The frames are now left-handed, which can be recovered to right-handed ones by an improper rotation $-\mathfrak{k}$.
The final result is 
\begin{align*}
  (\bm{x}_i,\mathfrak{p}_i)\longrightarrow (-\bm{x}_i,\mathfrak{p}_i\mathfrak{k}). 
\end{align*}
Therefore, we obtain 
\begin{equation}
  \mathscr{G}_n(-\bm{r}_2,\ldots,-\bm{r}_n,\mathfrak{p}_1\mathfrak{k},\ldots,\mathfrak{p}_j\mathfrak{k})=\mathscr{G}_n(\bm{r}_2\ldots,\bm{r}_n,\mathfrak{p}_1,\ldots,\mathfrak{p}_n). \label{symimp}
\end{equation}
It tells us
\begin{equation}
  \mathscr{M}_n^{k_2,\ldots,k_n}(\mathfrak{p}_1\mathfrak{k},\ldots,\mathfrak{p}_n\mathfrak{k})=(-1)^{k_2+\ldots +k_n}\mathscr{M}_n^{k_2,\ldots,k_n}(\mathfrak{p}_1\mathfrak{k},\ldots,\mathfrak{p}_n\mathfrak{k}). \label{symMimp}
\end{equation}
Following the same derivation above Theorem \ref{tensor_proper}, we need to examine what the tensors $V(\mathfrak{pk})$ are for the invariant tensors $V(\mathfrak{p})\in \mathbb{A}^{\mathcal{G}_1,l}$. 

\begin{proposition}\label{decomp_improper}
According to the improper rotation $-\mathfrak{k}\in\mathcal{G}$, the space of invariant tensors $\mathbb{A}^{\mathcal{G}_1,l}$ can be decomposed into the sum of two orthogonal subspaces, 
$$
\mathbb{A}^{\mathcal{G},l}_{+1}=\{V(\mathfrak{p})\in\mathbb{A}^{\mathcal{G}_1,l}:V(\mathfrak{pk})=V(\mathfrak{p})\},\quad \mathbb{A}^{\mathcal{G},l}_{-1}=\{V(\mathfrak{p})\in\mathbb{A}^{\mathcal{G}_1,l}:V(\mathfrak{pk})=-V(\mathfrak{p})\}. 
$$
\end{proposition}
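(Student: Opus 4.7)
The plan is to introduce the operator $T:\mathbb{A}^{\mathcal{G}_1,l}\to \mathbb{A}^{\mathcal{G}_1,l}$ defined by $(TV)(\mathfrak{p})=V(\mathfrak{pk})$, show that it is a well-defined involution, and read off the decomposition from its $\pm 1$-eigenspaces. Orthogonality will then be forced by the compatibility of $T$ with the tensor inner product.

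First I would check that $T$ preserves $\mathbb{A}^{\mathcal{G}_1,l}$. For $V\in\mathbb{A}^{\mathcal{G}_1,l}$ and $\mathfrak{s}\in\mathcal{G}_1$,
\begin{align*}
(TV)(\mathfrak{ps})=V(\mathfrak{psk})=V\big(\mathfrak{pk}\cdot(\mathfrak{k}^{-1}\mathfrak{sk})\big).
\end{align*}
Since $-\mathfrak{k}\in\mathcal{G}$ and $\mathfrak{s}\in\mathcal{G}_1\subset\mathcal{G}$, the conjugate $(-\mathfrak{k})^{-1}\mathfrak{s}(-\mathfrak{k})=\mathfrak{k}^{-1}\mathfrak{sk}$ lies in $\mathcal{G}$ with determinant $+1$, hence in $\mathcal{G}_1$. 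Thus $V\big(\mathfrak{pk}(\mathfrak{k}^{-1}\mathfrak{sk})\big)=V(\mathfrak{pk})=(TV)(\mathfrak{p})$, so $TV$ is again $\mathcal{G}_1$-invariant. Second, $(T^2V)(\mathfrak{p})=V(\mathfrak{pk}^2)$, and $\mathfrak{k}^2=(-\mathfrak{k})(-\mathfrak{k})\in\mathcal{G}_1$, so $T^2=\mathrm{Id}$. Being a linear involution, $T$ induces the direct-sum decomposition
\begin{align*}
V=\tfrac{1}{2}(V+TV)+\tfrac{1}{2}(V-TV),
\end{align*}
where the two summands lie respectively in the $+1$ and $-1$ eigenspaces, which are precisely $\mathbb{A}^{\mathcal{G},l}_{+1}$ and $\mathbb{A}^{\mathcal{G},l}_{-1}$.

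For orthogonality, given $V_+\in\mathbb{A}^{\mathcal{G},l}_{+1}$ and $V_-\in\mathbb{A}^{\mathcal{G},l}_{-1}$, the tensor dot product $V_+(\mathfrak{p})\cdot V_-(\mathfrak{p})$ is independent of $\mathfrak{p}$ because both tensors are rotated by the same $\mathfrak{p}$ from their values at $\mathfrak{i}$, and rotations preserve the Frobenius inner product. Comparing the value at $\mathfrak{p}$ with that at $\mathfrak{pk}$ gives
\begin{align*}
V_+(\mathfrak{p})\cdot V_-(\mathfrak{p})=V_+(\mathfrak{pk})\cdot V_-(\mathfrak{pk})=V_+(\mathfrak{p})\cdot\bigl(-V_-(\mathfrak{p})\bigr),
\end{align*}
which forces $V_+\cdot V_-=0$.

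The only nonformal point is verifying that $T$ maps the space to itself, which reduces to $\mathcal{G}_1$ being normal in $\mathcal{G}$; this is immediate since $\mathcal{G}_1$ is the kernel of the determinant restricted to $\mathcal{G}$, hence a subgroup of index at most two. Everything else is bookkeeping around the involution property and the rotation-invariance of the inner product, so I do not anticipate any genuine obstacle.
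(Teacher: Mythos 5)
Your proof is correct and follows essentially the same route as the paper: the decomposition $V=\tfrac12(V(\mathfrak{p})+V(\mathfrak{pk}))+\tfrac12(V(\mathfrak{p})-V(\mathfrak{pk}))$, justified by the normality of $\mathcal{G}_1$ in $\mathcal{G}$ (conjugation by the improper rotation) and by $\mathfrak{k}^2\in\mathcal{G}_1$, which you package as an involution $T$ with $\pm1$ eigenspaces. Your explicit orthogonality argument via the rotation-invariance of the dot product \eqref{innrot} is a welcome addition, since the paper asserts orthogonality in the statement but does not spell it out in its proof.
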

\begin{proof}
We shall notice that for any proper rotation $\mathfrak{s}$ in the point group $\mathcal{G}$, we have $\mathfrak{ksk}$ is also a proper rotation in $\mathcal{G}$.
This can be recognized by writing it as $\mathfrak{(-k)s(-k)}$, a composition of three elements in the group, two of which are improper rotations. 

For an invariant tensor $V(\mathfrak{p})$, we can express it as 
\begin{align*}
  V(\mathfrak{p})=\frac{1}{2}(V(\mathfrak{p})+V(\mathfrak{pk}))+\frac{1}{2}(V(\mathfrak{p})-V(\mathfrak{pk})), 
\end{align*}
where $V(\mathfrak{p})+V(\mathfrak{pk})$ is invariant under $\mathfrak{k}$, and $V(\mathfrak{p})-V(\mathfrak{pk})$ is transformed into its opposite $V(\mathfrak{pk})-V(\mathfrak{pk}^2)=V(\mathfrak{pk})-V(\mathfrak{p})$. 
\end{proof}
For the tensors in $\mathbb{A}_{\pm 1}^{\mathcal{G},l}$, we call them tensors of type $\pm 1$. 

Similar to the derivation for Theorem \ref{tensor_proper}, the effect of improper rotations is stated below. 
\begin{theorem}\label{coupling_improper}
  In the expansion of $\mathscr{M}_n^{k_2,\ldots,k_n}$, let $k=k_2+\ldots +k_n$. 
  When $k$ is odd, the tensors of type $-1$ shall appear odd times. 
  When $k$ is even, the tensors of type $-1$ shall appear even times.
\end{theorem}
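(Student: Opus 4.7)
The plan is to imitate the inner-product derivation carried out just above Theorem \ref{tensor_proper}, but now with the change of variable $\mathfrak{p}_j\mapsto\mathfrak{p}_j\mathfrak{k}$ applied \emph{simultaneously} to all arguments, combined with the improper-rotation identity \eqref{symMimp}. The key observation is that $\mathscr{M}_n^{k_2,\ldots,k_n}$ picks up a sign $(-1)^k$ under this simultaneous right action, and each argument $U_i$ picks up a sign $\epsilon_i\in\{+1,-1\}$ according to which piece of $\mathbb{A}^{\mathcal{G},l_i}_{\pm 1}$ it belongs to. Matching these two signs will force the parity constraint in the theorem.

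First I would invoke Theorem \ref{tensor_proper} and Proposition \ref{decomp_improper} to reduce to the pure-type case: any $U_i\in\mathbb{A}^{\mathcal{G}_1,l_i}$ splits orthogonally as $U_i=U_i^{+}+U_i^{-}$ with $U_i^{\pm}\in\mathbb{A}^{\mathcal{G},l_i}_{\pm 1}$, and by multi-linearity the term $\varPhi(U_1(\mathfrak{p}_1),\ldots,U_n(\mathfrak{p}_n))$ splits into $2^n$ pure-type contributions. It therefore suffices to treat the case where each $U_i$ satisfies $U_i(\mathfrak{p}\mathfrak{k})=\epsilon_i U_i(\mathfrak{p})$ for some fixed $\epsilon_i\in\{\pm 1\}$, in which case
\begin{align*}
\varPhi\bigl(U_1(\mathfrak{p}_1\mathfrak{k}),\ldots,U_n(\mathfrak{p}_n\mathfrak{k})\bigr)=\Bigl(\prod_{i=1}^n\epsilon_i\Bigr)\,\varPhi\bigl(U_1(\mathfrak{p}_1),\ldots,U_n(\mathfrak{p}_n)\bigr)
\end{align*}
by multi-linearity of $\varPhi$.

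Next I would compute the coefficient of such a pure-type term, which by Theorem \ref{expansion_tensor} is (up to a nonzero normalization) the inner product $(\mathscr{M}_n^{k_2,\ldots,k_n},\varPhi)$ defined in \eqref{innp_tensor}. Substituting $\mathfrak{p}_j\to\mathfrak{p}_j\mathfrak{k}$ in every integration variable leaves the Haar measure $\md\mathfrak{p}_1\ldots\md\mathfrak{p}_n$ invariant, and combining the factor $(-1)^k$ from \eqref{symMimp} with the factor $\prod_i\epsilon_i$ coming from $\varPhi$ yields the self-similarity relation
\begin{align*}
(\mathscr{M}_n^{k_2,\ldots,k_n},\varPhi)=(-1)^k\Bigl(\prod_{i=1}^n\epsilon_i\Bigr)(\mathscr{M}_n^{k_2,\ldots,k_n},\varPhi).
\end{align*}
The coefficient must therefore vanish unless $\prod_i\epsilon_i=(-1)^k$, which is exactly the statement that the number of indices $i$ with $\epsilon_i=-1$ has the same parity as $k$.

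Essentially no calculation is needed beyond the single change of variables; the main thing to be careful about is the bookkeeping across the pure-type decomposition. The potential obstacle is verifying that vanishing of each \emph{pure-type} coefficient really rules out the appearance of the corresponding wrong-parity combination in the full expansion, i.e. that no cross-cancellation can occur between different pure-type components. This is secured by the orthogonality furnished by Theorem \ref{expansion_tensor}, since distinct tensor choices (and in particular type-$+1$ versus type-$-1$ pieces, being orthogonal in $L^2(SO(3))$ by Proposition \ref{decomp_improper}) give mutually orthogonal multi-linear terms, so the coefficient of each pure-type contribution is determined independently by its own inner product with $\mathscr{M}_n^{k_2,\ldots,k_n}$.
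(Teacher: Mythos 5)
Your proposal is correct and follows essentially the same route the paper intends: the paper's own "proof" is just the remark that one repeats the change-of-variables/inner-product argument from Theorem \ref{tensor_proper} with the simultaneous right action by $\mathfrak{k}$, using \eqref{symMimp} for the factor $(-1)^k$ and Proposition \ref{decomp_improper} for the signs $\epsilon_i$, exactly as you do. Your write-up is in fact more careful than the paper's, since you explicitly justify (via Theorem \ref{expansion_tensor} and the orthogonality of the two type subspaces) that each pure-type coefficient is determined independently, so no cross-cancellation can rescue a wrong-parity term.
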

In particular, for $\mathscr{M}_2^{k}$, when $k$ is even, the coupling shall be between two tensors of type $+1$ or type $-1$; when $k$ is odd, the coupling shall be between one tensor of type $+1$ and one of type $-1$. 
For $\mathscr{M}_3^{0,0}$ and $\mathscr{M}_4^{0,0,0}$, the number of tensors of type $-1$ shall be zero, two or four. 

We pay attention to the case where the group $\mathcal{G}$ has the inversion, i.e. $\mathfrak{k}=\mathfrak{i}$, so that $\mathcal{G}_1\mathfrak{k}=\mathcal{G}_1$. In this case, we have $\mathbb{A}_{+1}^{\mathcal{G},l}=\mathbb{A}^{\mathcal{G}_1,l}$ and $\mathbb{A}_{-1}^{\mathcal{G},l}=\{0\}$.
If the group $\mathcal{G}$ does not include the inversion, we need to identify the two spaces. 

\subsection{Tensors of two types for each point group}
Based on our discussion above, we find out the tensors of type $\pm 1$ for each point group.
The point groups have been identified completely (see, for example, \cite{Group_Cotton}), and the invariant tensors for point groups in $SO(3)$ have been identified in \cite{xu_tensors}.
Thus, our task is to write down the decomposition in Proposition \ref{decomp_improper}. 
For the point groups having the common $SO(3)$-subgroup, we will discuss together and see how they are distinguished by the improper rotations. 

First, let us write down the rotation subgroup and one improper rotation in each point group. 
Recall that the frame fixed on a molecule is $\mathfrak{p}=(\bm{m}_1,\bm{m}_2,\bm{m}_3)$, and a rotation within this frame is expressed by $\mathfrak{p}\to\mathfrak{ps}$.
Let us introduce some rotations below, 
\begin{align}
  &\mathfrak{j}_{\theta}=\left(
  \begin{array}{ccc}
    1 & 0 & 0\\
    0 & \cos\theta & -\sin\theta\\
    0 & \sin\theta & \cos\theta
  \end{array}
  \right), \quad
  \mathfrak{b}_2=\left(
  \begin{array}{ccc}
    -1 & 0 & 0\\
    0 & 1 & 0\\
    0 & 0 & -1
  \end{array}
  \right),
  \mathfrak{r}_3=\left(
  \begin{array}{ccc}
    0 & 0 & 1\\
    1 & 0 & 0\\
    0 & 1 & 0
  \end{array}
  \right),\nonumber\\
  &\mathfrak{v}_5=\frac{1}{2}\left(\begin{array}{ccc}
    \phi & -1 & \phi-1\\
    1 & \phi-1 & -\phi\\
    \phi-1 & \phi & 1
  \end{array}\right),\quad \phi=\frac{1+\sqrt{5}}{2}. 
  \label{basicrot}
\end{align}
In the above, $\mathfrak{j}_{\theta}$ is the rotation round $\bm{m}_1$ by the angle $\theta$. To comprehend this rotation, we could write out 
\begin{align*}
  \mathfrak{ps}=&(\bm{m}_1,\bm{m}_2,\bm{m}_3)\left(
  \begin{array}{ccc}
    1 & 0 & 0\\
    0 & \cos\theta & -\sin\theta\\
    0 & \sin\theta & \cos\theta
  \end{array}
  \right)\\
  =&(\bm{m}_1,\cos\theta\bm{m}_2+\sin\theta\bm{m}_3,-\sin\theta\bm{m}_2+\cos\theta\bm{m}_3). 
\end{align*}
Moreover, for two angles $\theta_1$ and $\theta_2$, we have
$$
\mathfrak{j}_{\theta_1}\mathfrak{j}_{\theta_2}=\mathfrak{j}_{\theta+\theta_2}. 
$$
Thus, for an integer $m$ we have 
\begin{align}
  \mathfrak{j}_{\theta}^m=\mathfrak{j}_{m\theta}. \nonumber
\end{align}
The second one, $\mathfrak{b}_2$, is the rotation round $\bm{m}_2$ by the angle $\pi$; 
$\mathfrak{r}_3$ is the rotation round $(\bm{m}_1+\bm{m}_2+\bm{m}_3)/\sqrt{3}$ by $2\pi/3$, transforming $(\bm{m}_1,\bm{m}_2,\bm{m}_3)$ into $(\bm{m}_2,\bm{m}_3,\bm{m}_1)$; 
and $\mathfrak{v}_5$ is a five-fold rotation. 


For each point group, we explain how to pose the body-fixed frame $(\bm{m}_i)$ and write down the generating elements. 
The generating elements and illustrations can be found in other works, such as \cite{Group_Cotton,xu_softmatter}. 
We shall present in the following way: describe a point group in $SO(3)$ (with only proper rotations); then, for all the groups containing it as the rotation subgroup, we specify an improper rotation $-\mathfrak{k}$. 
\begin{itemize}
\item The group $\mathcal{C}_{\infty}$ consists of rotations round an axis by arbitrary angle. We choose $\bm{m}_1$ as the axis, so that $\mathcal{C}_{\infty}=\{\mathfrak{j}_{\theta},\forall \theta\}$. 
\begin{itemize}
\item $\mathcal{C}_{\infty v}$ has a mirror plane $\hat{O}\bm{m}_1\bm{m}_2$, so an improper rotation is $-\mathfrak{k}=\mathrm{diag}(1,1,-1)=-\mathfrak{j}_{\pi}\mathfrak{b}_2$. 
\item $\mathcal{C}_{\infty h}$ has a mirror plane $\hat{O}\bm{m}_2\bm{m}_3$, so an improper rotation is $\mathrm{diag}(-1,1,1)=-\mathfrak{j}_{\pi}$. We multiply it with a proper rotation $\mathfrak{j}_{\pi}$ to recognize that the inversion $-\mathfrak{i}$ belongs to $\mathcal{C}_{\infty h}$. 
\end{itemize}
\item The group $\mathcal{D}_{\infty}$ contains $\mathcal{C}_{\infty}$ as a subset, and also allows $\mathfrak{b}_2$.
\begin{itemize}
\item $\mathcal{D}_{\infty h}$ has a mirror plane $\hat{O}\bm{m}_2\bm{m}_3$, so it contains the inversion. 
\end{itemize}
\item $\mathcal{C}_n$ is generated by the rotation round $\bm{m}_1$ by the angle $2\pi/n$, i.e. is generated by $\mathfrak{j}_{2\pi/n}$. 
\begin{itemize}
\item $\mathcal{C}_{nv}$ has an improper rotation $\mathrm{diag}(1,1,-1)=-\mathfrak{j}_{\pi}\mathfrak{b}_2$. 
\item $\mathcal{C}_{nh}$ has an improper rotation $\mathrm{diag}(-1,1,1)=-\mathfrak{j}_{\pi}$. When $n$ is even, we multiply it by $\mathfrak{j}_{2\pi/n}^{n/2}=\mathfrak{j}_{\pi}$ to get the inversion. When $n$ is odd, we multiply it by $\mathfrak{j}_{2\pi/n}^{(n+1)/2}=\mathfrak{j}_{(n+1)\pi/n}$ and let $\mathfrak{k}=\mathfrak{j}_{\pi/n}$. 
\item $\mathcal{S}_{2n}$ allows a roto-reflection round $\bm{m}_1$, i.e. to rotate round $\bm{m}_1$ by the angle $\pi/n$, followed by a reflection about the plane $\hat{O}\bm{m}_2\bm{m}_3$. Such an improper rotation can be expressed by $\mathfrak{j}_{\pi/n}(-\mathfrak{j}_{\pi})=-\mathfrak{j}_{(n+1)\pi/n}$. When $n$ is odd, we multiply it by $\mathfrak{j}_{2\pi/n}^{(n-1)/2}=\mathfrak{j}_{(n-1)\pi/n}$ to get the inversion. When $n$ is even, we multiply it by $\mathfrak{j}_{2\pi/n}^{n/2}=\mathfrak{j}_{\pi}$ and let $\mathfrak{k}=\mathfrak{j}_{\pi/n}$. 
\end{itemize}
\item $\mathcal{D}_n$ is generated by $\mathfrak{j}_{2\pi/n}$ and $\mathfrak{b}_2$. 
\begin{itemize}
\item $\mathcal{D}_{nh}$ has an improper rotation $\mathrm{diag}(-1,1,1)=-\mathfrak{j}_{\pi}$. When $n$ is even, the group contains the inversion. When $n$ is odd, we let $\mathfrak{k}=\mathfrak{j}_{\pi/n}$. 
\item $\mathcal{D}_{nd}$ has an improper rotation $-\mathfrak{j}_{\pi}\mathfrak{j}_{\pi/n}=-\mathfrak{j}_{(n+1)\pi/n}$. When $n$ is odd, the group contains the inversion. When $n$ is even, we let $\mathfrak{k}=\mathfrak{j}_{\pi/n}$. 
\end{itemize}
\item $\mathcal{T}$ contains all the proper rotations allowed by a regular tetrahedron, which can be generated by $\mathfrak{j}_{\pi}$, $\mathfrak{b}_2$ and $\mathfrak{r}_3$. 
\begin{itemize}
\item $\mathcal{T}_d$ allows the improper rotation 
  \begin{align*}
    \left(
    \begin{array}{ccc}
      1 & 0 & 0\\
      0 & 0 & 1\\
      0 & 1 & 0
    \end{array}
    \right)=\mathfrak{j}_{\pi/2}\mathrm{diag}(1,1,-1)=-\mathfrak{j}_{3\pi/2}\mathfrak{b}_2.
  \end{align*}
  We multiply it by the proper rotation $\mathfrak{j}_{\pi}\mathfrak{b}_2$ in $\mathcal{T}$, so that we may let $\mathfrak{k}=\mathfrak{j}_{\pi/2}$. 
\item $\mathcal{T}_h$ has a mirror plane $\hat{O}\bm{m}_2\bm{m}_3$, so it contains the inversion. 
\end{itemize}
\item $\mathcal{O}$ contains all the proper rotations allowed by a cube, which can be generated by $\mathfrak{j}_{\pi/2}$, $\mathfrak{b}_2$ and $\mathfrak{r}_3$.
\begin{itemize}
\item $\mathcal{O}_h$ contains all the $O(3)$ transformations of a cube, allowing the inversion. 
\end{itemize}
\item $\mathcal{I}$ contains all the proper rotations allowed by a regular icosahedron, generated by $\mathfrak{j}_{\pi}$, $\mathfrak{b}_2$, $\mathfrak{r}_3$, $\mathfrak{v}_5$.
\begin{itemize}
\item $\mathcal{I}_h$ contains all the $O(3)$ transformations of a regular icosahedron, allowing the inversion. 
\end{itemize}
\end{itemize}

For each point group in $SO(3)$, we write down the invariant tensors obtained in \cite{xu_tensors}, then find out the two types of tensors using the improper rotations. 
To express symmetric traceless tensors, we introduce the polynomials 
\begin{align}
  \tilde{T}_n(y,z)=z^{n/2}T_n(y/\sqrt{z}),\ \tilde{U}_n(y,z)=z^{n/2}U_n(y/\sqrt{z}),\ \tilde{P}_n^{(\mu,\mu)}(y,z)=z^{n/2}P_n^{(\mu,\mu)}(y/\sqrt{z}), 
\end{align}
where $T_n(\cos\theta)=\cos n\theta$ and $U_{n-1}(\cos\theta)\sin\theta=\sin n\theta$ are the Chebyshev polynomials of the first and the second kind, and $P_n^{(\mu,\mu)}(x)$ is the Jacobi polynomial with two identical indices $(\mu,\mu)$.
Since the Chebyshev and Jacobi polynomials only have the terms with the same parity as the order $n$ (see Appendix for explicit expressions), the above definition indeed gives polynomials of $y$ and $z$. 
According to the monomial notation \eqref{tensor_monomial}, when we substitute $y,z$ by some polynomials of $\bm{m}_i$, we define a symmetric tensor.

For all the point groups having improper rotations, the tensors of type $\pm 1$ are listed in Table \ref{tensor_improper}, which we explain below. 

\begin{table}
  \centering
  \caption{Tensors of type $\pm 1$ for point groups containing improper rotations. \label{tensor_improper}}\small
  \begin{tabular}{c|l}\hline
  Group & Tensors of types $\pm 1$, two spaces $\mathbb{A}^{\mathcal{G},l}_{+1}$ and $\mathbb{A}^{\mathcal{G},l}_{-1}$\\\hline
  $\mathcal{C}_{\infty h},\mathcal{D}_{\infty h}$ & \\
  $\mathcal{C}_{nh}$ ($n$ even) & Improper rotation $-\mathfrak{k}=-\mathfrak{i}$\\
  $\mathcal{S}_{2n}$ ($n$ odd) & $\mathcal{G}=\mathcal{G}_1\cup(-\mathcal{G}_1)$, $\mathcal{G}_1$ rotation subgroup \\
  $\mathcal{D}_{nh}$ ($n$ even) & $\mathbb{A}^{\mathcal{G},l}_{+1}=\mathbb{A}^{\mathcal{G}_1,l}$, $\mathbb{A}^{\mathcal{G},l}_{-1}=\{0\}$\\
  $\mathcal{D}_{nd}$ ($n$ odd) &see \eqref{tensors_Cinf}, \eqref{tensors_Dinf}, \eqref{tensors_Cn}, \eqref{tensors_Dn}, \eqref{tensors_poly1}, \eqref{tensors_poly2}, \eqref{tensors_poly3}\\
  $\mathcal{T}_h,\mathcal{O}_h,\mathcal{I}_h$ &\\\hline
  $\mathcal{C}_{\infty v}$ & $l$ even, 
  $+1:\mathrm{span}\left\{\tilde{P}_{l}^{(0,0)}(\bm{m}_1,\mathfrak{i})\right\}$;\\
  & \phantom{$l$ even,}  $-1:\{0\}$\\ 
  & $l$ odd,   $+1:\{0\}$; \\
  & \phantom{$l$ odd,}  $-1:\mathrm{span}\left\{\tilde{P}_{l}^{(0,0)}(\bm{m}_1,\mathfrak{i})\right\}$\\\hline
  $\mathcal{C}_{nv}$ & $l$ even, 
  $+1:\mathrm{span}\left\{\tilde{P}_{l-jn}^{(jn,jn)}(\bm{m}_1,\mathfrak{i})\tilde{T}_{jn}(\bm{m}_2,\mathfrak{i}-\bm{m}_1^2)\right\}$\\
  & \phantom{$l$ even,}
  $-1:\mathrm{span}\left\{\tilde{P}_{l-jn}^{(jn,jn)}(\bm{m}_1,\mathfrak{i})\tilde{U}_{jn-1}(\bm{m}_2,\mathfrak{i}-\bm{m}_1^2)\bm{m}_3\right\}$\\ 
  & $l$ odd, 
  $+1:\mathrm{span}\left\{\tilde{P}_{l-jn}^{(jn,jn)}(\bm{m}_1,\mathfrak{i})\tilde{U}_{jn-1}(\bm{m}_2,\mathfrak{i}-\bm{m}_1^2)\bm{m}_3\right\}$\\
  & \phantom{$l$ odd,}
  $-1:\mathrm{span}\left\{\tilde{P}_{l-jn}^{(jn,jn)}(\bm{m}_1,\mathfrak{i})\tilde{T}_{jn}(\bm{m}_2,\mathfrak{i}-\bm{m}_1^2)\right\}$ \\\hline
  $\mathcal{S}_{2n}$ ($n$ even) 
  & $+1:\mathrm{span}\left\{\tilde{P}_{l-jn}^{(jn,jn)}(\bm{m}_1,\mathfrak{i})\tilde{T}_{jn}(\bm{m}_2,\mathfrak{i}-\bm{m}_1^2),\tilde{P}_{l-jn}^{(jn,jn)}(\bm{m}_1,\mathfrak{i})\tilde{U}_{jn-1}(\bm{m}_2,\mathfrak{i}-\bm{m}_1^2)\bm{m}_3,\ j \text{ even} \right\}$\\
  $\mathcal{C}_{nh}$ ($n$ odd)
  & $-1:\mathrm{span}\left\{\tilde{P}_{l-jn}^{(jn,jn)}(\bm{m}_1,\mathfrak{i})\tilde{T}_{jn}(\bm{m}_2,\mathfrak{i}-\bm{m}_1^2),\tilde{P}_{l-jn}^{(jn,jn)}(\bm{m}_1,\mathfrak{i})\tilde{U}_{jn-1}(\bm{m}_2,\mathfrak{i}-\bm{m}_1^2)\bm{m}_3,\ j \text{ odd} \right\}$\\\hline
  $\mathcal{D}_{nd}$ ($n$ even)
  & $l$ even, 
  $+1:\mathrm{span}\left\{\tilde{P}_{l-jn}^{(jn,jn)}(\bm{m}_1,\mathfrak{i})\tilde{T}_{jn}(\bm{m}_2,\mathfrak{i}-\bm{m}_1^2),\ j \text{ even}\right\}$\\
  & \phantom{$l$ even,}
  $-1:\mathrm{span}\left\{\tilde{P}_{l-jn}^{(jn,jn)}(\bm{m}_1,\mathfrak{i})\tilde{T}_{jn}(\bm{m}_2,\mathfrak{i}-\bm{m}_1^2),\ j \text{ odd}\right\}$\\
  & $l$ odd,
  $+1:\mathrm{span}\left\{\tilde{P}_{l-jn}^{(jn,jn)}(\bm{m}_1,\mathfrak{i})\tilde{U}_{jn-1}(\bm{m}_2,\mathfrak{i}-\bm{m}_1^2)\bm{m}_3,\ j \text{ even} \right\}$\\
  & \phantom{$l$ odd,}
  $-1:\mathrm{span}\left\{\tilde{P}_{l-jn}^{(jn,jn)}(\bm{m}_1,\mathfrak{i})\tilde{U}_{jn-1}(\bm{m}_2,\mathfrak{i}-\bm{m}_1^2)\bm{m}_3,\ j \text{ odd} \right\}$\\\hline
  $\mathcal{D}_{nh}$ ($n$ odd)
  & $l$ even,
  $+1:\mathrm{span}\left\{\tilde{P}_{l-jn}^{(jn,jn)}(\bm{m}_1,\mathfrak{i})\tilde{T}_{jn}(\bm{m}_2,\mathfrak{i}-\bm{m}_1^2),\ j \text{ even}\right\}$\\
  & \phantom{$l$ even,}
  $-1:\mathrm{span}\left\{\tilde{P}_{l-jn}^{(jn,jn)}(\bm{m}_1,\mathfrak{i})\tilde{U}_{jn-1}(\bm{m}_2,\mathfrak{i}-\bm{m}_1^2)\bm{m}_3,\ j \text{ odd} \right\}$\\
  & $l$ odd, 
  $+1:\mathrm{span}\left\{\tilde{P}_{l-jn}^{(jn,jn)}(\bm{m}_1,\mathfrak{i})\tilde{U}_{jn-1}(\bm{m}_2,\mathfrak{i}-\bm{m}_1^2)\bm{m}_3,\ j \text{ even} \right\}$\\
  & \phantom{$l$ odd,}
  $-1:\mathrm{span}\left\{\tilde{P}_{l-jn}^{(jn,jn)}(\bm{m}_1,\mathfrak{i})\tilde{T}_{jn}(\bm{m}_2,\mathfrak{i}-\bm{m}_1^2),\ j \text{ odd}\right\}$ \\\hline
  $\mathcal{T}_d$ & $+1:\mathrm{span}\bigg\{\Big\{\big(S_2^i(\bm{m}_1,\bm{m}_2,\bm{m}_3)S_3^j(\bm{m}_1,\bm{m}_2,\bm{m}_3)\big)_0,\ j\text{ even}$, $l=4i+3j\Big\}$\\
  (see \eqref{sympols}) & $\ \ \ \cup\Big\{\big(E(\bm{m}_1,\bm{m}_2,\bm{m}_3)S_2^i(\bm{m}_1,\bm{m}_2,\bm{m}_3)S_3^j(\bm{m}_1,\bm{m}_2,\bm{m}_3)\big)_0,\ j\text{ odd}$, $l=6+4i+3j\Big\}\bigg\}$\\
  & $-1:\mathrm{span}\bigg\{\Big\{\big(S_2^i(\bm{m}_1,\bm{m}_2,\bm{m}_3)S_3^j(\bm{m}_1,\bm{m}_2,\bm{m}_3)\big)_0,\ j\text{ odd}$, $l=4i+3j\Big\}$\\
  & $\ \ \ \cup\Big\{\big(E(\bm{m}_1,\bm{m}_2,\bm{m}_3)S_2^i(\bm{m}_1,\bm{m}_2,\bm{m}_3)S_3^j(\bm{m}_1,\bm{m}_2,\bm{m}_3)\big)_0,\ j\text{ even}$, $l=6+4i+3j\Big\}\bigg\}$\\\hline
  \end{tabular}
\end{table}

\subsubsection{Axisymmetries}
We first look into two rotation groups $\mathcal{C}_{\infty}$, $\mathcal{D}_{\infty}$. The invariant tensors are given by 
\begin{align}
  \mathbb{A}^{\mathcal{C}_{\infty},l}=&\mathrm{span}\left\{\tilde{P}_{l}^{(0,0)}(\bm{m}_1,\mathfrak{i})\right\}, \label{tensors_Cinf}\\
  \mathbb{A}^{\mathcal{D}_{\infty},l}=&\mathrm{span}\left\{\tilde{P}_{l}^{(0,0)}(\bm{m}_1,\mathfrak{i})\right\},\ l\text{ even};\qquad \mathbb{A}^{\mathcal{D}_{\infty},l}=\{0\},\ l\text{ odd}. \label{tensors_Dinf}
\end{align}

For the groups $\mathcal{C}_{\infty h}$, $\mathcal{D}_{\infty h}$, since they possess the inversion, the type $+1$ tensors are just the invariant tensors, and the only type $-1$ tensor is the zero tensor. 

For $\mathcal{C}_{\infty v}$, we have chosen $\mathfrak{k}=\mathrm{diag}(-1,-1,1)$.
Thus, in type $+1$ tensors, $\bm{m}_1$ shall appear even times, while in type $-1$ tensors, $\bm{m}_1$ shall appear odd times. 
As a result, the type $+1$ tensors are those whose order $l$ are even, and the type $-1$ tensors are those with odd order. 

\subsubsection{Finite order axial symmetries}
Next, we look into point groups with the rotation subgroup $\mathcal{C}_n$ or $\mathcal{D}_n$.

The group $\mathcal{C}_n$ is the rotation subgroup of $\mathcal{C}_{nv}$, $\mathcal{C}_{nh}$ and $\mathcal{S}_{2n}$.
The invariant tensors for $\mathcal{C}_n$ are 
\begin{align}
  \mathbb{A}^{\mathcal{C}_n,l}=&\mathrm{span}\left\{\tilde{P}_{l-jn}^{(jn,jn)}(\bm{m}_1,\mathfrak{i})\tilde{T}_{jn}(\bm{m}_2,\mathfrak{i}-\bm{m}_1^2),\tilde{P}_{l-jn}^{(jn,jn)}(\bm{m}_1,\mathfrak{i})\tilde{U}_{jn-1}(\bm{m}_2,\mathfrak{i}-\bm{m}_1^2)\bm{m}_3\right\}. \label{tensors_Cn}
\end{align}
\begin{itemize}
\item
For $\mathcal{C}_{nv}$, we have chosen $\mathfrak{k}=\mathrm{diag}(-1,-1,1)$. 
Therefore, in type $+1$ tensors $\bm{m}_1$ and $\bm{m}_2$ shall appear even times in total, while if $\bm{m}_1$ and $\bm{m}_2$ appear odd times in total, the tensors are type $-1$. 
According to this requirement, the two types of tensors are given in Table \ref{tensor_improper}. 
\item
For $\mathcal{S}_{2n}$ where $n$ is odd, and $\mathcal{C}_{nh}$ where $n$ is even, these groups have the inversion. 
\item
For $\mathcal{S}_{2n}$ where $n$ is even, and $\mathcal{C}_{nh}$ where $n$ is odd, we have chosen $\mathfrak{k}=\mathfrak{j}_{\pi/n}$. 
Now, we use the fact that 
\begin{align}
  (\bm{m}_2+\sqrt{-1}\bm{m}_3)^n=\tilde{T}_n(\bm{m}_2,\mathfrak{i}-\bm{m}_1^2)+\sqrt{-1}\tilde{U}_{n-1}(\bm{m}_2,\mathfrak{i}-\bm{m}_1^2)\bm{m}_3. \label{Chebyrel}
\end{align}
We substitute $\bm{m}_i$ with $\bm{m}_i(\mathfrak{pj}_{\theta})$ in the above.
The left-hand side gives 
\begin{align*}
  \big(\bm{m}_2(\mathfrak{pj}_{\theta})+\sqrt{-1}\bm{m}_3(\mathfrak{pj}_{\theta})\big)^n=e^{\sqrt{-1}n\theta}(\bm{m}_2+\sqrt{-1}\bm{m}_3)^n. 
\end{align*}
Let $\theta=\pi/n$. We obtain 
\begin{align*}
&\tilde{T}_{jn}\big(\bm{m}_2(\mathfrak{pj}_{\pi/n}),\mathfrak{i}-\bm{m}_1^2(\mathfrak{pj}_{\pi/n})\big)=(-1)^j\tilde{T}_{jn}\big(\bm{m}_2,\mathfrak{i}-\bm{m}_1^2\big),\\
&\tilde{U}_{jn-1}\big(\bm{m}_2(\mathfrak{pj}_{\pi/n}),\mathfrak{i}-\bm{m}_1^2(\mathfrak{pj}_{\pi/n})\big)\bm{m}_3(\mathfrak{pj}_{\pi/n})=(-1)^j\tilde{U}_{jn-1}(\bm{m}_2,\mathfrak{i}-\bm{m}_1^2)\bm{m}_3. 
\end{align*}
Therefore, type $+1$ tensors are those in \eqref{tensors_Cn} where $j$ is even, and type $-1$ tensors are those where $j$ is odd. 
\end{itemize}

We turn to the point groups having the rotation subgroup $\mathcal{D}_n$.
The invariant tensors of $\mathcal{D}_n$ are given by 
\begin{align}
  \mathbb{A}^{\mathcal{D}_n,l}=\mathrm{span}\bigg\{&\left\{\tilde{P}_{l-jn}^{(jn,jn)}(\bm{m}_1,\mathfrak{i})\tilde{T}_{jn}(\bm{m}_2,\mathfrak{i}-\bm{m}_1^2),\ l-jn\text{ even}\right\} \nonumber\\
  &\cup\left\{\tilde{P}_{l-jn}^{(jn,jn)}(\bm{m}_1,\mathfrak{i})\tilde{U}_{jn-1}(\bm{m}_2,\mathfrak{i}-\bm{m}_1^2)\bm{m}_3,\ l-jn\text{ odd}\right\}\bigg\}. \label{tensors_Dn}
\end{align}
\begin{itemize}
\item The two groups, $\mathcal{D}_{nd}$ where $n$ is odd, and $\mathcal{D}_{nh}$ where $n$ is even, contain the inversion. 
\item For $\mathcal{D}_{nd}$ where $n$ is even, and $\mathcal{D}_{nh}$ where $n$ is odd, the discussion is similar to $\mathcal{S}_{2n}$ and $\mathcal{C}_{nh}$. By choosing $\mathfrak{k}=\mathfrak{j}_{\pi/n}$, we conclude that type $+1$ tensors are those with even $j$, and type $-1$ tensors are those with odd $j$. 
\end{itemize}

\subsubsection{Polyhedral symmetries}
There are three polyhedral rotation groups, $\mathcal{T}$, $\mathcal{O}$, $\mathcal{I}$. Define 
\begin{subequations}\label{sympols}
\begin{align}
  S_2(\bm{m}_1,\bm{m}_2,\bm{m}_3)=&\bm{m}_1^2\bm{m}_2^2+\bm{m}_2^2\bm{m}_3^2+\bm{m}_3^2\bm{m}_1^2, \\
  S_3(\bm{m}_1,\bm{m}_2,\bm{m}_3)=&\bm{m}_1\bm{m}_2\bm{m}_3, \\
  E(\bm{m}_1,\bm{m}_2,\bm{m}_3)=&(\bm{m}_1^2-\bm{m}_2^2)(\bm{m}_2^2-\bm{m}_3^2)(\bm{m}_3^2-\bm{m}_1^2). 
\end{align}
\end{subequations}
Using these notations, the invariant tensors are given by 
\begin{align}
  \mathbb{A}^{\mathcal{T},l}=&\mathrm{span}\bigg\{\Big\{\big(S_2^i(\bm{m}_1,\bm{m}_2,\bm{m}_3)S_3^j(\bm{m}_1,\bm{m}_2,\bm{m}_3)\big)_0,\ l=4i+3j\Big\}\nonumber\\
  &\cup\Big\{\big(E(\bm{m}_1,\bm{m}_2,\bm{m}_3)S_2^i(\bm{m}_1,\bm{m}_2,\bm{m}_3)S_3^j(\bm{m}_1,\bm{m}_2,\bm{m}_3)\big)_0,\ l=6+4i+3j\Big\}\bigg\}.\label{tensors_poly1}\\
  \mathbb{A}^{\mathcal{O},l}=&\mathrm{span}\bigg\{\Big\{\big(S_2^i(\bm{m}_1,\bm{m}_2,\bm{m}_3)S_3^j(\bm{m}_1,\bm{m}_2,\bm{m}_3)\big)_0,\ j\text{ even}, l=4i+3j\Big\}\nonumber\\
  &\cup\Big\{\big(E(\bm{m}_1,\bm{m}_2,\bm{m}_3)S_2^i(\bm{m}_1,\bm{m}_2,\bm{m}_3)S_3^j(\bm{m}_1,\bm{m}_2,\bm{m}_3)\big)_0,\ j\text{ odd}, l=6+4i+3j\Big\}\bigg\}.\label{tensors_poly2}\\
  \mathbb{A}^{\mathcal{I},l}=&\{V(\mathfrak{p})\in\mathbb{A}^{\mathcal{T},l}: V(\mathfrak{pv}_5)=V(\mathfrak{p})\}. \label{tensors_poly3}
\end{align}
Here, we recall that $(U)_0$ is the symmetric traceless tensor generated by $U$ (see Proposition \ref{U0}).
If explicit expressions are needed, one could expand the tensors into momials and use the explicit expressions of $(\bm{m}_1^{i_1}\bm{m}_2^{i_2}\bm{m}_3^{i_3})_0$ that are provided in \cite{xu_tensors}.

The three point groups $\mathcal{T}_h$, $\mathcal{O}_h$, $\mathcal{I}_h$ contain the inversion, so nothing needs to be discussed. 

For the group $\mathcal{T}_d$, we have chosen $\mathfrak{k}=\mathfrak{j}_{\pi/2}$.
Because $\mathfrak{j}_{\pi/2}^2=\mathfrak{j}_{\pi}$, it is noticed from generating element that $\mathcal{T}\cup\mathcal{T}\mathfrak{j}_{\pi/2}=\mathcal{O}$. 
Therefore, the type $+1$ tensors for $\mathcal{T}_d$ are just the invariant tensors of $\mathcal{O}$.

\section{Summary and examples\label{summary}}
In this paper, we discuss the expansion of $\mathscr{M}_l^{k_2,\ldots,k_l}(\mathfrak{p}_1,\ldots,\mathfrak{p}_l)$ defined from interaction kernels $\mathscr{G}_k$ that are functions of molecular potential. 
The expansion is expressed by symmetric traceless tensors and is consistent with symmetry arguments, including the translations, rotations and label permutations of the whole cluster, and the molecular symmetry described by a point group.
The orthogonality of terms and the basic approximation result are established, which can be useful if the coefficients need to be calculated from microscopic potential. 

The form of expansion is summarized in two tables presented in the main text. 
If one would like to write down the expansion for certain point group, the procedure below can be followed:
\begin{enumerate}[1)]
\item Choose tensors from the invariant tensors of the rotation subgroup. 
\item Use Table \ref{tensor_improper} to identify the types $\pm 1$ of these tensors. 
\item Insert these tensors into the terms in Table \ref{sum_term}. Notice that Theorem \ref{coupling_improper} gives the conditions on the how many times the type $-1$ tensors shall appear. 
\end{enumerate}

We illustrate the procedure by a couple of examples. 
Consider two point groups $\mathcal{C}_{2v}$ and $\mathcal{S}_4$, both having the rotation subgroup $\mathcal{C}_2$.
The invariant tensors up to second order are picked up: $1$ (zeroth order tensor), $\bm{m}_1$, $\bm{m}_1^2-\frac{1}{3}\mathfrak{i}$, $\bm{m}_2^2-\bm{m}_3^2$, $\bm{m}_2\bm{m}_3$. 
Then, from Table \ref{tensor_improper}, we find out the type $\pm 1$ for each tensor: 
\begin{center}
\begin{tabular}{c|c|c|c|c|c}
  \hline
  & $1$ & $\bm{m}_1$ & $\bm{m}_1^2-\frac{1}{3}\mathfrak{i}$ & $\bm{m}_2^2-\bm{m}_3^2$ & $\bm{m}_2\bm{m}_3$\\\hline
$\mathcal{C}_{2v}$ & $+1$ & $-1$ & $+1$ & $+1$ & $-1$ \\\hline
$\mathcal{S}_{4}$ & $+1$ & $+1$ & $+1$ & $-1$ & $-1$\\\hline
\end{tabular}
\end{center}
For the terms in Table \ref{sum_term}, substitute the tensors in these terms by these five tensors, with noticing Theorem \ref{coupling_improper}. 
For example, let us look at the term $U^n(\mathfrak{p}_1)\overset{n-1}{\times} V^n(\mathfrak{p}_2)+V^n(\mathfrak{p}_1)\overset{n-1}{\times} U^n(\mathfrak{p}_2)$ in $\mathscr{M}_2^1$.
The tensor order shall be equal for $U^n$ and $V^n$ with $n\ge 1$.
Since we choose tensors up to second order, we have $n=1$ or $2$. 
If $n=1$, the only first order invariant tensor above is $\bm{m}_1$.
But we cannot let $U^n(\mathfrak{p})=V^n(\mathfrak{p})=\bm{m}_1$, since one of $U^n$ and $V^n$ needs to be type $+1$ while the other is type $-1$. 
When $n=2$, 
for the group $\mathcal{C}_{2v}$, there are two choices $(U^n,V^n)=(\bm{m}_1^2-\frac{1}{3}\mathfrak{i},\bm{m}_2\bm{m}_3)$ or $(\bm{m}_2^2-\bm{m}_3^2,\bm{m}_2\bm{m}_3)$.
For the group $\mathcal{S}_4$, there are two different choices $(U^n,V^n)=(\bm{m}_1^2-\frac{1}{3}\mathfrak{i},\bm{m}_2^2-\bm{m}_3^2)$ or $(\bm{m}_1^2-\frac{1}{3}\mathfrak{i},\bm{m}_2\bm{m}_3)$. 
The difference originates from the improper rotations in $\mathcal{C}_{2v}$ and $\mathcal{S}_4$, which assign different type $\pm 1$ for the tensors. 

\appendix
\section{Chebyshev and Jacobi polynomials}
The Chebyshev polynomials of the first and second kind can be given by 
\begin{align}
  T_n(x)=\sum_{2j\le n}{n\choose 2j}(x^2-1)^kx^{n-2j},\quad 
  U_n(x)=\sum_{2j\le n}{n+1\choose 2j+1}(x^2-1)^kx^{n-2j}. 
\end{align}
The Jacobi polynomials $P_{n}^{\mu,\mu}$, where the two indices are equal, can be given by 
\begin{align}
  P_{n}^{(\mu,\mu)}(x)
  =&\frac{\Gamma(2\mu+1)\Gamma(n+\mu+1)}{\Gamma(\mu+1)\Gamma(n+2\mu+1)}\sum_{2j\le n}(-1)^j\frac{\Gamma(n-j+\mu+1/2)}{\Gamma(\mu+1/2)j!(n-2j)!}2^{n-2j}x^{n-2j}, 
\end{align}
where $\Gamma$ is the gamma function. 
It is clear that these polynomials have either odd order terms only, or even order terms only. 

\section{Differential operators on $SO(3)$}
The matrix $\mathfrak{p}\in SO(3)$ can be parameterized by three Euler angles, $\alpha$, $\beta$ and $\gamma$, 
\begin{align}
\mathfrak{p}=\left(
\begin{array}{ccc}
 \cos\alpha &\quad -\sin\alpha\cos\gamma &\quad\sin\alpha\sin\gamma\\
 \sin\alpha\cos\beta &\quad\cos\alpha\cos\beta\cos\gamma-\sin\beta\sin\gamma &
 \quad -\cos\alpha\cos\beta\sin\gamma-\sin\beta\cos\gamma\\
 \sin\alpha\sin\beta &\quad\cos\alpha\sin\beta\cos\gamma+\cos\beta\sin\gamma &
 \quad -\cos\alpha\sin\beta\sin\gamma+\cos\beta\cos\gamma
\end{array}
\right),\nonumber
\end{align}
where $0\le\alpha\le \pi$, $0\le\beta,\gamma <2\pi$. 
Then, the operators $\mathcal{L}_i$ can be written as 
\begin{align}
\mathcal{L}_1&=\frac{\partial}{\partial\gamma},\nonumber\\
\mathcal{L}_2&=\frac{-\cos\gamma}{\sin\alpha}\left(\frac{\partial}{\partial\beta}
-\cos\alpha\frac{\partial}{\partial\gamma}\right)
+\sin\gamma\frac{\partial}{\partial\alpha},\nonumber\\
\mathcal{L}_3&=\frac{\sin\gamma}{\sin\alpha}\left(\frac{\partial}{\partial\beta}
-\cos\alpha\frac{\partial}{\partial\gamma}\right)
+\cos\gamma\frac{\partial}{\partial\alpha}. \nonumber
\end{align}
We could verify that 
\begin{equation}
  \mathcal{L}_i\bm{m}_j=\epsilon_{ijk}\bm{m}_k. \label{diffm}
\end{equation}
When acting on a tensor $U(\mathfrak{p})$, the operators $\mathcal{L}_i$ keep the symmetric traceless property. 
Moreover, if we write $(U)_0=U-\mathfrak{i}U_1$, we can deduce that 
\begin{align}
  \mathcal{L}_i\big(U(\mathfrak{p})\big)_0=\mathcal{L}_iU(\mathfrak{p})-\mathfrak{i}\mathcal{L}_iU_1(\mathfrak{p})
  =\big(\mathcal{L}_iU(\mathfrak{p})\big)_0, \label{diffU0}
\end{align}
where the last equality uses Proposition \ref{U0} and the fact that $\mathcal{L}_iU(\mathfrak{p})-\mathfrak{i}\mathcal{L}_iU_1(\mathfrak{p})$ is a symmetric traceless tensor.

\section{Group representation}


Let us consider the space of all $n$-th order symmetric traceless tensors.
The element $\mathfrak{p}\in SO(3)$ acting on $n$-th order symmetric traceless tensors actually defines a linear transformation, because $\mathfrak{p}\circ (\lambda_1U_1+\lambda_2U_2)=\lambda_1\mathfrak{p}\circ U_1+\lambda_2\mathfrak{p}\circ U_2$. 
Moreover, $U(\mathfrak{p}_1\mathfrak{p}_2)=\mathfrak{p}_1\circ U(\mathfrak{p}_2)$ implies that the map from $\mathfrak{p}$ to the linear transformation is a group representation. 
The equality \eqref{innrot} that the rotation keeps the dot product indicates that this representation is unitary. 

If a subspace $\mathbb{V}$ satisfies $\mathfrak{p}\circ V\in\mathbb{V}$ for any $V\in\mathbb{V}$, it is called an invariant subspace.
If the only invariant subspaces are zero space and the whole space, then the representation is called irreducible. 
In fact, the representation defined by $\mathfrak{p}\circ U$ on $n$-th order symmetric traceless tensors is irreducible. 
Then, applying the theory of group representation (see, for example, \cite{SpecFun}), Proposition \ref{orth_basis_mat} is established. 

The irreducibility is claimed previously (see, for example, \cite{Group_Boerner}), but the derivation might be presented for other mathematical objects only. 
We give a brief note below. 
Any invariant subspace is also invariant under the differential operators. 
To show irreducibility, we start from any nonzero tensor to generate a basis by taking derivatives. 
Direct calculation using \eqref{diffm} yields 
\begin{align*}
  &(\sqrt{1}\mathcal{L}_2\pm\mathcal{L}_3)\big(\bm{m}_1^{n-k}(\bm{m}_2+\sqrt{-1}\bm{m}_3)^k\big)=(-n\mp k)\bm{m}_1^{n-k\pm 1}(\bm{m}_2+\sqrt{-1}\bm{m}_3)^{k\mp 1}.
\end{align*}
Together with \eqref{diffU0}, we arrive at 
\begin{align*}
  &(\sqrt{1}\mathcal{L}_2\pm\mathcal{L}_3)\big(\bm{m}_1^{n-k}(\bm{m}_2+\sqrt{-1}\bm{m}_3)^k\big)_0=(-n\mp k)\big(\bm{m}_1^{n-k\pm 1}(\bm{m}_2+\sqrt{-1}\bm{m}_3)^{k\mp 1}\big)_0.
\end{align*}
Note that $\big(\bm{m}_1^{n-k}(\bm{m}_2\pm\sqrt{-1}\bm{m}_3)^k\big)_0$ where $0\le k\le n$ give a basis of $n$-th order symmetric traceless tensors. 
Thus, for any nonzero tensor $U(\mathfrak{p})$, we act several $(\sqrt{1}\mathcal{L}_2+\mathcal{L}_3)$ on it to obtain $(\bm{m}_2+\sqrt{-1}\bm{m}_3)^n$, then impose several $(\sqrt{1}\mathcal{L}_2-\mathcal{L}_3)$ to obtain the whole basis.

\section{Decomposition of a tensor into symmetric traceless tensors}
Let us consider the decomposition of a general $r$-th order tensor $X$.
We start from extracting the symmetric part $X_{\mathrm{sym}}$.
The difference $X-X_{\mathrm{sym}}$ can be expressed by several terms of the form
$$
X_{\ldots i\ldots j\ldots}-X_{\ldots j\ldots i\ldots}. 
$$
For any second order tensor $Q$, its antisymmetric part is
$$
Q_{ij}-Q_{ji}=
\left(
\begin{array}{ccc}
  0 & Q_{12}-Q_{21} & Q_{13}-Q_{31}\\
  Q_{21}-Q_{12} & 0 & Q_{23}-Q_{32}\\
  Q_{31}-Q_{13} & Q_{32}-Q_{23} & 0\\
\end{array}
\right)
=\epsilon_{ijk}v_k,\quad v=\left(
\begin{array}{c}
  Q_{23}-Q_{32}\\
  Q_{31}-Q_{13}\\
  Q_{12}-Q_{21}
\end{array}
\right). 
$$
Thus, if $X$ is $r$-th order, we have the following expression, 
$$
X_{\ldots i\ldots j\ldots}-X_{\ldots j\ldots i\ldots}=\epsilon_{ijk}Z_{k\ldots}, 
$$
where $Z$ is an $(r-1)$-th order tensor.
Therefore, we arrive at 
\begin{align}
  (X-X_{\mathrm{sym}})_{j_1\ldots j_r}=\sum_{\substack{\{\tau_1,\tau_2\}\cup\{\sigma_1,\ldots\sigma_{r-2}\}\\=\{1,\ldots,r\}}}\epsilon_{j_{\tau_1}j_{\tau_2}\nu}Z_{\nu j_{\sigma_1}\ldots j_{\sigma_{r-2}}}. 
\end{align}
In the above, we use the notation $Z$ for any tensor.
Then, we can repeat this action for each $Z$, decomposing it into its symmetric part and some tensors with lower order.
We shall keep doing it until each tensor becomes symmetric.
Note that two $\epsilon_{ijk}$ can be expressed by some $\delta$: 
$$
\epsilon_{i_1j_1k_1}\epsilon_{i_2j_2k_2}=\left|
\begin{array}{ccc}
  \delta_{i_1i_2} & \delta_{i_1j_2} & \delta_{i_1k_2}\\
  \delta_{j_1i_2} & \delta_{j_1j_2} & \delta_{j_1k_2}\\
  \delta_{k_1i_2} & \delta_{k_1j_2} & \delta_{k_1k_2}
\end{array}
\right|. 
$$
So, if in any term there is no less than two $\epsilon_{ijk}$, we write them into some $\delta$. For example, 
\begin{align*}
  \epsilon_{j_1j_2\nu}\epsilon_{\nu j_3\nu'}Z_{\nu' j_4\ldots j_r}=&(\delta_{j_1j_3}\delta_{j_2\nu'}-\delta_{j_2j_3}\delta_{j_1\nu'})Z_{\nu' j_4\ldots j_r}\\
  =&\delta_{j_1j_3}Z_{j_2 j_4\ldots j_r}-\delta_{j_2j_3}Z_{j_1 j_4\ldots j_r}\\
  \epsilon_{j_1j_2\nu}\epsilon_{j_3j_4\nu'}Z_{\nu\nu' j_5\ldots j_r}=&(\delta_{j_1j_3}\delta_{j_2j_4}-\delta_{j_1j_4}\delta_{j_2j_3})Z_{\nu\nu j_5\ldots j_r}\\
  &+\delta_{j_1j_4}Z_{j_3j_2j_5\ldots j_r}+\delta_{j_2j_3}Z_{j_4j_1 j_5\ldots j_r}\\
  &-\delta_{j_1j_3}Z_{j_4j_2j_5\ldots j_r}-\delta_{j_2j_4}Z_{j_3j_1 j_5\ldots j_r}.
\end{align*}
Thus, we could write each term as the above so that there is at most one $\epsilon_{ijk}$. 
Eventually, we get the following form, 
\begin{align}
  X_{j_1\ldots j_r}=\sum_{\substack{0\le s\le r, s\text{ even}\\ \{\tau_1,\ldots,\tau_s\}\cup\{\sigma_1,\ldots\sigma_{r-s}\}\\=\{1,\ldots,r\}}}
  &\delta_{j_{\tau_1}j_{\tau_2}}\ldots\delta_{j_{\tau_{s-1}}j_{\tau_s}}Z_{j_{\sigma_1}\ldots j_{\sigma_{r-s}}}\nonumber\\
  &+\epsilon_{j_{\tau_1}j_{\tau_2}\nu}\delta_{j_{\tau_3}j_{\tau_4}}\ldots\delta_{j_{\tau_{s-1}}j_{\tau_s}}Z_{\nu j_{\sigma_1}\ldots j_{\sigma_{r-s}}}.\label{symdecomp}
\end{align}
Here, all the tensors $Z$ are symmetric tensors. 
Based on \eqref{symdecomp}, we could write each $Z$ as $U+\mathfrak{i} Z_1$ where $U$ is symmetric traceless, using Proposition \ref{U0}. We might obtain another type of term.
For example, when decomposing $\epsilon_{j_1j_2\nu}Z_{\nu j_3\ldots j_r}$, it will yield a term 
\begin{align*}
  \epsilon_{j_1j_2\nu}\delta_{\nu j_3}(Z_1)_{j_4\ldots j_r}=\epsilon_{j_1j_2j_3}(Z_1)_{j_4\ldots j_r}. 
\end{align*}
Therefore, $X$ is written as \eqref{symtrlsdecomp}. 

\section{Linearly independent terms in the expansion of $\mathscr{M}_4^{0,0,0}$}
In this section, we look into \eqref{terms_a4} and find out the linearly independent terms. 
We begin with two equalities. 
\begin{lemma}
  Suppose $Q_i$ are second order symmetric traceless tensors; $\bm{p}_i$ are vectors. 
  Then we have
  \begin{align}
    &2\mathrm{tr}(Q_1Q_2Q_3Q_4+Q_1Q_2Q_4Q_3+Q_1Q_3Q_2Q_4)\nonumber\\
    &\qquad\qquad=\mathrm{tr}(Q_1Q_2)\mathrm{tr}(Q_3Q_4)+\mathrm{tr}(Q_1Q_3)\mathrm{tr}(Q_2Q_4)+\mathrm{tr}(Q_1Q_4)\mathrm{tr}(Q_2Q_3),\\
    &(\bm{p}_1\times\bm{p}_2)\otimes\bm{p}_3+(\bm{p}_2\times\bm{p}_3)\otimes\bm{p}_1+(\bm{p}_3\times\bm{p}_1)\otimes\bm{p}_2\nonumber\\
    &+\bm{p}_3\otimes(\bm{p}_1\times\bm{p}_2)+\bm{p}_1\otimes(\bm{p}_2\times\bm{p}_3)+\bm{p}_2\otimes(\bm{p}_3\times\bm{p}_1)=\mathrm{det}(\bm{p}_1,\bm{p}_2,\bm{p}_3)\mathfrak{i}. 
  \end{align}
  Here, $Q_1Q_2$ is understood as matrix product, $\mathrm{tr}$ is the trace of a matrix, and $\times$ is the cross product of vectors in $\mathbb{R}^3$. 
\end{lemma}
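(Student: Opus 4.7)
The plan is to prove each of the two identities separately, reducing each to a small verification. For the first identity, I first observe that both sides are multilinear in $(Q_1,Q_2,Q_3,Q_4)$. The right-hand side is visibly $S_4$-symmetric (it sums over the three pair partitions of $\{1,2,3,4\}$). For the left-hand side, since each $Q_i$ is symmetric, every trace $\mathrm{tr}(Q_{i_1}Q_{i_2}Q_{i_3}Q_{i_4})$ is invariant under cyclic rotation and under reversal of the word $i_1i_2i_3i_4$; the three traces appearing in the identity are representatives of the three resulting equivalence classes partitioning the $4!=24$ words, and the $S_4$ action on labels permutes these three classes among themselves, so their sum is $S_4$-symmetric as well. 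By polarization of symmetric multilinear forms, it then suffices to verify the identity on the diagonal $Q_1=Q_2=Q_3=Q_4=Q$, where it reduces to $2\,\mathrm{tr}(Q^4)=(\mathrm{tr}(Q^2))^2$ for every symmetric traceless $Q$.

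This diagonal identity is an immediate consequence of Cayley--Hamilton for the $3\times 3$ matrix $Q$: the characteristic polynomial combined with $\mathrm{tr}(Q)=0$ gives $Q^3=\tfrac12\mathrm{tr}(Q^2)\,Q+\det(Q)\,I$. Multiplying by $Q$ and taking the trace kills the $\det(Q)\,\mathrm{tr}(Q)$ term and yields $\mathrm{tr}(Q^4)=\tfrac12(\mathrm{tr}(Q^2))^2$ at once.

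For the second identity, my plan is to exploit alternation. Under any transposition of $(\bm{p}_1,\bm{p}_2,\bm{p}_3)$, each summand on the left-hand side either picks up a sign from a cross product or is mapped to another summand of opposite sign, so the whole left-hand side is totally alternating and trilinear; the right-hand side $\det(\bm{p}_1,\bm{p}_2,\bm{p}_3)\,\mathfrak{i}$ is obviously alternating and trilinear as well. Both sides therefore vanish whenever $\bm{p}_1,\bm{p}_2,\bm{p}_3$ are linearly dependent, so it suffices to test the identity on a single linearly independent triple; the choice $\bm{p}_i=\bm{e}_i$ collapses it to a direct check using $\bm{e}_1\times\bm{e}_2=\bm{e}_3$ and its cyclic permutations.

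The hard part is the symmetry bookkeeping behind polarization in the first identity: one must carefully verify that the three traces listed really do exhaust the three cycle-plus-reversal classes of arrangements of $\{1,2,3,4\}$, so that their sum is a genuine $S_4$-symmetric multilinear form in the $Q_i$. Once that is in hand, Cayley--Hamilton finishes the first identity in one line, and the second identity is essentially a basis calculation.
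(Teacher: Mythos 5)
Your proof of the first identity is correct but takes a genuinely different route from the paper's. The paper starts from the two-variable identity $2\mathrm{tr}(Q^3B)=\mathrm{tr}(Q^2)\mathrm{tr}(QB)$ for symmetric traceless $Q,B$ (verified by diagonalizing $Q$) and polarizes it concretely through the successive substitutions $Q=B=Q_1+Q_2$, then $Q_1\to Q_1+Q_3$, then $Q_2\to Q_2+Q_4$. You instead establish full $S_4$-symmetry of both sides --- the key point being that the three traces are representatives of the three orbits of the $24$ words under cyclic rotation and reversal (which needs the symmetry of the $Q_i$), and that relabelling permutes these orbits --- and then invoke abstract polarization of symmetric multilinear forms to reduce everything to the diagonal identity $2\mathrm{tr}(Q^4)=(\mathrm{tr}(Q^2))^2$, obtained from Cayley--Hamilton for a traceless $3\times3$ matrix. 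Both arguments are sound and rest on the same engine (the paper's base identity is itself the $B$-polarization of your Cayley--Hamilton fact); yours isolates the single scalar identity driving everything and makes the combinatorial bookkeeping explicit, while the paper's iterated substitution avoids having to justify the one-shot symmetry reduction.

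For the second identity your reduction is the right one --- both sides are alternating and trilinear, so it suffices to evaluate at $(\bm{e}_1,\bm{e}_2,\bm{e}_3)$ --- but you must actually carry out that evaluation rather than assert that it ``collapses to a direct check.'' Doing so gives $\bm{e}_1\otimes\bm{e}_1+\bm{e}_2\otimes\bm{e}_2+\bm{e}_3\otimes\bm{e}_3=\mathfrak{i}$ for the first three summands and the same again for their transposes, i.e.\ $2\mathfrak{i}$ on the left, against $\mathrm{det}(\bm{e}_1,\bm{e}_2,\bm{e}_3)\,\mathfrak{i}=\mathfrak{i}$ on the right: the identity as printed is off by a factor of $2$, and the right-hand side should read $2\,\mathrm{det}(\bm{p}_1,\bm{p}_2,\bm{p}_3)\,\mathfrak{i}$. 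This is harmless for the paper's application, since the $\mathfrak{i}$ term is annihilated when the identity is contracted against symmetric traceless tensors to produce \eqref{a4epsrel}, but the deferred basis check is precisely the step that detects the discrepancy, so it cannot be omitted from a complete proof.
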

\begin{proof}
  For any two symmetric traceless tensors $Q$ and $B$, we have 
  \begin{equation}
    2\mathrm{tr}(Q^3B)=\mathrm{tr}(Q^2)\mathrm{tr}(QB). 
  \end{equation}
  It can be verified by diagonalizing $Q$. 
  Then, let $Q=B=Q_1+Q_2$ to derive 
  \begin{align}
    &2\mathrm{tr}(2Q_1^2Q_2^2+Q_1Q_2Q_1Q_2)=2(\mathrm{tr}(Q_1Q_2))^2+\mathrm{tr}(Q_1^2)\mathrm{tr}(Q_2^2).
  \end{align}
  Substituting $Q_1$ with $Q_1+Q_3$, we deduce that 
  \begin{align}
    &2\mathrm{tr}(2Q_1Q_3Q_2^2+Q_1Q_2Q_3Q_2)=2\mathrm{tr}(Q_1Q_2)\mathrm{tr}(Q_2Q_3)+\mathrm{tr}(Q_1Q_3)\mathrm{tr}(Q_2^2).
  \end{align}
  Finally, substitute $Q_2$ with $Q_2+Q_4$ to obtain what is stated in the lemma.

  The second equality can be verified directly. 
\end{proof}

To simplify the notation, from now on, we omit the tensor order of $U_i^{n_i}$, i.e. write $U_i^{n_i}$ in short as $U_i$. 
Although we do not write out, we always use $n_i$ as the order of $U_i$. 
The above lemma leads to 
\begin{align}
  &2\mathpzc{a}_4(U_1,U_2,U_3,U_4;l_{12}+1,l_{13}+1,l_{14},l_{23},l_{24}+1,l_{34}+1)\nonumber\\
  &+2\mathpzc{a}_4(U_1,U_2,U_3,U_4;l_{12}+1,l_{13},l_{14}+1,l_{23}+1,l_{24},l_{34}+1)\nonumber\\
  &+2\mathpzc{a}_4(U_1,U_2,U_3,U_4;l_{12},l_{13}+1,l_{14}+1,l_{23}+1,l_{24}+1,l_{34})\nonumber\\
  =&\mathpzc{a}_4(U_1,U_2,U_3,U_4;l_{12}+2,l_{13},l_{14},l_{23},l_{24},l_{34}+2)\nonumber\\
  &+\mathpzc{a}_4(U_1,U_2,U_3,U_4;l_{12},l_{13}+2,l_{14},l_{23},l_{24}+2,l_{34})\nonumber\\
  &+\mathpzc{a}_4(U_1,U_2,U_3,U_4;l_{12},l_{13},l_{14}+2,l_{23}+2,l_{24},l_{34}).\label{a4noepsrel}
\end{align}
In the above, $n_1+n_2+n_3+n_4$ is even. 
Thus, in $\mathpzc{a}_4(U_i;l_{ij})$, for all the terms with $l_{12},l_{34}\ge 2$, they can be expressed linearly by those with $\min\{l_{12},l_{34}\}\le 1$. 
From \eqref{lrelation}, we have $n_1+n_2-2l_{12}=n_3+n_4-2l_{34}$.
So we can choose the terms where 
\begin{align}
  l_{12}\le 1\text{ if }n_1+n_2\le n_3+n_4;\ l_{34}\le 1,\text{ if }n_1+n_2\ge n_3+n_4. \label{a4noepsl}
\end{align}
Here, we notice that $l_{12}=l_{34}$ if $n_1+n_2=n_3+n_4$. 

For terms involving $\epsilon$, the lemma implies 
\begin{align}
  \mathpzc{a}_4(U_1,&U_2,U_3,U_4;l_{12},l_{13},l_{14}+1,l_{23},l_{24},l_{34},(123))\nonumber\\
  &-\mathpzc{a}_4(U_1,U_2,U_3,U_4;l_{12},l_{13}+1,l_{14},l_{23},l_{24},l_{34},(124))\nonumber\\
  &+\mathpzc{a}_4(U_1,U_2,U_3,U_4;l_{12}+1,l_{13},l_{14},l_{23},l_{24},l_{34},(134))=0,\nonumber\\
  \mathpzc{a}_4(U_1,&U_2,U_3,U_4;l_{12},l_{13},l_{14},l_{23},l_{24}+1,l_{34},(123))\nonumber\\
  &-\mathpzc{a}_4(U_1,U_2,U_3,U_4;l_{12},l_{13},l_{14},l_{23}+1,l_{24},l_{34},(124))\nonumber\\
  &-\mathpzc{a}_4(U_1,U_2,U_3,U_4;l_{12}+1,l_{13},l_{14},l_{23},l_{24},l_{34},(234))=0,\nonumber\\
  \mathpzc{a}_4(U_1,&U_2,U_3,U_4;l_{12},l_{13},l_{14},l_{23},l_{24},l_{34}+1,(123))\nonumber\\
  &+\mathpzc{a}_4(U_1,U_2,U_3,U_4;l_{12},l_{13},l_{14},l_{23}+1,l_{24},l_{34},(134))\nonumber\\
  &-\mathpzc{a}_4(U_1,U_2,U_3,U_4;l_{12},l_{13}+1,l_{14},l_{23},l_{24},l_{34},(234))=0,\nonumber\\
  \mathpzc{a}_4(U_1,&U_2,U_3,U_4;l_{12},l_{13},l_{14},l_{23},l_{24},l_{34}+1,(124))\nonumber\\
  &-\mathpzc{a}_4(U_1,U_2,U_3,U_4;l_{12},l_{13},l_{14},l_{23},l_{24}+1,l_{34},(134))\nonumber\\
  &+\mathpzc{a}_4(U_1,U_2,U_3,U_4;l_{12},l_{13},l_{14}+1,l_{23},l_{24},l_{34},(234))=0.\label{a4epsrel}
\end{align}
Notice that $n_1+n_2+n_3+n_4$ is odd in these equalities. 
Similarly, in $\mathpzc{a}_4(U_i;l_{ij},(\tau_1\tau_2\tau_3))$, we can choose the terms where 
\begin{align}
  &(\tau_1\tau_2\tau_3)=(123),(124),\ l_{34}=0,\text{ if }n_1+n_2\ge n_3+n_4+1,\nonumber\\
  &(\tau_1\tau_2\tau_3)=(134),(234),\ l_{12}=0,\text{ if }n_1+n_2+1\le n_3+n_4,\label{a4epsl}
\end{align}
because other terms can be linearly expressed by them. 

Let us consider the linearly independent terms in $\mathpzc{a}_4(U_i(\mathfrak{p}_i);l_{ij})$ and $\mathpzc{a}_4(U_i(\mathfrak{p}_i);l_{ij},(\tau_1\tau_2\tau_3))$. Here, we use the same approach as in Theorem \ref{twoforms}. 
\begin{theorem}
  Let $U_2,U_3,U_4$ be fixed and $U_1\in \mathbb{W}^{n_1}$ where $n_1$ takes all the possible values. The terms $\mathpzc{a}_4(U_i(\mathfrak{p}_i);l_{ij})$ with the condition \eqref{a4noepsl}, and $\mathpzc{a}_4(U_i(\mathfrak{p}_i);l_{ij},(\tau_1\tau_2\tau_3))$ with the condition \eqref{a4epsl}, are linearly independent. 
\end{theorem}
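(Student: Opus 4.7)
The plan is to parallel the strategy used in Theorem \ref{twoforms}: start from the original product expansion \eqref{M4expansion}, whose total number of linearly independent summands is known, rewrite these summands through the $\mathpzc{a}_4$ notation via the decomposition \eqref{symtrlsdecomp}, then exploit the linear relations \eqref{a4noepsrel} and \eqref{a4epsrel} to cut down to the restricted family specified by \eqref{a4noepsl} and \eqref{a4epsl}, and finally show that this restricted family still contains exactly the right number of terms. Since spanning a space with matching cardinality forces linear independence, this closes the proof.

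More concretely, with $U_i\in\mathbb{W}^{n_i}$ for $i=2,3,4$ fixed and the $Y_i\in\mathbb{W}^{n_i}$ ranging, the functions in \eqref{M4expansion} are linearly independent by the tensor product of the orthogonal bases from Proposition \ref{orth_basis_mat}, giving a space of dimension $(2n_2+1)(2n_3+1)(2n_4+1)$. Applying \eqref{symtrlsdecomp} to decompose $Y_2\otimes Y_3\otimes Y_4$ into symmetric traceless tensors contracted against $U_1(\mathfrak{p}_1)$ rewrites the same space linearly through the bare $\mathpzc{a}_4$ expressions of \eqref{a4_def}, with $U_1\in\mathbb{W}^{n_1}$ for admissible $n_1$ and indices $l_{ij}$ subject to \eqref{lrelation}. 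The relations \eqref{a4noepsrel} and \eqref{a4epsrel} then permit the reduction to the subfamily described by \eqref{a4noepsl} and \eqref{a4epsl}, so this subfamily already spans the same space.

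The remaining step is a cardinality match. For each admissible $n_1$, the contribution is $(2n_1+1)$, coming from the choices of $U_1\in\mathbb{W}^{n_1}$, multiplied by the number of $(l_{ij})$-patterns satisfying the restriction. System \eqref{lrelation} cuts the six $l_{ij}$ down to two free parameters (say $l_{12}$ and $l_{13}$), and the restriction freezes $l_{12}\in\{0,1\}$ or $l_{34}\in\{0,1\}$ according to the sign of $D=n_1+n_2-n_3-n_4$. What remains is a range on $l_{13}$ that is linear in $n_1,\ldots,n_4$, which can be summed explicitly and then summed over $n_1$ running through its admissible interval $|n_2-n_3-n_4|,\ldots,n_2+n_3+n_4$ with the appropriate parity of $K=n_1+n_2+n_3+n_4$. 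A telescoping reminiscent of the computation closing Theorem \ref{twoforms} should recover $(2n_2+1)(2n_3+1)(2n_4+1)$. The $\epsilon$ case is treated in the same way, splitting on which triples $(\tau_1\tau_2\tau_3)$ are retained by \eqref{a4epsl}.

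The main obstacle is precisely this combinatorial bookkeeping. The restrictions in \eqref{a4noepsl} and \eqref{a4epsl} are piecewise in $D$, so the sum splits into several subcases ($D<0$, $D=0$, $D>0$, and analogously $D\le -1$ versus $D\ge 1$ in the $\epsilon$ setting). The most delicate point is the boundary $D=0$, where both $l_{12}\le 1$ and $l_{34}\le 1$ are simultaneously active and get coupled via $l_{12}=l_{34}$ derived from \eqref{lrelation}; avoiding double counting there is the technical heart of the argument. Once the totals are confirmed to match $(2n_2+1)(2n_3+1)(2n_4+1)$, linear independence follows immediately from the spanning property already established.
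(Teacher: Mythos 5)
Your overall strategy is the same as the paper's: the restricted family spans because of the relations \eqref{a4noepsrel} and \eqref{a4epsrel}, the ambient space \eqref{M4expansion} has dimension $(2n_2+1)(2n_3+1)(2n_4+1)$, and a spanning set of that cardinality is automatically independent. That logical frame is correct. The problem is that the cardinality count is the \emph{entire} substance of this theorem, and you have only announced a plan for it ("a telescoping \dots should recover" the right number) while explicitly deferring the subcases you yourself identify as the technical heart. As written, nothing has actually been verified: you have not determined, for a fixed $n_1$, how many $(l_{ij})$-patterns survive the restrictions \eqref{a4noepsl} and \eqref{a4epsl} (this number varies with $n_1$ in a piecewise way governed by the six nonnegativity constraints), nor have you checked that the sum over $n_1$ closes. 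Your stated range $|n_2-n_3-n_4|\le n_1\le n_2+n_3+n_4$ is also not the correct description of which $n_1$ actually occur for each of the subfamilies; the admissible intervals differ between the $\delta$-only terms, the $\min\{l_{12},l_{34}\}=1$ terms, and the four $\epsilon$-triples, and getting these intervals right (and disjointifying their contributions) is precisely where the work lies.

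The paper sidesteps much of this mess with a device you did not consider: induction on $n_2$ and $n_3$. Any term with $l_{23}\ge 1$ is put in bijection with a term for orders $(n_1,n_2-1,n_3-1,n_4)$, contributing $(2n_2-1)(2n_3-1)(2n_4+1)$ by the inductive hypothesis; it then remains to count only the terms with $l_{23}=0$. The payoff is that once $l_{23}=0$ and one of $l_{12},l_{34}$ is pinned by the restriction, the linear system \eqref{lrelation} determines all six $l_{ij}$ uniquely from $n_1$, so each of the six resulting cases contributes exactly $\sum(2n_1+1)$ over an explicit interval, and the intervals combine to give $4(n_2+n_3)(2n_4+1)$, which is the required difference of consecutive products. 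If you want to complete your direct route instead, you must carry out the piecewise count per $n_1$ in full, including the $D=0$ coupling $l_{12}=l_{34}$ and the selection among the triples $(123),(124),(134),(234)$; until one of these computations is actually done, the proof is incomplete.
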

\begin{proof}
Recall that these terms can express \eqref{M4expansion} linearly. 
In \eqref{M4expansion}, the tensor $Y_2\otimes Y_3\otimes Y_4$ has $(2n_2+1)(2n_3+1)(2n_4+1)$ choices. 
In what follows, we show that the number of terms expressed by $U_i$ is exactly $(2n_2+1)(2n_3+1)(2n_4+1)$. 

We use induction on $n_2$ and $n_3$. 
When any of $n_2$, $n_3$ or $n_4$ is zero, it reduces to the case $\mathpzc{a}_3$ (see the discussion below \eqref{terms_a3p}).
So, we discuss the case where $n_2,n_3,n_4\ge 1$. 
If $l_{23}\ge 1$, the number of terms equals to the case where $U_2$, $U_3$ and $U_4$ are of the order $n_2-1$, $n_3-1$, and $n_4$, respectively, which is $(2n_2-1)(2n_3-1)(2n_4+1)$ by the assumption of induction. 
Now let $l_{23}=0$. To count the number, we use \eqref{lrelation} and notice the constraints $l_{ij}\ge 0$. There are six cases: 
\begin{enumerate}
\item $\mathpzc{a}_4(U_i(\mathfrak{p}_i);l_{ij})$ where $l_{12}=0$ or $l_{34}=0$. In this case, $n_1+n_2+n_3+n_4$ is even, and 
  $$
  l_{12}+l_{24}=n_2,\ l_{13}+l_{34}=n_3. 
  $$
  \begin{enumerate}[(a)]
  \item When $l_{12}=0$, we solve $l_{24}=n_2$, and 
  $$
  l_{34}=\frac{n_3+n_4-n_1-n_2}{2},\ l_{13}=\frac{n_1+n_2+n_3-n_4}{2},\ l_{14}=\frac{n_1+n_4-n_2-n_3}{2}. 
  $$
  It yields
  $$
  n_1\le n_3+n_4-n_2,\ n_1\ge n_2+n_3-n_4,\ n_1\ge n_4-n_3-n_2. 
  $$
  \item When $l_{34}=0$, we solve $l_{13}=n_3$, and 
  $$
  l_{12}=\frac{n_1+n_2-n_3-n_4}{2},\ l_{24}=\frac{n_2+n_3+n_4-n_1}{2},\ l_{14}=\frac{n_1+n_4-n_2-n_3}{2}. 
  $$
  It yields
  $$
  n_1\ge n_3+n_4-n_2,\ n_1\le n_2+n_3+n_4,\ n_1\ge n_2+n_3-n_4. 
  $$
  \end{enumerate}
  We combine (a) and (b). If $n_3+n_4-n_2<0$, then the range of $n_1$ is $n_2+n_3-n_4\le n_1\le n_2+n_3+n_4$. If $n_3+n_4-n_2\ge 0$, then $|n_2+n_3-n_4|\le n_1\le n_2+n_3+n_4$. So, we have $|n_2+n_3-n_4|\le n_1\le n_2+n_3+n_4$ where $n_1$ has the same parity as $n_2+n_3+n_4$. 
\item $\mathpzc{a}_4(U_i(\mathfrak{p}_i);l_{ij})$ where $\min\{l_{12},l_{34}\}= 1$. 
  Similar to the above, we deduce that $1+|n_2+n_3-n_4-1|\le n_1\le n_2+n_3+n_4-2$ where $n_1$ has the same parity as $n_2+n_3+n_4$. 
\item $\mathpzc{a}_4(U_i(\mathfrak{p}_i);l_{ij},(123))$ where $l_{34}=0$.
  We solve that $l_{13}=n_3-1$, and 
  $$
  l_{12}=\frac{n_1+n_2-n_3-n_4-1}{2},\ l_{24}=\frac{n_2+n_3+n_4-n_1-1}{2},\ l_{14}=\frac{n_1+n_4-n_2-n_3+1}{2}. 
  $$
  It yields
  $$
  n_1\ge n_3+n_4-n_2+1,\ n_1\le n_2+n_3+n_4-1,\ n_1\ge n_2+n_3-n_4-1. 
  $$
\item $\mathpzc{a}_4(U_i(\mathfrak{p}_i);l_{ij},(124))$ where $l_{34}=0$.
  We solve that $l_{13}=n_3$, and 
  $$
  l_{12}=\frac{n_1+n_2-n_3-n_4-1}{2},\ l_{24}=\frac{n_2+n_3+n_4-n_1-1}{2},\ l_{14}=\frac{n_1+n_4-n_2-n_3-1}{2}. 
  $$
  It yields 
  $$
  n_1\ge n_3+n_4-n_2+1,\ n_1\le n_2+n_3+n_4-1,\ n_1\ge n_2+n_3-n_4+1. 
  $$
\item $\mathpzc{a}_4(U_i(\mathfrak{p}_i);l_{ij},(134))$ where $l_{12}=0$.
  We solve that $l_{24}=n_2$, and 
  $$
  l_{34}=\frac{n_3+n_4-n_1-n_2-1}{2},\ l_{24}=\frac{n_1+n_2+n_3-n_4-1}{2},\ l_{14}=\frac{n_1+n_4-n_2-n_3-1}{2}. 
  $$
  It yields
  $$
  n_1\le n_3+n_4-n_2-1,\ n_1\ge n_4-n_2-n_3+1,\ n_1\ge n_2+n_3-n_4+1. 
  $$
\item $\mathpzc{a}_4(U_i(\mathfrak{p}_i);l_{ij},(234))$ where $l_{12}=0$.
  We solve that $l_{24}=n_2-1$, and 
  $$
  l_{34}=\frac{n_3+n_4-n_1-n_2-1}{2},\ l_{24}=\frac{n_1+n_2+n_3-n_4-1}{2},\ l_{14}=\frac{n_1+n_4-n_2-n_3-1}{2}. 
  $$
  It yields
  $$
  n_1\le n_3+n_4-n_2-1,\ n_1\ge n_4-n_2-n_3+1,\ n_1\ge n_2+n_3-n_4-1. 
  $$
\end{enumerate}
In cases 3 to 6, $n_1$ has the different parity from $n_2+n_3+n_4$. 
Combine case 3 and case 6. If $n_2>n_4$, then case 6 is empty, and we have $n_2+n_3-n_4-1\le n_1\le n_2+n_3+n_4-1$. If $n_2\le n_4$, then case 6 is $|n_2+n_3-n_4-1|\le n_1\le n_3+n_4-n_2-1$, and case 3 is $n_3+n_4-n_2+1\le n_1\le n_2+n_3+n_4-1$.
So, we arrive at $|n_2+n_3-n_4-1|\le n_1\le n_2+n_3+n_4-1$. 
Similarly, we combine case 4 and case 5 to have $|n_2+n_3-n_4|+1\le n_1\le n_2+n_3+n_4-1$. 

So, let us combine cases 1, 4 and 5. The range of $n_1$ is $|n_2+n_3-n_4|\le n_1\le n_2+n_3+n_4$. 
The cases 2, 3 and 6 lead to $|n_2+n_3-n_4-1|\le n_1\le n_2+n_3+n_4-1$. 
Thus, the number of terms is
\begin{align*}
  &\sum_{r=|n_2+n_3-n_4|}^{n_2+n_3+n_4}(2r+1)+\sum_{r=|n_2+n_3-n_4-1|}^{n_2+n_3+n_4-1}(2r+1)\\
  =&4(n_2+n_3)(2n_4+1)\\
  =&(2n_2+1)(2n_3+1)(2n_4+1)-(2n_2-1)(2n_3-1)(2n_4+1). 
\end{align*}
This concludes the proof. 
\end{proof}

The theorem indicates that \eqref{a4noepsrel} and \eqref{a4epsrel} give all the linear relations without missing anything. 
Now, we consider \eqref{terms_a4} where label permutations are taken into consideration. 
As we have discussed in the main text, once the tensors appearing in $\mathpzc{a}_4$ are chosen, we can arrange them in the order we want. 
When $U_i$ are mutually unequal, the conditions \eqref{a4noepsl} and \eqref{a4epsl} are just those in the Table \ref{sum_term}. 
However, these conditions are not suitable if some of $U_i$ are equal. 

We omit the case where $U_1=U_2\ne U_3,U_4$, and only examine the case $U_1=U_2=U_3$. 

\textbf{Problem 1:} Consider \eqref{terms_a4_1} where $U_1=U_2=U_3$. It is equivalent to consider the linearly independent terms of $\mathpzc{a}_4(U_1,U_1,U_1,U_4;l_{ij})$. 
The relation \eqref{lrelation} between $l_{ij}$ can be rewritten as 
\begin{align*}
  &2(l_{12}-l_{34})=2(l_{13}-l_{24})=2(l_{23}-l_{14})=n_1-n_4, \\
  &l_{14}+l_{24}+l_{34}=n_4. 
\end{align*}
Thus, we define $(d_1,d_2,d_3)=(l_{12},l_{13},l_{23})$ if $n_1\le n_4$, and $(d_1,d_2,d_3)=(l_{34},l_{24},l_{14})$ if $n_1\ge n_4$.
We have $d_1+d_2+d_3=\min\{(3n_1-n_4)/2,n_4\}\triangleq d$ is a constant determined by $n_1$ and $n_4$. 
Define $\psi(d_1,d_2,d_3)=\mathpzc{a}_4(U_1,U_1,U_1,U_4;l_{ij})$. 
Similar to \eqref{a3swlb}, we have $\psi(d_1,d_2,d_3)=\psi(d_{\sigma(1)},d_{\sigma(2)},d_{\sigma(3)})$ for any permutation $\sigma$. 
The linear relation \eqref{a4noepsrel} is then written as 
\begin{align*}
  &\psi(d_1+2,d_2,d_3)+\psi(d_1,d_2+2,d_3)+\psi(d_1,d_2,d_3+2)\\
  =&2\psi(d_1+1,d_2+1,d_3)+2\psi(d_1+1,d_2,d_3+1)+2\psi(d_1,d_2+1,d_3+1). 
\end{align*}

According to the conditions in Table \ref{sum_term}, we need to show that $\psi(i,i,d-2i)$ for $3i\le d$ are linearly independent and can linearly express others.
We use induction on $d$. 
For $d=0,1,2$ we verify directly. 
When $d=0$, there is only one term $\psi(0,0,0)$. 
When $d=1$, by the permutational symmetry there is only one term $\psi(0,0,1)$. 
When $d=2$, we have 
\begin{align*}
  3\psi(0,0,2)=&\psi(2,0,0)+\psi(0,2,0)+\psi(0,0,2)\\
  =&2\psi(1,1,0)+2\psi(1,0,1)+2\psi(0,1,1)=6\psi(1,1,0). 
\end{align*}
Thus, there is only one linearly independent term $\psi(0,0,2)$. 

Assume $d\ge 3$. The linear relations between $\psi$ where $d_i\ge 1$ are identical to $\psi(d_1-1,d_2-1,d_3-1)$ for $d-3$. By the assumption of induction, $\psi(i,i,d-2i)$ for $1\le i\le d-2i$ give the linearly independent terms. 
Thus, let us assume that $\psi(d_1,d_2,d_3)$ are all known when $d_i\ge 1$, and solve $\psi$ when some $d_i$ are zero. 
Now let $d_1=0$. 
If $d_2,d_3\ge 1$, then 
\begin{align*}
  &\psi(0,d_2+2,d_3)-2\psi(0,d_2+1,d_3+1)+\psi(0,d_2,d_3+2)\\
  =&2\psi(1,d_2+1,d_3)+2\psi(1,d_2,d_3+1)-\psi(2,d_2,d_3) 
\end{align*}
is known. 
For $d_2=0$, we have
\begin{align*}
  \psi(0,2,d-2)-2\psi(0,1,d-1)+\psi(0,0,d)=2\psi(1,1,d-2)+2\psi(1,0,d-1)-\psi(2,0,d-2). 
\end{align*}
Use invariance under permutation, we get 
\begin{align*}
  -\psi(0,2,d-2)+2\psi(0,1,d-1)=-\psi(1,1,d-2)+\frac{1}{2}\psi(0,0,d). 
\end{align*}
Define a vector $\bm{z}$ where $z_i=\psi(0,i,d-i)$ for $i=1,\ldots, d-1$.
The above linear equations can be written as 
\begin{align*}
  \left(
  \begin{array}{cccc}
    2 & -1 & &\\
    -1 & 2 & \ddots &\\
    & \ddots &\ddots & -1\\
    & & -1 & 2
  \end{array}
  \right)\bm{z}=\bm{b}+\left(
  \begin{array}{c}
    \frac{1}{2}\psi(0,0,d)\\
    0\\
    \vdots\\
    0\\
    \frac{1}{2}\psi(0,0,d)
  \end{array}
  \right),
\end{align*}
where $\bm{b}$ satisfies $b_i=b_{d-i}$ that is given by $\psi(d_1,d_2,d_3)$ with $d_i\ge 1$. 
Hence, the value of $\psi(0,0,d)$ is needed to fully determine $\psi(0,d_2,d_3)$, and the solution also satisfies $z_i=z_{d-i}$.

\textbf{Problem 2:} Consider \eqref{terms_a4_2} where $U_1=U_2=U_3$. Again, we shall consider the linearly independent terms of $\mathpzc{a}_4(U_1,U_1,U_1,U_4;l_{ij},(\tau_1\tau_2\tau_3))$. 
Using arguments similar to \eqref{a3swlb}, we can deduce that 
\begin{align*}
  \mathpzc{a}_4(U_1,U_1,&U_1,U_4;l_{12},l_{13},l_{14},l_{23},l_{24},l_{34},(124))\nonumber\\
  =&\mathpzc{a}_4(U_1,U_1,U_1,U_4;l_{12},l_{23},l_{24},l_{13},l_{14},l_{34},(214))\nonumber\\
  =&-\mathpzc{a}_4(U_1,U_1,U_1,U_4;l_{12},l_{23},l_{24},l_{13},l_{14},l_{34},(124)),\nonumber\\
  \mathpzc{a}_4(U_1,U_1,&U_1,U_4;l_{12},l_{13},l_{14},l_{23},l_{24},l_{34},(134))\nonumber\\
  =&\mathpzc{a}_4(U_1,U_1,U_1,U_4;l_{13},l_{12},l_{14},l_{23},l_{34},l_{24},(124)),\nonumber\\
  \mathpzc{a}_4(U_1,U_1,&U_1,U_4;l_{12},l_{13},l_{14},l_{23},l_{24},l_{34},(234))\nonumber\\
  =&-\mathpzc{a}_4(U_1,U_1,U_1,U_4;l_{23},l_{13},l_{34},l_{12},l_{24},l_{14},(124)).\nonumber
\end{align*}
Thus, it allows us not to consider the terms with $(\tau_1\tau_2\tau_3)=(134),(234)$. 
When $(\tau_1\tau_2\tau_3)=(124)$, the relations between $l_{ij}$ require
\begin{align*}
  &2(l_{12}-l_{34}+1)=2(l_{13}-l_{24})=2(l_{23}-l_{14})=n_1-n_4+1, \\
  &l_{14}+l_{24}+l_{34}=n_4-1. 
\end{align*}
Thus, we define $(d_1,d_2,d_3)=(l_{12},l_{13},l_{23})$ if $n_1\le n_4-1$, and $(d_1,d_2,d_3)=(l_{34},l_{24},l_{14})$ if $n_1\ge n_4+1$.
We have $d_1+d_2+d_3=\min\{n_4-1,(3n_1-n_4-1)/2\}=d$.
To simplify the presentation, we only discuss the case $n_1\le n_4-1$. 
Define $\varphi(d_1,d_2,d_3)=\mathpzc{a}_4(U_i^{n_i};l_{ij},(124))$. 
Then we have 
\begin{align}
  \varphi(d_1,d_2,d_3)=-\varphi(d_1,d_3,d_2). \label{oppvar}
\end{align}
Use permutational symmetry on $\mathpzc{a}_4(U_1,U_1,U_1,U_4;l_{ij},(123))$, the first three equations in \eqref{a4epsrel} become
\begin{align}
  &\varphi(d_1,d_2,d_3+1)-\varphi(d_3,d_2,d_1+1)\nonumber\\
  =&-\varphi(d_1,d_3,d_2+1)+\varphi(d_2,d_3,d_1+1)\nonumber\\
  =&\varphi(d_3,d_1,d_2+1)-\varphi(d_2,d_1,d_3+1)\nonumber\\
  =&\mathpzc{a}_4(U_1,U_1,U_1,U_4;d_1,d_2,d_3+c,d_3,d_2+c,d_1+c,(123)),\quad c=\frac{n_4+1-n_1}{2}. \label{recvar}
\end{align}
The fourth becomes 
\begin{align}
  \varphi(d_1,d_2,d_3)+\varphi(d_2,d_3,d_1)+\varphi(d_3,d_1,d_2)=0. \label{rotvar}
\end{align}

Our goal is to verify that $\varphi(i,i,d-2i)$ for $3i<d$ give all the linearly indepedent terms.
Use induction on $d$. 
When $d=0$, we have $\varphi(0,0,0)=0$. 
When $d=1$, we have $\varphi(1,0,0)=0$ and $\varphi(0,1,0)=-\varphi(0,0,1)$.
So, there is only one linearly independent term $\varphi(0,0,1)$. 
When $d=2$, we have $\varphi(2,0,0)=\varphi(0,1,1)=0$, and 
\begin{align*}
  \varphi(0,0,2)-\varphi(1,0,1)=-\varphi(0,1,1)+\varphi(0,1,1)=\varphi(1,0,1)-\varphi(0,0,2). 
\end{align*}
Together with $\varphi(0,2,0)=-\varphi(0,0,2)$, $\varphi(1,1,0)=-\varphi(1,0,1)$, we find that there is only one linearly independent term $\varphi(0,0,2)$. 

Now consider $d\ge 3$.
The linear relations between $\varphi$ for $d_i\ge 1$ are identical to the case $d-3$.
By the assumption of induction, in these terms the linearly independent ones can be given by $\varphi(i,i,d-2i)$ with $1\le i<d-2i$. 
We assume that $\varphi(d_1,d_2,d_3)$ are known for $d_i\ge 1$ and solve those with some $d_i=0$. 
If two of $d_i$ are zero, the linear relations yield 
\begin{align*}
  \varphi(d,0,0)=0,
  \varphi(0,0,d)=-\varphi(0,d,0).
\end{align*}
Below, we consider $\psi$ with exactly one $d_i$ zero, to show that they can be solved from $\varphi(0,0,d)$ and $\varphi(d_1,d_2,d_3)$ where $d_i\ge 1$. 

In \eqref{recvar}, let $d_3=0$, $d_1+d_2=d-1$, where $1\le d_1\le d_2\le d-2$.
Then, the first and third lines give 
\begin{align*}
  \varphi(0,d_1,d_2+1)+\varphi(0,d_2,d_1+1)=\varphi(d_1,d_2,1)+\varphi(d_2,d_1,1), 
\end{align*}
where the right-hand side is known. 
Together with $\varphi(0,d_1,d_2)=-\varphi(0,d_2,d_1)$, 
we can solve $\varphi(0,d_1,d_2)$ for $1\le d_1,d_2\le d-1$. 

Next, we deal with $\varphi(d_1,0,d_2)$ where $d_1,d_2\ge 1$.
Using the second line in \eqref{recvar}, we obtain 
\begin{align}
  \varphi(d_2,0,d_1+1)-\varphi(d_1,0,d_2+1)=\varphi(d_1,d_2,1)-\varphi(0,d_2,d_1+1), \label{e1}
\end{align}
where the right-hand side is alrealy obtained above. 
Note that switching $d_1$ and $d_2$ leads to the same equation, and $d_1=d_2$ gives nothing. 
So, we require $0\le d_1<d_2\le d-1$. 
Here, $d_1=0$ gives $\varphi(0,0,d)=\varphi(d-1,0,1)$. 
Then, by \eqref{oppvar} and \eqref{rotvar}, we deduce that 
\begin{align}
  \varphi(d_2,0,d_1)=-\varphi(d_1,d_2,0)-\varphi(0,d_1,d_2)
  =\varphi(d_1,0,d_2)-\varphi(0,d_1,d_2), \label{e2}
\end{align}
where $1\le d_1< d_2\le d-1$. 
\eqref{e1} and \eqref{e2} give $d-1$ equations in total for $\varphi(d_1,0,d_2)$ where $d_1,d_2\ge 1$. 
They can indeed be solved by rewriting the left-hand side of \eqref{e1} as 
\begin{align*}
  \varphi(d_2,0,d_1+1)-\varphi(d_1,0,d_2+1)=\varphi(d_1+1,0,d_2)-\varphi(d_1,0,d_2+1)-\varphi(0,d_1+1,d_2),
\end{align*}
leading to 
\begin{align*}
  \varphi(d_1+1,0,d_2)-\varphi(d_1,0,d_2+1)=\varphi(d_1,d_2,1). 
\end{align*}

Finally, we use $\varphi(d_1,d_2,0)=-\varphi(d_1,0,d_2)$ for $1\le d_1,d_2\le d-1$ to finish the induction. 

\bibliographystyle{plain}
\bibliography{bib_sym}

\end{document}